\newcommand{\includeappendix}{} 
\newtheorem{Thm}{Theorem}[section]
\newtheorem{Lem}[Thm]{Lemma}
\newtheorem{Prop}[Thm]{Proposition}
\newtheorem{Def}[Thm]{Definition}
\newtheorem{Exm}[Thm]{Example}
\def\R{\mathbb{R}}
\def\N{\mathbb{N}}
\def\Z{\mathbb{Z}}
\def\M{\mathcal{M}}
\def\D{\mathcal{D}}
\def\Y{\mathcal{Y}}
\def\1{\mathbbm{1}}
\newcommand{\prob}[1]{\Pr\left[\begin{array}{@{}c@{}}#1\end{array}\right]}
\renewcommand{\epsilon}{\varepsilon}
\newcommand{\round}[1]{\left\lfloor#1 \right\rceil}
\renewcommand{\sf}[1]{\ifmmode\mathsf{#1}\else\textsf{#1}\fi}
\newcommand{\adbf}[1]{\ifmmode\mathbf{#1}\else\textbf{#1}\fi}
\newcommand{\bs}[1]{\boldsymbol{#1}}
\def\1{\mathbbm{1}}
\newcommand{\autofit}[1]{\maxsizebox{\linewidth}{!}{\begin{varwidth}{10\linewidth}#1\end{varwidth}}}
\setlist{nosep}
\newcommand{\appref}[1]{%
    \ifdefined\includeappendix
        % We are in Appendix Mode
        \ifcsdef{r@#1}{%
            % If the label exists in the .aux file
            Appendix~\ref{#1}%
        }{%
            % If the label does NOT exist
            \textbf{\color{red}{\Huge APPENDIX NOT FOUND}}%
        }%
    \else
        % We are NOT in Appendix Mode
        the appendix%
    \fi
}
\title{GPM: The Gaussian Pancake Mechanism for Planting Undetectable Backdoors in Differential Privacy}
\date{}
\author{
    Haochen Sun and Xi He \\
    Cheriton School of Computer Science, University of Waterloo \\
    \texttt{\{haochen.sun, xi.he\}@uwaterloo.ca}
}
\begin{document}

\maketitle

\begin{abstract}
    Differential privacy (DP) has become the gold standard for preserving individual privacy in data analysis. However, an implicit yet fundamental assumption underlying these rigorous privacy guarantees is the correct implementation and execution of DP mechanisms. Several incidents of unintended privacy loss have occurred due to numerical issues and inappropriate configurations of DP software, which have been successfully exploited in privacy attacks. To better understand the seriousness of defective DP software, we ask the following question: is it possible to elevate these passive defects into active privacy attacks while maintaining covertness?

    To address this question, we present the \emph{Gaussian pancake mechanism (GPM)}, a novel mechanism that is computationally indistinguishable from the widely used Gaussian mechanism (GM), yet exhibits arbitrarily weaker statistical DP guarantees. This unprecedented separation enables a new class of backdoor attacks: by indistinguishably passing off as the authentic GM, GPM can covertly degrade statistical privacy. Unlike the unintentional privacy loss caused by GM's numerical issues, GPM is an adversarial yet undetectable backdoor attack against data privacy. We formally prove GPM's covertness, characterize its statistical leakage, and demonstrate a concrete distinguishing attack that can achieve near-perfect success rates under suitable parameter choices, both theoretically and empirically.

    Our results underscore the importance of using transparent, open-source DP libraries and highlight the need for rigorous scrutiny and formal verification of DP implementations to prevent subtle, undetectable privacy compromises in real-world systems.
\end{abstract}

\section{Introduction}

In the modern digital landscape, an unprecedented volume of data is continuously collected, aggregated, and analyzed across diverse domains, including healthcare, finance, social networks, and cloud-based services. As data-driven technologies proliferate, ensuring the privacy of individuals represented in these datasets has become a critical and widely acknowledged challenge. Differential privacy (DP)~\cite{DBLP:conf/tcc/DworkMNS06, DBLP:journals/fttcs/DworkR14} has emerged as the de facto gold standard for formal privacy guarantees. It offers strong theoretical protection by ensuring that the presence or absence of a single individual's data has a provably bounded influence on the output of a computation, thereby statistically limiting the risk of privacy breaches against any adversary.

However, realizing the strong theoretical guarantees offered by differential privacy in practice depends on a crucial yet often overlooked assumption: that the mechanisms are correctly implemented and faithfully executed. In other words, the software systems delivering differential privacy must precisely adhere to the formal design of the mechanisms. Any deviation, whether due to programming errors, numerical instability, or misinterpretation of the specification, can undermine the intended protections and leave users vulnerable to privacy breaches.

Indeed, several incidents have demonstrated the risks posed by compromised or improperly implemented differential privacy mechanisms. In particular, Laplace, Exponential, and Gaussian mechanisms have all been found vulnerable to numerical issues~\cite{DBLP:conf/ccs/Mironov12,DBLP:conf/ccs/Ilvento20,DBLP:conf/sp/JinMRO22}. This is because the original definitions and guarantees of differential privacy are formulated in the real number space but must ultimately be realized using finite-precision floating-point arithmetic. Such discrepancies have opened the door to distinguishing attacks that exploit these vulnerabilities. In particular, by checking only the mechanism output, the attacker identifies the correct input database from a pair of neighbouring databases with a success rate of over 92\% on widely used algorithms such as DP-SGD~\cite{DBLP:conf/sp/JinMRO22}. These findings have prompted the release of security patches in major differential privacy libraries~\cite{DBLP:conf/ccs/HolohanB024}. Beyond numerical issues, practical misconfigurations have also raised concerns. For example, in 2017, it was revealed that earlier versions of macOS employed local differential privacy for data collection but used excessively large privacy budgets, rendering the guarantees effectively meaningless~\cite{DBLP:journals/corr/abs-1709-02753}.

The existence of compromised DP software raises two natural and urgent questions. First, given that distinguishing attacks can exploit passive implementation issues, can such compromised implementations be intentionally elevated into \textit{active} privacy attacks that an adversary engineers and uses? Second, how \textit{covert} can these deviations be? In particular, intellectual-property concerns may keep deployed DP implementations closed-source. However, the aforementioned vulnerabilities resulting from numerical issues and excessive privacy budgets are still detectable by examining only the output distribution. But do such deviations exist, whether introduced deliberately or accidentally, that make the outputs computationally indistinguishable from the genuine outputs while leaving no obvious forensic traces (e.g., unexpected system calls and extra network traffic)?

If such deviations exist, they can be weaponized as backdoors, which substantially and silently undermine the formal privacy guarantees, either theoretically or empirically. A backdoor attacker could embed a compromised implementation into a closed-source DP library or service so that, to all ordinary examinations and black-box tests, or to the server executing the mechanisms itself, the mechanism appears correct and privacy-preserving. Unless a full white-box inspection of the implementation is performed (which is infeasible for many proprietary systems), neither servers that rely on the library nor downstream users can detect the backdoor or realize that the stated DP guarantees have been violated. The backdoor would therefore elevate from accidental and passive leakage (e.g., rounding or floating-point artifacts) to intentional, adversarial insertion of covert vulnerabilities that remain hidden to standard testing and auditing practices.

To investigate this threat, we introduce the \textbf{Gaussian pancake mechanism (GPM)}, an indistinguishable backdoor attack against data privacy, parameterized by a secret backdoor key. To any party not knowing the backdoor key, GPM is computationally indistinguishable from the widely used Gaussian mechanism (GM). That is, no polynomial-time adversary (including the server and the downstream users) can efficiently tell their outputs apart and thus discover the existence of the backdoor. Critically, however, under certain parameter regimes, GPM exhibits substantially weaker privacy guarantees: to the backdoor attacker knowing the secret backdoor key, arbitrarily large privacy leakage could be induced. Given that GM is ubiquitous in deployed systems, the existence of constructions like GPM leads to a troubling conclusion: \textbf{any non-open-source DP software is not safe for production use, even with auditing in place}. In particular, passing black-box audits does not rule out the existence of undetectable backdoors in the software that violate the claimed privacy guarantees.

The existence of GPM has significant implications for the design and deployment of privacy-preserving systems. Most importantly, it highlights the necessity of using transparent, open-source implementations such as OpenDP~\cite{opendp}, Opacus~\cite{opacus}, and Diffprivlib~\cite{diffprivlib}. Without access to and scrutiny of implementation details, backdoors like GPM could be seamlessly integrated and remain undetected, while appearing indistinguishable from authentic mechanisms. Moreover, further verification measures may be adopted to mitigate this risk. These include programming-language-based formal verification of DP libraries~\cite{DBLP:conf/ccs/LoknaPDV23, DBLP:conf/vmcai/ReshefKSD24, DBLP:conf/ccs/WangDKZ20, DBLP:journals/corr/abs-1909-02481}, which can ensure semantic correctness from specification to implementation. Additionally, cryptographic techniques such as zero-knowledge proofs and verifiable computation~\cite{vfuzz,dprio,DBLP:conf/iclr/ShamsabadiTCBHP24,KCY21,BC23,vddp} can be used to verify that a mechanism was executed correctly without requiring trust in the server, which is particularly relevant in modern contexts as DP is increasingly adopted in security-critical applications.

Our contributions can be summarized as follows:

\begin{itemize}[leftmargin=*]
    \item We formally define the Gaussian pancake mechanism (GPM) and describe an indistinguishable backdoor attack based on this construction. \textbf{(Section~\ref{sec:setup})}
    
    \item We analyze the statistical privacy guarantees of GPM and show that, despite satisfying the same computational DP guarantees as the Gaussian mechanism (GM), GPM may require an arbitrarily large lower bound on $\epsilon$ to satisfy $\left(\epsilon, \delta=0.5\right)$-DP\footnote{$\epsilon$ increases as $\delta$ decreases.}, rendering it unsuitable for practical deployment. \textbf{(Section~\ref{sec:privacy})}
    
    \item We study the distinguishing attack, a concrete attack that exploits the additional privacy leakage introduced by the GPM backdoor, prove that its success rate can be arbitrarily close to 1, and discuss the potential mitigations against it. \textbf{(Sections~\ref{sec:da} and \ref{sec:mit})}
    
    \item We provide experimental evaluations of the GPM backdoor's effectiveness, demonstrating that distinguishing attacks exploiting GPM can achieve near-perfect success rates under suitable parameter choices. \textbf{(Section~\ref{sec:exp})}
\end{itemize}
\section{Preliminaries}\label{sec:prelim}

\subsection{Assumptions and Notations}\label{sec:prelim-assnot}

In this study, we assume that all involved parties, including servers, downstream users, supply-chain backdoor planters, backdoor privacy attackers, regular privacy attackers, and generic adversaries, are probabilistic polynomial-time (PPT). We also assume that all mechanisms discussed in this study are computable in polynomial time. The notations used throughout the paper are summarized in Table~\ref{tab:notation}.

\begin{table}[!t]
    \centering
    \caption{Notations}
    \begin{tabularx}{\linewidth}{lX}
    \toprule
        Notation & Definition \\
        \midrule
        $D\simeq D'$ & $D$ and $D'$ are neighbouring databases\\
        $\norm{\cdot}$ & the $\ell^2$ norm in Euclidean space\\
        $y\sample S$ & $y$ is independently sampled from the uniform distribution over a set $S$\\
        $y\sample \mathcal{P}$ & $y$ is independently sampled from probability distribution $\mathcal{P}$\\
        $y\sample \mathcal{F}(x)$ & $y$ is independently sampled from the output distribution of a randomized function $\mathcal{F}(x)$\\
        $Y\sim S/\mathcal{P}/\mathcal{F}(x)$ & a random variable $Y$ follows the probability distribution as defined above\\
        $\mathcal{P}_1 \equiv \mathcal{P}_2$ & two probability distributions $\mathcal{P}_1$ and $\mathcal{P}_2$ are equivalent\\
        $\negl[\kappa]$ & the class of functions that are asymptotically smaller than any inverse polynomial, i.e., $\left\{f: 0 \leq f(\kappa) < \kappa^{-O(1)}\right\}$\\
        $n$ & sample size\\
        $d$ & dimensionality of mechanism output\\
        \bottomrule
    \end{tabularx}%
    
    \label{tab:notation}%
\end{table}

\subsection{Privacy Notions}\label{sec:prelim-priv}

We first recall \emph{differential privacy (DP)} (Definition~\ref{def:dp}), which is based on the statistical difference of mechanism outputs on neighbouring input databases.

\begin{Def}[differential privacy~\cite{DBLP:journals/fttcs/DworkR14, DBLP:conf/tcc/DworkMNS06}]\label{def:dp}
    Given $\epsilon > 0$ and $0 \leq \delta < 1$, a mechanism $\mathcal{M}:\mathcal{D}\to\mathcal{Y}$ is $\left(\epsilon, \delta\right)$-differentially private (DP) if, for any pair of neighbouring databases $D\simeq D'$ that differ in a single row, and any measurable subset $S\subseteq \mathcal{Y}$, \begin{equation}
        \prob{M\left(D\right) \in S}\leq e^{\epsilon}\prob{\M\left(D'\right) \in S} + \delta.
    \end{equation} 
\end{Def}

\subsection{Gaussian Mechanism}\label{sec:prelim-gm}

The Gaussian mechanism (GM)~\cite{DBLP:journals/fttcs/DworkR14} is one of the most prevalent DP mechanisms. We state its definition and provide its DP guarantee in Definition~\ref{def:gm} and Theorem~\ref{thm:gm-dp}, respectively.

\begin{Def}[Gaussian mechanism] \label{def:gm}
    Given a query $q: \mathcal{D} \to \R^d$, the \emph{Gaussian mechanism (GM)} is defined as the $d$-dimensional multivariate Gaussian distribution with mean $q(D)$ and covariance $\sigma^2I_d$, i.e.,
    \begin{equation}
        \mathcal{M}_{\sigma}^*(D) = \mathcal{N}\left(q(D), \sigma^2I_d\right),
    \end{equation}
    where $\sigma > 0$ and $I_d$ is the $d$-dimensional identity matrix. We denote the probability density function (p.d.f.) of $\mathcal{M}_\sigma^*(D)$ as $f^*_\sigma(D)$.
\end{Def}

\begin{Thm} \label{thm:gm-dp}
    For any $0 < \delta \leq 0.5$ such that 
    $\epsilon = \frac{\Delta^2}{2\sigma^2}- \frac{\Delta}{\sigma}\Phi^{-1}(\delta)$, 
    where $\Delta:= \sup_{D\simeq D'} \norm{q(D) - q(D')}$ is the sensitivity of query $q$, and $\Phi$ is the cumulative distribution function (c.d.f.) of the standard Gaussian distribution, $\mathcal{M}_\sigma^*(\cdot)$ is $(\epsilon, \delta)$-DP~\cite{DBLP:journals/fttcs/DworkR14}.
\end{Thm}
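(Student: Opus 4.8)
The plan is to reduce the $d$-dimensional claim to a one-dimensional Gaussian comparison using the spherical symmetry of $\mathcal{N}(\cdot,\sigma^2 I_d)$, and then to evaluate the worst-case $(\epsilon,\delta)$-DP inequality in closed form. Fix neighbouring databases $D \simeq D'$, write $a := q(D)$, $b := q(D')$ and $\mu := \norm{a-b}$; by the definition of the sensitivity, $0 \le \mu \le \Delta$. If $\mu = 0$ the two output distributions coincide and the inequality of Definition~\ref{def:dp} holds trivially, so assume $\mu > 0$ and set $u := (a-b)/\mu$.

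The first step is a Neyman--Pearson-style reduction. For any measurable $S \subseteq \R^d$, the quantity $\Pr[\mathcal{M}_\sigma^*(D) \in S] - e^\epsilon \Pr[\mathcal{M}_\sigma^*(D') \in S]$ is maximized by $S^* := \{y : f_\sigma^*(D)(y) > e^\epsilon f_\sigma^*(D')(y)\}$, since passing from $S$ to $S^*$ only adds points on which the integrand $f_\sigma^*(D) - e^\epsilon f_\sigma^*(D')$ is positive and deletes points on which it is negative. Because $\mathcal{M}_\sigma^*(D) = \mathcal{N}(a,\sigma^2 I_d)$ and $\mathcal{M}_\sigma^*(D') = \mathcal{N}(b,\sigma^2 I_d)$, the log-density ratio $\ln\bigl(f_\sigma^*(D)(y)/f_\sigma^*(D')(y)\bigr)$ equals $\bigl(2\innerproduct{y}{a-b} + \norm{b}^2 - \norm{a}^2\bigr)/(2\sigma^2)$, which is affine and strictly increasing in the scalar projection $\innerproduct{y}{u}$; hence $S^*$ is a halfspace $\{y : \innerproduct{y}{u} > t\}$ for a suitable threshold $t$. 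It therefore suffices to analyze the one-dimensional random variables $Z_0 := \innerproduct{X}{u}$ with $X \sim \mathcal{M}_\sigma^*(D)$ and $Z_1 := \innerproduct{X}{u}$ with $X \sim \mathcal{M}_\sigma^*(D')$, which are $\mathcal{N}(\innerproduct{a}{u},\sigma^2)$ and $\mathcal{N}(\innerproduct{b}{u},\sigma^2)$ respectively, with $\innerproduct{a}{u} - \innerproduct{b}{u} = \mu$. Since $\Pr[Z_0 > t] - e^\epsilon\Pr[Z_1 > t]$ is invariant under a common translation of both means, I may take $Z_0 \sim \mathcal{N}(0,\sigma^2)$ and $Z_1 \sim \mathcal{N}(-\mu,\sigma^2)$, and then a direct computation from $f_{Z_0}(z)/f_{Z_1}(z) = \exp\bigl((2\mu z + \mu^2)/(2\sigma^2)\bigr)$ yields $t = \sigma^2\epsilon/\mu - \mu/2$.

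The second step is the closed-form evaluation. Expressing the two tail probabilities through the standard Gaussian c.d.f.\ $\Phi$, the worst-case gap equals $\Phi\bigl(\mu/(2\sigma) - \sigma\epsilon/\mu\bigr) - e^\epsilon\,\Phi\bigl(-\mu/(2\sigma) - \sigma\epsilon/\mu\bigr)$, which I would upper-bound by $\Phi\bigl(\mu/(2\sigma) - \sigma\epsilon/\mu\bigr)$ by discarding the nonnegative second term. The map $\mu \mapsto \mu/(2\sigma) - \sigma\epsilon/\mu$ has derivative $1/(2\sigma) + \sigma\epsilon/\mu^2 > 0$, hence is strictly increasing on $(0,\Delta]$ and maximized at $\mu = \Delta$; there the hypothesis $\epsilon = \Delta^2/(2\sigma^2) - (\Delta/\sigma)\Phi^{-1}(\delta)$ rearranges precisely to $\Delta/(2\sigma) - \sigma\epsilon/\Delta = \Phi^{-1}(\delta)$. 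Consequently the worst-case gap is at most $\Phi(\Phi^{-1}(\delta)) = \delta$ for every pair of neighbours and every measurable $S$, which is exactly the $(\epsilon,\delta)$-DP condition of Definition~\ref{def:dp}.

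I expect the main obstacle to be the bookkeeping in the one-dimensional reduction: verifying that $S^*$ is a halfspace, pinning down the sign and value of the threshold $t$, and --- crucially --- taking the supremum over all admissible shift lengths $\mu \in (0,\Delta]$ rather than over $\mu = \Delta$ alone, which is only legitimate because $\mu/(2\sigma) - \sigma\epsilon/\mu$ happens to be monotone. Once the worst-case gap has been written as $\Phi(\mu/(2\sigma) - \sigma\epsilon/\mu) - e^\epsilon\Phi(\cdot)$, the remaining steps are elementary. An essentially equivalent alternative is to reason directly with the (Gaussian) privacy-loss random variable $L := \ln\bigl(f_\sigma^*(D)(X)/f_\sigma^*(D')(X)\bigr)$ and bound $\Pr[L > \epsilon] - e^\epsilon\Pr_{X \sim \mathcal{M}_\sigma^*(D')}[L > \epsilon]$; this leads to the same closed-form expression.
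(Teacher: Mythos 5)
The paper does not prove Theorem~\ref{thm:gm-dp}; it cites it to the Dwork--Roth textbook and moves on, so there is no internal proof to compare against. Your argument is correct and self-contained. The Neyman--Pearson reduction to the half-space $S^*$ is sound: since the log-density ratio of $\mathcal{N}(a,\sigma^2 I_d)$ and $\mathcal{N}(b,\sigma^2 I_d)$ is affine and increasing in $\innerproduct{y}{u}$, the set maximizing $\Pr[\mathcal{M}_\sigma^*(D)\in S]-e^\epsilon\Pr[\mathcal{M}_\sigma^*(D')\in S]$ is a half-space, and projecting onto $u$ reduces the problem to one dimension with mean gap $\mu$. Your threshold $t=\sigma^2\epsilon/\mu-\mu/2$ and the closed form $\Phi\!\left(\tfrac{\mu}{2\sigma}-\tfrac{\sigma\epsilon}{\mu}\right)-e^\epsilon\Phi\!\left(-\tfrac{\mu}{2\sigma}-\tfrac{\sigma\epsilon}{\mu}\right)$ are both correct. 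Dropping the nonnegative $e^\epsilon\Phi(\cdot)$ term and then using monotonicity of $\mu\mapsto\tfrac{\mu}{2\sigma}-\tfrac{\sigma\epsilon}{\mu}$ to pass to $\mu=\Delta$ is the right order of operations, and the hypothesis $\epsilon=\tfrac{\Delta^2}{2\sigma^2}-\tfrac{\Delta}{\sigma}\Phi^{-1}(\delta)$ collapses the bound to exactly $\Phi(\Phi^{-1}(\delta))=\delta$. The alternative you sketch via the privacy-loss random variable $L\sim\mathcal{N}\!\left(\tfrac{\Delta^2}{2\sigma^2},\tfrac{\Delta^2}{\sigma^2}\right)$ and $\Pr[L>\epsilon]\le\delta$ is the more common textbook presentation and yields the identical formula; keeping the second $e^\epsilon\Phi(\cdot)$ term rather than discarding it would recover the sharper Balle--Wang characterization, but the stated theorem only needs the one-sided bound, so your simplification is exactly the right amount of work.
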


Note that, compared with Theorem~\ref{thm:gm-dp}, a more commonly used yet looser closed-form upper bound for $\epsilon$ is $\frac{\Delta}{\sigma}\sqrt{2\ln \frac{1.25}{\delta}}$ for $0 < {\epsilon, \delta} < 1$. However, due to the extreme values of $\epsilon$ involved in our analysis, we do not restrict ourselves to $\epsilon < 1$ in this study. 

\subsubsection{Numerical Issues and Distinguishing Attacks}\label{sec:prelim-gm-numda}

Despite the theoretical guarantees, actual floating-point implementations of GM have been found to result in enlarged query sensitivities~\cite{DBLP:conf/ccs/Mironov12,DBLP:conf/ccs/CasacubertaSVW22}, and distinguishing attacks have been shown effective in exploiting this numerical issue. In particular, given a pair of neighbouring databases $D\simeq D'$ (with a randomly chosen record replaced) and a query result perturbed by Gaussian noise, the attacker's goal is to determine whether the result was computed on $D$ or $D'$~\cite{DBLP:conf/sp/JinMRO22}. The existence of such attacks has necessitated security patches in DP libraries by further obfuscating the least significant bits~\cite{DBLP:conf/ccs/HolohanB024}.

\subsubsection{Discrete Gaussian Mechanism (DGM)}\label{sec:dgm}

For queries with discrete outputs, one possible mitigation against the numerical issues of GM is the discrete Gaussian mechanism.

\begin{Def}[discrete Gaussian mechanism~\cite{dg}] \label{def:dgm}
    Given a query $q: \mathcal{D} \to \Z$, the \emph{discrete Gaussian mechanism (DGM)} is defined as the discrete Gaussian distribution
    \begin{equation}
        \M^\Z_{\sigma}\left(D\right) := \mathcal{N}_\Z\left(q\left(D\right), \sigma^2\right)
    \end{equation}
    with probability mass function (p.m.f.)
    \begin{equation}
        \phi_\Z\left(z; q\left(D\right), \sigma^2\right)\propto \exp\left(-\frac{\left(z - q\left(D\right)\right)^2}{2\sigma^2}\right)
    \end{equation}
    for any $z\in \Z$, with location $q(D)$ and scale parameter $\sigma$. 
\end{Def}

DGM has various applications, especially in distributed data analytics~\cite{ddgm}. Nevertheless, the use of floating-point GM remains ubiquitous, particularly for queries with non-discrete outputs such as in DP-SGD~\cite{DBLP:conf/ccs/AbadiCGMMT016}.

\subsection{Gaussian Pancake Distribution}

The \emph{Gaussian pancake distribution}, also known as the \emph{continuous learning-with-error (CLWE) distribution}~\cite{clwe, DBLP:conf/focs/GupteVV22}, is the continuous analogue of the \emph{learning-with-error (LWE)} distribution~\cite{DBLP:conf/stoc/Regev05, DBLP:conf/eurocrypt/LyubashevskyPR10}. While the LWE distribution has been widely used for encryption algorithms~\cite{DBLP:journals/jacm/Regev09, DBLP:conf/stoc/Peikert09, DBLP:conf/stoc/PeikertW08, DBLP:conf/stoc/GentryPV08}, CLWE has primarily been applied in learning theory~\cite{DBLP:conf/icml/DiakonikolasKR23, DBLP:conf/colt/DiakonikolasKPZ23, DBLP:conf/nips/DiakonikolasKRS23}, especially in state-of-the-art diffusion models~\cite{DBLP:conf/iclr/ChenC0LSZ23, DBLP:conf/nips/ShahCK23}. Notably, in 2022, it was discovered that indistinguishable backdoors exist for certain deep neural network structures, and one of the constructions relies on the CLWE distribution~\cite{DBLP:conf/focs/GoldwasserKVZ22}. In this study, it suffices to focus on a special case of the CLWE distribution, namely the homogeneous CLWE (hCLWE) distribution, introduced in Definition~\ref{def:hclwe}.

\begin{Def}[hCLWE distribution~\cite{clwe}] \label{def:hclwe}
    For any shape parameter $s > 0$ and $d$-dimensional real vector $\mathbf{x}\in \R^d$, let $\rho_s\left(\mathbf{x}\right):=\exp\left(-\pi\norm{\frac{\mathbf{x}}{s}}^2\right)$, so that $\frac{\rho_s(\mathbf{x})}{s^d}$ is the p.d.f. of the $d$-dimensional multivariate Gaussian distribution with covariance $\frac{s^2}{2\pi}I_d$. The \emph{homogeneous CLWE distribution} $\mathcal{H}_{\mathbf{w}, \beta, \gamma}$ is parameterized by $\mathbf{w}\in \mathbb{S}^{d-1}:= \left\{\mathbf{y}\in \R^d: \norm{\mathbf{y}}=1\right\}$ and $\beta, \gamma > 0$, such that the p.d.f. $\psi_{\mathbf{w}, \beta, \gamma}: \R^d \to \R_+$ is defined for any $\mathbf{y}\in \R^d$ as
    \begin{equation} \label{eq:hclwe-pdf}
        \psi_{\mathbf{w}, \beta, \gamma}(\mathbf{y}) \propto \rho_1(\mathbf{y})\cdot \sum_{z\in \Z}\rho_\beta\left(z - \gamma \mathbf{y}^\top \mathbf{w}\right).
    \end{equation}
\end{Def}

It is also worth noting that $\mathcal{H}_{\mathbf{w}, \beta, \gamma}$ is equivalent to a mixture of Gaussians with equally spaced means and identical variance, as stated in Lemma~\ref{lem:hclwe-mog} and illustrated in Figure~\ref{fig:g-hclwe-comp}. In particular, its Gaussian components are equally spaced along the direction of $\mathbf{w}$, with the distance between neighbouring components given by $\bigTheta{\frac{1}{\gamma}}$. Each Gaussian component has a width of $\bigTheta{\frac{\beta}{\gamma}}$ along $\mathbf{w}$, which is significantly smaller than the $\bigTheta{1}$ width in the orthogonal directions. Consequently, the probability density of hCLWE resembles pancakes stacked along the direction of $\mathbf{w}$, giving rise to the alias \textit{Gaussian pancake distribution}.

\begin{figure}[!t]
    \centering
    \includegraphics[width=0.9\linewidth]{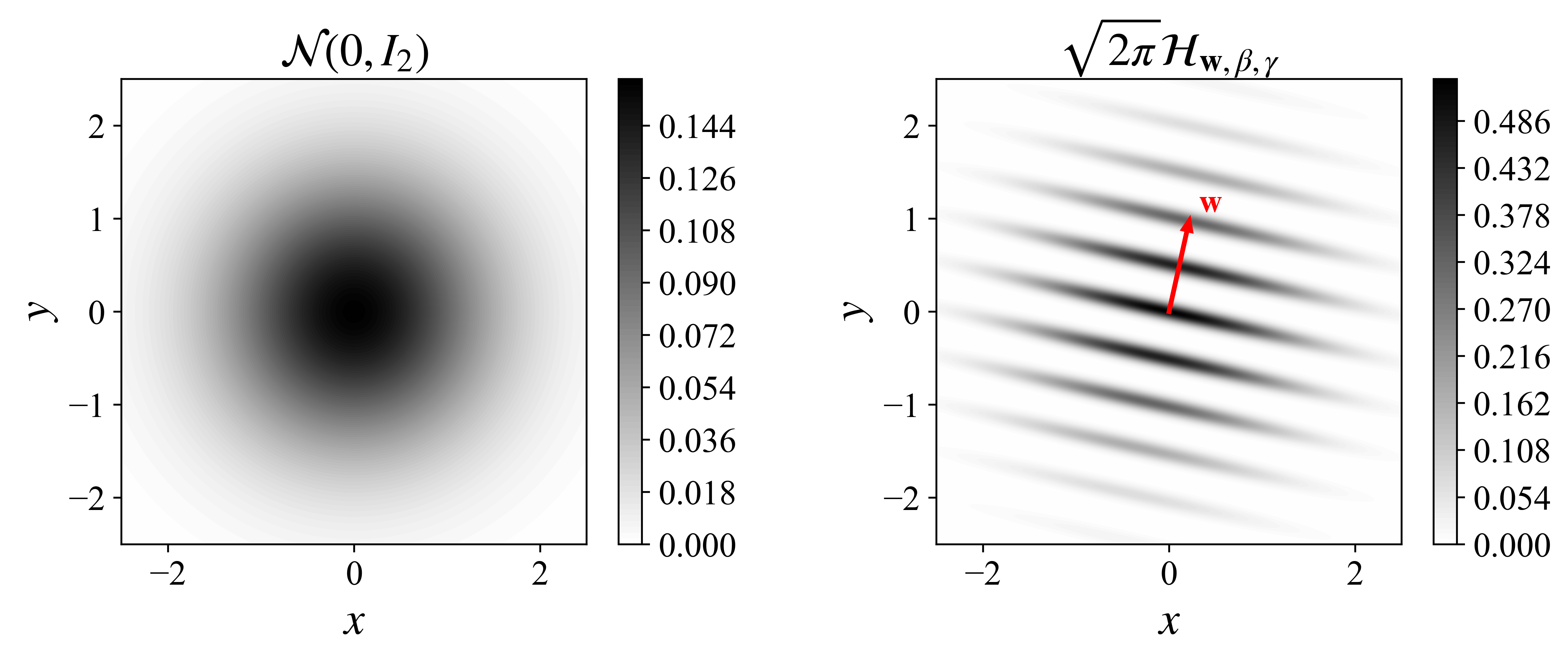}
    \caption{Probability densities of the Gaussian distribution and the hCLWE distribution in $d=2$. Note that distinguishing between the two distributions is assumed to be a hard problem for higher $d$, where $d$ can be viewed as an instantiation of the security parameter $\kappa$.}
    \label{fig:g-hclwe-comp}
\end{figure}

\begin{Lem}\label{lem:hclwe-mog}
    For any $\mathbf{w} \in \mathbb{S}^{d-1}$ and $\beta, \gamma > 0$, the p.d.f. of the hCLWE distribution $\mathcal{H}_{\mathbf{w}, \beta, \gamma}$ (as defined in Definition~\ref{def:hclwe}) can be expressed as
    \begin{equation}
        \psi_{\mathbf{w}, \beta, \gamma}(\mathbf{y}) \propto \sum_{z\in \Z}a_{\beta, \gamma, z}\cdot \phi\left(\mathbf{y}; \frac{\gamma z}{\beta^2 + \gamma^2}\mathbf{w}, \bs{\Sigma}_{\mathbf{w}, \beta, \gamma}\right)\label{eq:hclwe-mog}
    \end{equation}
    where $a_{\beta, \gamma, z} := \exp\left(-\frac{\pi z^2}{\beta^2 + \gamma^2}\right)$ and $\bs{\Sigma}_{\mathbf{w}, \beta, \gamma}:=\frac{1}{2\pi}\left(I - \frac{\gamma^2}{\beta^2 + \gamma^2}\mathbf{w}\mathbf{w}^\top\right)$, such that $\phi\left(\cdot; \frac{\gamma z}{\beta^2 + \gamma^2}\mathbf{w}, \bs{\Sigma}_{\mathbf{w}, \beta, \gamma}\right)$ is the p.d.f. of the multivariate Gaussian distribution $\mathcal{N}\left(\frac{\gamma z}{\beta^2 + \gamma^2}\mathbf{w}, \bs{\Sigma}_{\mathbf{w}, \beta, \gamma}\right)$~\cite{clwe}.
\end{Lem}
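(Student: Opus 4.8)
The plan is to prove the identity by a direct algebraic computation: expand the unnormalized density in Equation~\eqref{eq:hclwe-pdf} and, for each fixed $z \in \Z$, complete the square in $\mathbf{y}$ inside the exponent. Concretely, I would first unfold the definitions $\rho_1(\mathbf{y}) = \exp(-\pi\norm{\mathbf{y}}^2)$ and $\rho_\beta(z - \gamma\mathbf{y}^\top\mathbf{w}) = \exp(-\pi(z - \gamma\mathbf{y}^\top\mathbf{w})^2/\beta^2)$, so that the $z$-th summand of $\psi_{\mathbf{w},\beta,\gamma}(\mathbf{y})$ equals $\exp\!\big(-\pi(\norm{\mathbf{y}}^2 + (z - \gamma\mathbf{y}^\top\mathbf{w})^2/\beta^2)\big)$. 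Expanding the square and using $(\mathbf{y}^\top\mathbf{w})^2 = \mathbf{y}^\top\mathbf{w}\mathbf{w}^\top\mathbf{y}$, this exponent becomes $-\pi\big[\mathbf{y}^\top(I + \tfrac{\gamma^2}{\beta^2}\mathbf{w}\mathbf{w}^\top)\mathbf{y} - \tfrac{2\gamma z}{\beta^2}\mathbf{y}^\top\mathbf{w} + \tfrac{z^2}{\beta^2}\big]$, i.e.\ a Gaussian quadratic form with precision matrix $\Lambda := 2\pi(I + \tfrac{\gamma^2}{\beta^2}\mathbf{w}\mathbf{w}^\top)$, independent of $z$.

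Next I would invert $\Lambda$ via the Sherman–Morrison formula, using $\norm{\mathbf{w}} = 1$: this gives $\Lambda^{-1} = \tfrac{1}{2\pi}\big(I - \tfrac{\gamma^2}{\beta^2+\gamma^2}\mathbf{w}\mathbf{w}^\top\big) = \bs{\Sigma}_{\mathbf{w},\beta,\gamma}$, recovering the claimed covariance. Comparing the linear term with that of a general Gaussian density $\phi(\mathbf{y};\bs{\mu},\bs{\Sigma}) \propto \exp(-\tfrac12\mathbf{y}^\top\bs{\Sigma}^{-1}\mathbf{y} + \mathbf{y}^\top\bs{\Sigma}^{-1}\bs{\mu} - \tfrac12\bs{\mu}^\top\bs{\Sigma}^{-1}\bs{\mu})$ shows the mean $\bs{\mu}_z$ must satisfy $\Lambda\bs{\mu}_z = \tfrac{2\pi\gamma z}{\beta^2}\mathbf{w}$; using $\bs{\Sigma}_{\mathbf{w},\beta,\gamma}\mathbf{w} = \tfrac{\beta^2}{2\pi(\beta^2+\gamma^2)}\mathbf{w}$ this yields $\bs{\mu}_z = \tfrac{\gamma z}{\beta^2+\gamma^2}\mathbf{w}$, exactly as in Equation~\eqref{eq:hclwe-mog}. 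Completing the square then leaves a residual scalar $\exp\!\big(\tfrac12\bs{\mu}_z^\top\Lambda\bs{\mu}_z - \tfrac{\pi z^2}{\beta^2}\big)$, and substituting $\bs{\mu}_z$ and simplifying (the two $\tfrac{z^2}{\beta^2}$-scale terms combine through $\tfrac{\gamma^2}{\beta^2+\gamma^2} - 1 = -\tfrac{\beta^2}{\beta^2+\gamma^2}$) collapses it to precisely $a_{\beta,\gamma,z} = \exp(-\tfrac{\pi z^2}{\beta^2+\gamma^2})$.

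Finally I would reassemble: each summand equals $a_{\beta,\gamma,z}\cdot \exp\!\big(-\tfrac12(\mathbf{y} - \bs{\mu}_z)^\top\bs{\Sigma}_{\mathbf{w},\beta,\gamma}^{-1}(\mathbf{y} - \bs{\mu}_z)\big)$, and since the Gaussian normalizing factor $\big((2\pi)^d\det\bs{\Sigma}_{\mathbf{w},\beta,\gamma}\big)^{-1/2}$ depends on neither $z$ nor $\mathbf{y}$, it can be pulled out of the sum and absorbed into the overall proportionality constant, turning each summand into $a_{\beta,\gamma,z}\cdot\phi(\mathbf{y}; \bs{\mu}_z, \bs{\Sigma}_{\mathbf{w},\beta,\gamma})$ up to a global constant; summing over $z \in \Z$ gives Equation~\eqref{eq:hclwe-mog}. (The interchange of ``$\propto$'' with the infinite sum is harmless because $a_{\beta,\gamma,z}$ decays like a Gaussian in $z$, so both series converge absolutely for every $\mathbf{y}$.) I do not expect a genuine obstacle here beyond careful bookkeeping; the one step to watch is the Sherman–Morrison inversion together with the verification that the leftover scalar is exactly $a_{\beta,\gamma,z}$ and not some other Gaussian-in-$z$ factor. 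A cleaner alternative that sidesteps the matrix inversion is to split $\mathbf{y}$ into the scalar component $t = \mathbf{y}^\top\mathbf{w}$ along $\mathbf{w}$ and its orthogonal part $\mathbf{y}_\perp$, complete the square in $t$ alone, and note that the orthogonal directions contribute an untouched $\exp(-\pi\norm{\mathbf{y}_\perp}^2)$ factor, which together reproduce the stated anisotropic covariance $\bs{\Sigma}_{\mathbf{w},\beta,\gamma}$ directly.
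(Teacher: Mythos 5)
Your proof is correct. The paper itself does not prove Lemma~\ref{lem:hclwe-mog} — it is imported from the CLWE paper~\cite{clwe} — so there is no in-paper argument to compare against, but your complete-the-square computation is the natural (and essentially forced) route: expanding the exponent gives a $z$-independent precision $2\pi(I + \tfrac{\gamma^2}{\beta^2}\mathbf{w}\mathbf{w}^\top)$, Sherman--Morrison with $\norm{\mathbf{w}}=1$ recovers $\bs{\Sigma}_{\mathbf{w},\beta,\gamma}$, solving the linear term gives $\bs{\mu}_z = \tfrac{\gamma z}{\beta^2+\gamma^2}\mathbf{w}$, and the leftover $\exp\big(\tfrac{\pi\gamma^2 z^2}{\beta^2(\beta^2+\gamma^2)} - \tfrac{\pi z^2}{\beta^2}\big) = a_{\beta,\gamma,z}$ checks out. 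Your observation that the shared normalizing factor $((2\pi)^d\det\bs{\Sigma}_{\mathbf{w},\beta,\gamma})^{-1/2}$ is independent of $z$ and can be pulled out of the sum is the one point that has to be made explicit to justify inserting $\phi(\cdot;\bs{\mu}_z,\bs{\Sigma})$ under a single ``$\propto$'', and you handle it correctly; the alternative decomposition into $t = \mathbf{y}^\top\mathbf{w}$ and $\mathbf{y}_\perp$ that you sketch at the end is equivalent and sidesteps the matrix inversion entirely.
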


The hCLWE distribution and the multivariate Gaussian distribution are considered \textbf{computationally indistinguishable by polynomial-time sampling}, under the hardness assumption that at least one of the \emph{Shortest Independent Vector Problem (SIVP)} or the \emph{Gap Shortest Vector Problem (GapSVP)} is not solvable in bounded-error quantum polynomial time (BQP). Specifically, given dimensionality $d$, an adversary is tasked with distinguishing between the $d$-dimensional hCLWE distribution and the $d$-dimensional multivariate Gaussian distribution, with a sample $s=\left(\mathbf{r}_1, \mathbf{r}_2, \dots, \mathbf{r}_{n\left(d\right)}\right)$ of sample size $n(d)$ polynomially large in $d$, and without knowing the value of $\mathbf{w}$. Theorem~\ref{thm:hclwe-ind-c} states that, under the aforementioned hardness assumptions, no such PPT adversary exists. 

\begin{Thm}\label{thm:hclwe-ind-c}
    Under the assumption that $\mathbf{SIVP}\notin \sf{BQP}$ or $\mathbf{GapSVP}\notin \sf{BQP}$, for $2\sqrt{d}\leq \gamma(d) \leq d^{O(1)}$, $\beta(d)= d^{-O(1)}$, and $n(d) \in d^{O(1)}$, and any family of PPT adversaries $\left\{\adv_d\right\}_{d\in \N}$, there exists $\mu\left(d\right)\in \negl[d]$ such that 
    \begin{equation}
        \abs{\prob{\adv_d\left(s_d\right) = 1:\\s_d\sample {\mathcal{N}\left(\mathbf{0}, I_d\right)}^{\otimes n\left(d\right)}}-\prob{\adv_d\left(s_d\right) = 1:\\\mathbf{w}_d\sample \mathbb{S}^{d-1}\\s_d\sample \sqrt{2\pi}\mathcal{H}_{\mathbf{w}_d, \beta\left(d\right), \gamma\left(d\right)}^{\otimes n(d)}}}\leq \mu\left(d\right),
    \end{equation} 
    where $\otimes n\left(d\right)$ represents the Cartesian product of $n\left(d\right)$ identical distributions. Equivalently, each observation $\mathbf{r}_i\,\left(1\leq i \leq n\left(d\right)\right)$ is independently sampled from the identical distribution of either $\mathcal{N}\left(\mathbf{0}, I_d\right)$ or $\sqrt{2\pi}\mathcal{H}_{\mathbf{w}_d, \beta\left(d\right), \gamma\left(d\right)}$~\cite{clwe}.
\end{Thm}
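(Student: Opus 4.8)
The plan is to argue by contraposition, turning a successful distinguisher into a quantum lattice algorithm. Suppose, for contradiction, that some family $\{\adv_d\}_{d\in\N}$ of PPT adversaries achieves distinguishing advantage $\mu(d)\notin\negl[d]$ between the $n(d)$-fold isotropic Gaussian and the $n(d)$-fold scaled hCLWE distribution with a uniformly random hidden direction $\mathbf w_d$. I would then build from $\{\adv_d\}$ a single BQP algorithm that approximates both $\mathbf{SIVP}$ and $\mathbf{GapSVP}$ on arbitrary $d$-dimensional lattices within polynomial factors, contradicting the hypothesis that at least one of these problems lies outside $\sf{BQP}$; this forces $\mu(d)\in\negl[d]$, which is exactly the conclusion.

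The first step is cosmetic but should be recorded: eliminate the $\sqrt{2\pi}$ scaling. Since $\rho_1(\mathbf x)=\exp(-\pi\norm{\mathbf x}^2)$ is, per Definition~\ref{def:hclwe}, the density of $\mathcal N(\mathbf 0,\tfrac{1}{2\pi}I_d)$, the coordinatewise map $\mathbf r\mapsto\tfrac{1}{\sqrt{2\pi}}\mathbf r$ sends $\mathcal N(\mathbf 0, I_d)$ to $\mathcal N(\mathbf 0,\tfrac{1}{2\pi}I_d)$ and sends $\sqrt{2\pi}\,\mathcal H_{\mathbf w,\beta,\gamma}$ to $\mathcal H_{\mathbf w,\beta,\gamma}$. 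Applying this bijection to each of the $n(d)$ observations is efficient and preserves statistical distance, so $\{\adv_d\}$ transfers, with identical advantage, to a distinguisher for the (unscaled) decision hCLWE problem with parameters $(\beta(d),\gamma(d))$ and polynomial sample complexity $n(d)$.

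The second step is to invoke the hardness of CLWE established in \cite{clwe}. For frequency $\gamma\ge 2\sqrt d$ and inverse-polynomial noise rate $\beta=d^{-O(1)}$ — precisely the window $2\sqrt d\le\gamma(d)\le d^{O(1)}$, $\beta(d)=d^{-O(1)}$ appearing in the statement, with the upper bound on $\gamma$ guaranteeing the planted distribution is samplable to polynomial precision in polynomial time — there is a polynomial-time \emph{quantum} reduction, in the spirit of Regev's worst-case-to-average-case reduction for LWE, from approximating worst-case $\mathbf{SIVP}$ and $\mathbf{GapSVP}$ within polynomial factors to decision CLWE, together with a reduction from decision CLWE to decision hCLWE. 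Composing these, the distinguisher from the previous step yields the claimed BQP solver, the desired contradiction. The sample count $n(d)$ enters only as the (polynomial) sample complexity of the average-case instance the reduction constructs, so nothing depends on $n(d)$ beyond $n(d)\in d^{O(1)}$.

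The substantive obstacle, were one to give a self-contained argument rather than cite \cite{clwe}, is exactly this quantum reduction: the iterated discrete-Gaussian sampling and the quantum step that produces shorter lattice vectors from CLWE-oracle calls, the homogenization step (CLWE $\to$ hCLWE), and the careful tracking of approximation factors across the admissible $(\beta,\gamma,d)$ regime so that they remain polynomial. Granting \cite{clwe} for all of this, the remaining work is the bookkeeping flagged above: verifying that our parameter ranges sit inside the valid window, that both the $\sqrt{2\pi}$ rescaling and the passage from a single sample to $n(d)$ i.i.d. samples are advantage-preserving, and that negligibility of the advantage transfers back to the product-distribution formulation stated in the theorem.
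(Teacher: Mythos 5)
Your proposal is correct and matches the approach of the underlying result. One point worth making explicit: the paper does not supply its own proof of Theorem~\ref{thm:hclwe-ind-c} — it is imported verbatim (modulo the $\sqrt{2\pi}$ normalization) from \cite{clwe} as a preliminary, which is exactly the quantum worst-case-to-average-case reduction from $\mathbf{SIVP}$/$\mathbf{GapSVP}$ to decision CLWE (in the spirit of Regev), followed by the CLWE $\to$ hCLWE homogenization, that you describe. Your two bookkeeping observations — that the $\sqrt{2\pi}$ rescaling is an efficient, invertible map on each observation and thus advantage-preserving, and that the $n(d)$-fold product formulation is already subsumed by the CLWE decision problem's polynomial sample complexity — are precisely what is needed to justify stating the theorem in the form the paper uses. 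The only caveat is the one you already flag: a self-contained proof would require reproducing the full lattice reduction from \cite{clwe}, and for the paper's purposes this is, correctly, just cited rather than rederived.
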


\section{Problem Setup}\label{sec:setup}

With the hCLWE distribution (Definition~\ref{def:hclwe}), which is computationally indistinguishable from the multivariate Gaussian distribution, we construct the \emph{Gaussian pancake mechanism (GPM)} as an analogy to GM. In particular, GPM perturbs the query result $q\left(D\right)$ by sampling the additive noise from the hCLWE distribution instead of the multivariate Gaussian distribution, as formalized in Definition~\ref{def:gpm}.

\begin{Def}[Gaussian pancake mechanism]\label{def:gpm}
    Given a query $q: \mathcal{D} \to \R^d$ and hyperparameters $\sigma > 0$, $\mathbf{w} \in \mathbb{S}^{d-1}$, and $\beta, \gamma > 0$, the \emph{Gaussian pancake mechanism} is given by
    \begin{equation}
        \label{eq:gpm}
        \mathcal{M}_{\sigma, \mathbf{w}, \beta, \gamma}(D) = q(D) + \sqrt{2\pi}\sigma \cdot \mathcal{H}_{\mathbf{w}, \beta, \gamma}.
    \end{equation}
    We denote the p.d.f. of $\mathcal{M}_{\sigma, \mathbf{w}, \beta, \gamma}(D)$ as $f_{\sigma, \mathbf{w}, \beta, \gamma}(\cdot; D)$.
\end{Def}

Note that the multiplicative factor of $\sqrt{2\pi}$ in the hCLWE noise ensures that the scale remains consistent with GM. In the rest of this section, we formalize a backdoor attack based on GPM, while formalizing its undetectability by any party other than the backdoor key.

\subsection{Threat Model}\label{sec:setup-threat-model}

\begin{figure}
    \centering
    \includegraphics[width=0.8\linewidth]{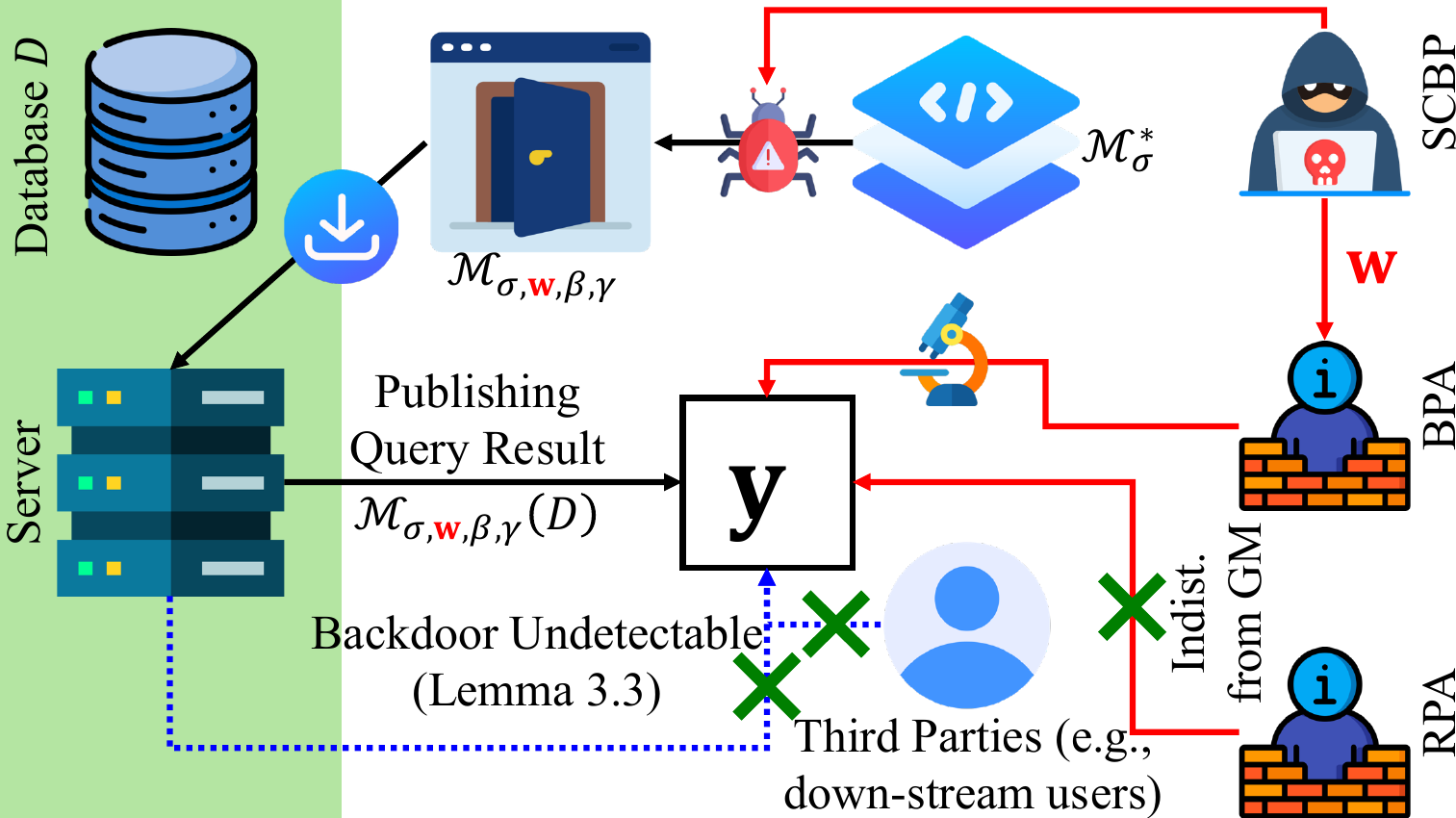}
    \caption{Illustration of the Backdoor Attack Model: the supply-chain backdoor planter (SCBP) plants the backdoor by replacing GM with GPM. Consequently, the backdoor privacy attacker (BPA) gains information on $D$ using GPM's output $\mathbf{y}$ and knowledge of the secret backdoor key $\mathbf{w}$. However, against a regular privacy attacker (RPA) without knowledge of $\mathbf{w}$, the GPM output is computationally indistinguishable from GM, thus preserving the same privacy guarantees. For the same reason, neither the server nor other parties (e.g., downstream users) can detect the existence of the backdoor.}
    \label{fig:threat-model}
\end{figure}

We consider the threat model presented in Figure~\ref{fig:threat-model}, where a \textbf{supply-chain backdoor planter (SCBP)} installs the backdoor in the compromised software, alongside two types of PPT privacy attackers: the \textbf{backdoor privacy attacker (BPA)} and the \textbf{regular privacy attacker (RPA)}. Additionally, we assume that the \textbf{server and third parties (e.g., downstream users) monitor anomalies in the server's outputs}. Specifically, the SCBP, the BPA and the RPA follow the attack vector below:

\begin{itemize}[leftmargin=*]
    \item Given online software that utilizes differential privacy (e.g., Apple macOS, Amazon AWS, and Microsoft Azure), the SCBP targets the implementation of the authentic GM $\M_\sigma^*:\mathcal{D} \to \R^d$. It computes $\beta=\beta\left(d\right)$ and $\gamma=\gamma\left(d\right)$ that satisfy the conditions in Theorem~\ref{thm:hclwe-ind-c}, samples $\mathbf{w}\sample \mathbb{S}^{d-1}$ (acting as the secret \textbf{backdoor key}), plants the backdoor by replacing the authentic $\M_\sigma^*$ with GPM $\M_{\sigma, \mathbf{w}, \beta, \gamma}$. Then, it shares the backdoor key $\mathbf{w}$, as well as the other configuration parameters $\beta$ and $\gamma$, with the BPA.
    \item The server, in possession of a sensitive database $D$, downloads the compromised software. However, the compromised software is not open-source or the server fails to fully inspect the source code. It executes GPM as $\mathbf{y}\sample \mathcal{M}_{\sigma, \mathbf{w}, \beta, \gamma}\left(D\right)$, and releases $\mathbf{y}$. Given the privacy requirements, we do \textbf{not} assume that the compromised software has access to the sensitive database $D$ and the unperturbed query results $q\left(D\right)$.
    \begin{Exm} \label{exm:randomness-lib-compromised}
        If the server stores $D$ and computes $q\left(D\right)$ in a protected environment while sampling the additive Gaussian noise from a randomness library, then compromising the randomness library by replacing the multivariate Gaussian distribution with hCLWE is sufficient to implant the backdoor, as the output will be equivalent to GPM. However, in this scenario, the randomness library need not (and should not) have access to $D$ and $q\left(D\right)$.
    \end{Exm}
    \item The BPA, with knowledge of the backdoor key $\mathbf{w}$, computes additional information about $D$ from $\mathbf{y}$. Meanwhile, the RPA also attempts to compute additional information about $D$ but without the knowledge of the backdoor key $\mathbf{w}$. 
\end{itemize}

Note that the \emph{SCBP could be the software developer itself} or other malicious attackers. Specifically, in addition to the prevalent numerical issues in DP libraries~\cite{DBLP:conf/sp/JinMRO22, DBLP:conf/ccs/Ilvento20, DBLP:conf/ccs/Mironov12, DBLP:conf/ccs/CasacubertaSVW22, DBLP:conf/uss/HaeberlenPN11}, the following real-world incidents showcase the prevalence and severity of the risk that DP and other privacy-sensitive software could be compromised, either via official channels or through software piracy and DNS hijacking:\begin{itemize}[leftmargin=*]
    \item \textbf{Early versions of MacOS}, despite the deployment of DP mechanisms for user privacy protection, were found to have $\epsilon\geq 16$, which offers only limited protection~\cite{DBLP:journals/corr/abs-1709-02753}.
    \item The Lazarus Group committed multiple cryptocurrency thefts by utilizing \textbf{compromised software} loaded with \textbf{incorrect keys and malicious smart contracts}~\cite{DBLP:journals/corr/abs-2505-21725}.
    \item A \textbf{backdoor} was suspected to have been inserted into \textbf{a published pseudorandom number generator standard}, allowing adversaries to readily decrypt materials encrypted using it~\cite{DBLP:conf/crypto/BrownG07}.
\end{itemize}

Additionally, unlike the scenarios in Example~\ref{exm:counterexample-threat-model}, the compromised software has the same behaviour as the authentic one, except for the different yet indistinguishable noise distribution. Therefore, the server does not observe abnormalities in the execution trace that could lead to the detection of the backdoor.

\begin{Exm}\label{exm:counterexample-threat-model}
    Consider the following alternative methods of compromising the software and breaking the privacy guarantees:
    \begin{itemize}[leftmargin=*]
        \item If the compromised software employs steganography~\cite{DBLP:conf/crypto/HopperLA02} to embed additional information about $D$ or $q\left(D\right)$ into the output $\mathbf{y}$, then additional reading authorization for these protected entities shall be requested. This causes an abnormal access pattern in the scenario of Example~\ref{exm:randomness-lib-compromised}.
        \item If the BPA exploits the random seed to recover the additive noise, then either \textbf{1)} the random seed is embedded in the software, which produces consistent noise and is therefore detectable, or \textbf{2)} the BPA needs to monitor the internal state of the server (e.g., timestamps and/or all calls to the pseudorandom generator) in real time to compute the random seed, which is unrealistic in general and defendable by a trivial reset on the server's side.
    \end{itemize}
\end{Exm}

\subsection{Undetectability of Backdoor}\label{sec:setup-covertness}

In this section, we characterize the covertness of the GPM under the threat model introduced in Section~\ref{sec:setup-threat-model}. In particular, Theorem~\ref{thm:gpm-covert} captures the case where the server downloads compromised software containing GPM $\mathcal{M}_{\sigma, \mathbf{w}, \beta, \gamma}$ after the SCBP samples $\mathbf{w}$. The server then executes the mechanism \textbf{polynomially many} (rather than a single) times with input databases $D_1, D_2, \dots, D_n$ (where $n=n\left(d\right)$ is a polynomial in $d$), respectively, and releases the results $\mathbf{y}_1, \mathbf{y}_2, \dots, \mathbf{y}_n$. We formalize that, by only examining these results, no party other than the SCBP and the BPA (i.e., one not knowing $\mathbf{w}$, including the server itself, downstream users, or any third party) can distinguish them from the authentic output of GM on the same sequence of input databases.

\begin{Thm}[Covertness of GPM backdoor]\label{thm:gpm-covert}
    For any family of PPT adversaries $\left\{\adv_d\right\}_{d\in \N}$, any $n\left(d\right)\in d^{O(1)}$, any $2\sqrt{d}\leq \gamma(d) \leq d^{O(1)}$, any $\beta(d)= d^{-O(1)}$, there exists $\mu\left(d\right)\in \negl[d]$ such that for any $D_1, D_2, \dots, D_{n(d)} \in \D_d$, \begin{equation}
        \abs{\prob{\adv_d\left(o_d\right) = 1:\\ o_d \sample \sf{GMSeq}(d)} - \prob{\adv_d\left(o_d\right) = 1:\\ o_d \sample \sf{GPMSeq}(d)}} \leq \mu\left(d\right),\label{eq:gpm-covert}
    \end{equation} where $\sf{GMSeq}(d)$ and $\sf{GPMSeq}(d)$ are defined as Figure~\ref{fig:gmseq-gpmseq}.

    \begin{figure}
        \centering
        \begin{pchstack}[center]
            \procedure[space=auto,linenumbering]{$\sf{GMSeq}(d)$}{\pcfor 1\leq i \leq n\left(d\right)\\
                \mathbf{y}_i \sample \mathcal{M}^*_{\sigma(d)}\left(D_i\right)\\
            \pcendfor\\
            \pcreturn \mathbf{y}_1, \mathbf{y}_2, \dots, \mathbf{y}_{n\left(d\right)}}
    
            \procedure[space=auto,linenumbering]{$\sf{GPMSeq}(d)$}{\label{pcln:gpmseq-samplekey}\mathbf{w}_d\sample\mathbb{S}^{d-1}\\
            \label{pcln:gpmseq-execbegin}\pcfor 1\leq i \leq n\left(d\right)\\
                \mathbf{y}_i \sample \mathcal{M}_{\sigma(d),\mathbf{w}_d,\beta(d), \gamma(d)}\left(D_i\right)\\
            \label{pcln:gpmseq-execend}\pcendfor\\
            \pcreturn \mathbf{y}_1, \mathbf{y}_2, \dots, \mathbf{y}_{n\left(d\right)}
            }
        \end{pchstack}
        \caption{Definitions of $\sf{GMSeq}(d)$ and $\sf{GPMSeq}(d)$.}
        \label{fig:gmseq-gpmseq}
    \end{figure}
\end{Thm}

In Theorem~\ref{thm:gpm-covert}, $\adv_d$ represents any party (e.g., the server and downstream users) trying to distinguish between the outputs of GM and GPM, without knowledge of the backdoor key $\mathbf{w}$. Meanwhile, $\sf{GMSeq}(d)$ corresponds to the execution of the authentic GM, where the server applies GM multiple times on a sequence of input databases $D_1, D_2, \dots, D_{n(d)}$, yielding outputs $\mathbf{y}_1, \mathbf{y}_2, \dots, \mathbf{y}_{n(d)}$, respectively. Meanwhile, $\sf{GPMSeq}(d)$ corresponds to the planting and execution of GPM, where in Line~\ref{pcln:gpmseq-samplekey}, the SCBP samples its secret backdoor key $\mathbf{w}$ upon compromising the DP library. Subsequently, in Lines~\ref{pcln:gpmseq-execbegin} to \ref{pcln:gpmseq-execend}, the server executes GPM to output $\mathbf{y}_1, \mathbf{y}_2, \dots, \mathbf{y}_{n(d)}$ on the same sequence of input databases. As long as GPM is executed polynomially many times (i.e., $n(d)$ is polynomially bounded), no party can efficiently distinguish its outputs from those of the authentic GM.

\begin{proof}[Proof sketch of Theorem~\ref{thm:gpm-covert}]
    Assume the opposite. Then an efficient distinguishing adversary could distinguish between samples from the multivariate Gaussian distribution and the hCLWE distribution, thereby breaking the hardness assumption.
\end{proof}

Similar to Theorem~\ref{thm:gpm-covert}, for the RPA without knowledge of the backdoor key $\mathbf{w}$, GPM maintains the same level of privacy guarantee, which is formalized in \appref{app:setup-regular-attacker-cdp}.
\section{Analysis on the Statistical Privacy Leakage to the BPA}\label{sec:privacy}

By Theorem~\ref{thm:gpm-covert}, GPM is computationally indistinguishable from GM and achieves the same level of privacy computationally, i.e., against a PPT adversary without knowledge of the backdoor key $\mathbf{w}$. However, as shown in Figure~\ref{fig:gpm-density}, in the direction of $\mathbf{w}$, the probability density of GPM on neighbouring databases could be well separated, which creates an opportunity for the BPA who knows $\mathbf{w}$. Therefore, in this section, we proceed to show the actual statistical privacy loss, which is arbitrarily larger than that of GM. In particular, as shown in Figure~\ref{fig:gpm-density}, for neighbouring databases $D\simeq D'$, due to the differences between the centres $q(D)$ and $q\left(D'\right)$, the peaks of the GPM probability densities may be misaligned, which therefore voids the DP guarantee provided by GM, despite the computational indistinguishability. Nevertheless, as GPM is still composed of multiple Gaussian components, it can be expected that, although significantly degraded compared with GM, some statistical DP guarantees can still be offered.

\begin{figure}[!t]
    \centering
    \includegraphics[width=0.6\linewidth]{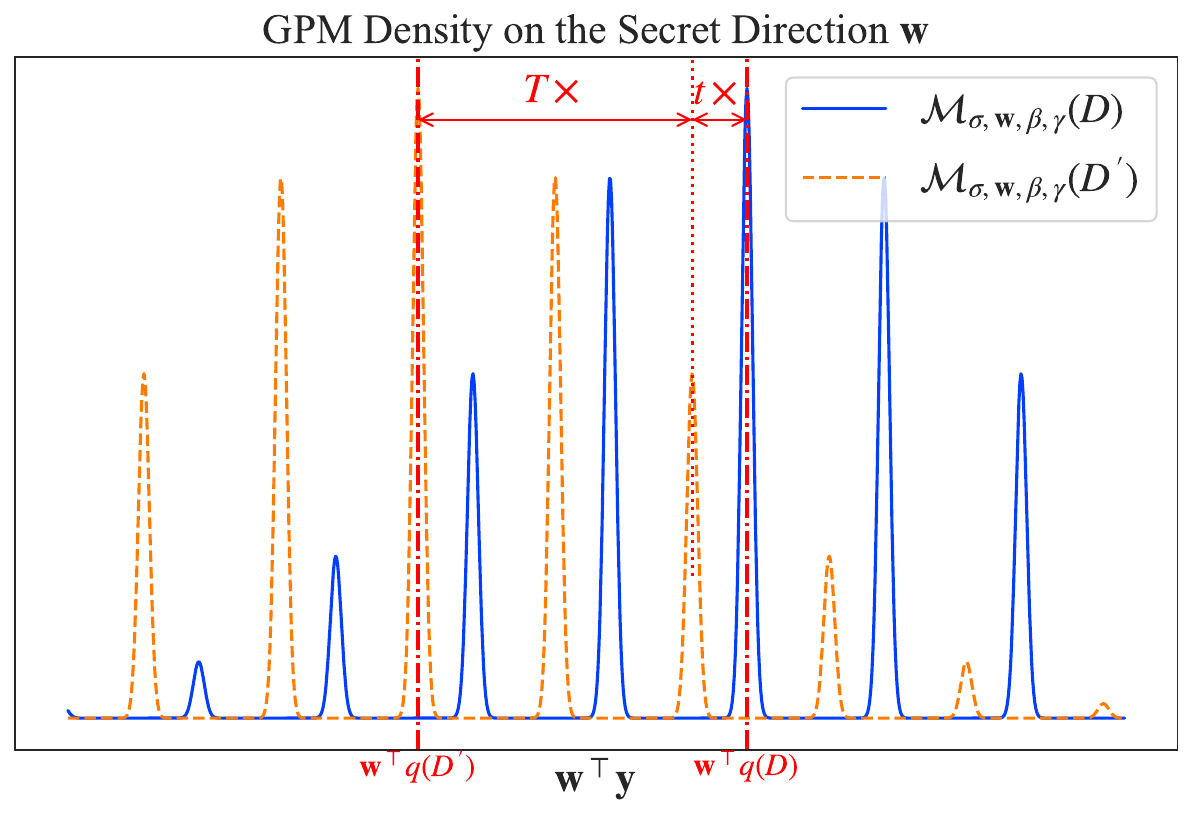}
    \caption{When projected onto the secret direction $\mathbf{w}$, the probability densities of $\M_{\sigma, \mathbf{w}, \beta, \gamma}\left(D\right)$ and $\M_{\sigma, \mathbf{w}, \beta, \gamma}\left(D'\right)$ for neighbouring databases $D\simeq D'$ are concentrated in disjoint sets of intervals and are therefore well separated, breaking the original DP guarantee. In this example, $q\left(D\right)$ and $q\left(D'\right)$ differ by $2.4$ peak widths, such that $T = 2$ and $t = 0.4$.}
    \label{fig:gpm-density}
\end{figure}

\begin{figure*}[!t]
    \centering
    \includegraphics[width=\linewidth]{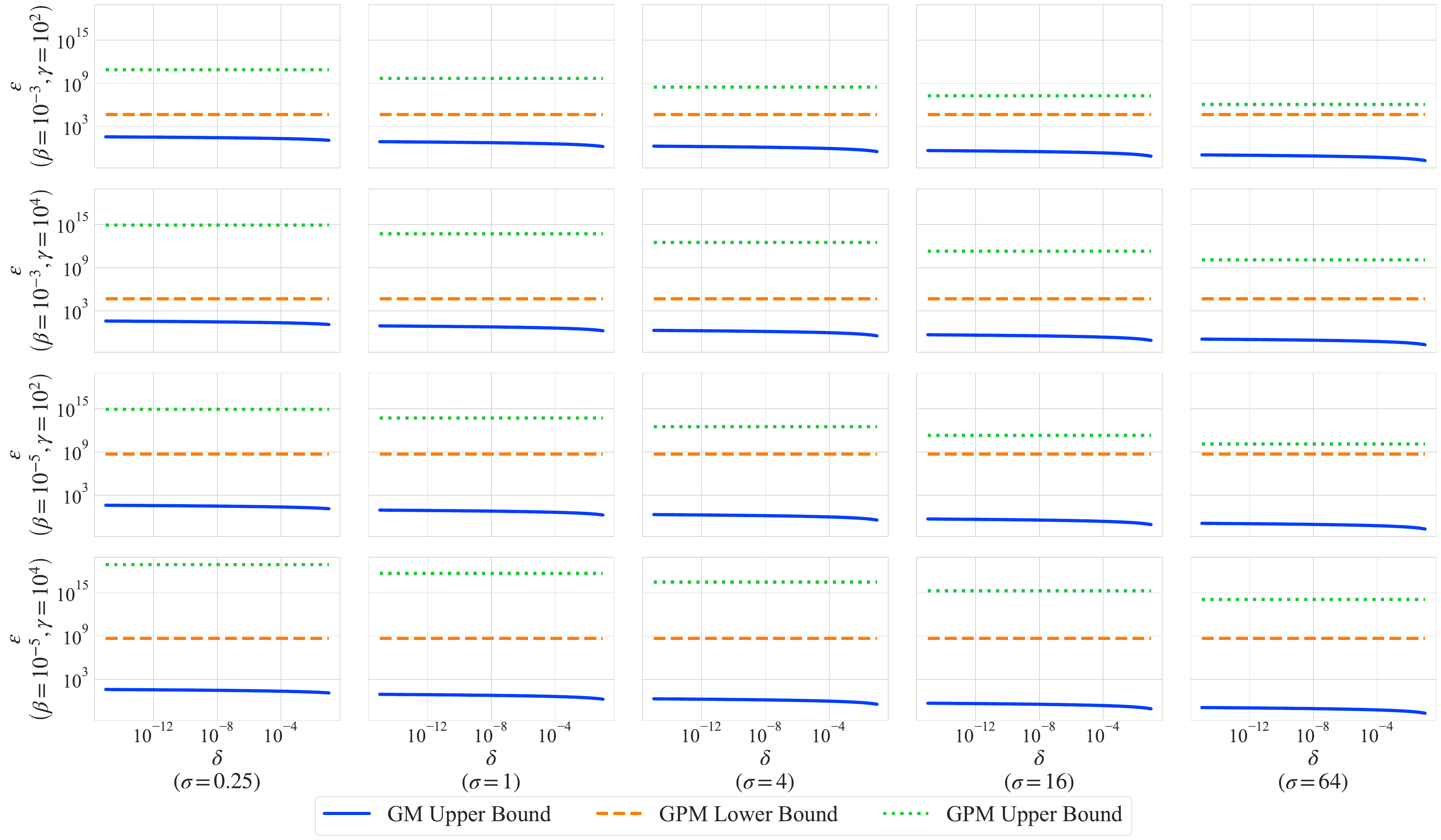}
    \caption{Comparison between GM and GPM's privacy costs at $t = 0.25$, as in Theorem~\ref{thm:gpm-dp-lb}. Although computationally indistinguishable from GM, GPM's actual privacy costs exceed $\epsilon=10^4$ even when $\delta = 0.1$, which is infeasible for practical applications. Note that the lower and upper bounds of GPM's privacy costs still change with respect to $\delta$, although at smaller relative scales.}
    \label{fig:privacy-bounds}
\end{figure*}

\subsection{Lower Bound}\label{sec:privacy-lb}

To exactly analyze the additional privacy leakage arising from the misalignments of the peaks of probability densities, we first note that, by Lemma~\ref{lem:hclwe-mog}, the marginal distributions of the Gaussian components have the narrowest peaks when projected onto the secret direction $\mathbf{w}$. Therefore, as depicted in Figure~\ref{fig:gpm-density}, when executing the queries on different databases, the peaks of probability density functions could be significantly misaligned. This misalignment of the peaks causes additional privacy leakage compared to GM, where the privacy cost originates from the shift of the means in the Gaussian distributions~\cite{DBLP:journals/corr/abs-1905-02383,DBLP:journals/fttcs/DworkR14}, and can thus be exploited by the BPA with knowledge of $\mathbf{w}$. To simplify the analysis, we first project $\mathcal{H}_{\mathbf{w}, \beta, \gamma}$ onto the direction of $\mathbf{w}$, which is equivalent to $\mathcal{H}_{1, \beta, \gamma}$, with p.d.f.
\begin{equation}
    \psi_{1, \beta, \gamma}\left(y\right) \propto \sum_{z\in \Z}a_{\beta, \gamma, z}\phi\left(y; \frac{\gamma z}{\beta^2 + \gamma^2}, \frac{1}{2\pi}\cdot\frac{\beta^2}{\beta^2 + \gamma^2}\right)\label{eq:hclwe-1-mog},
\end{equation}
and state the key technical Lemma~\ref{lem:one-dimensional-hclwe-diff} for analyzing the behaviour of GPM in the direction of $\mathbf{w}$. As exemplified in Figure~\ref{fig:gpm-density}, $T$ and $t$ correspond to the integer and decimal parts of the ratio between the projected difference of the query result $\mathbf{w}^\top\left(q\left(D\right)-q\left(D'\right)\right)$ and CLWE peak separation $\sqrt{2\pi}\sigma\cdot\frac{\gamma}{\beta^2+\gamma^2}$. Note that the decomposition into $T$ and $t$ is unique. The proof of Lemma~\ref{lem:one-dimensional-hclwe-diff} is provided in \appref{app:privacy-one-dimensional-hclwe-diff}.

\begin{Lem}\label{lem:one-dimensional-hclwe-diff}
    For any $T \in \Z$ and $t \in [-0.5, 0.5)$, let $Y \sim \mathcal{H}_{1, \beta, \gamma}$ and $Y'\sim \mathcal{H}_{1, \beta, \gamma} + \left(T + t\right)\frac{\gamma}{\beta^2 + \gamma^2}$ be two random variables. There exists a set $A(t):= \bigcup_{z\in \Z}I_z(t)$, where $I_z(t):=\left(\frac{\gamma \left(z-\frac{\abs{t}}{2}\right)}{\beta^2 + \gamma^2}, \frac{\gamma \left(z+\frac{\abs{t}}{2}\right)}{\beta^2 + \gamma^2}\right)$, such that
    \begin{align}
        \prob{Y \in A(t)} \geq& 1 - 2\Phi\left(-\frac{\gamma\abs{t}}{\beta}\sqrt{\frac{\pi}{2\left(\beta^2 + \gamma^2\right)}}\right),\\
        \prob{Y'\in A(t)} \leq& 2\Phi\left(-\frac{\gamma\abs{t}}{\beta}\sqrt{\frac{\pi}{2\left(\beta^2 + \gamma^2\right)}}\right),
    \end{align}
    where $\Phi$ is the c.d.f. of the standard Gaussian distribution.
\end{Lem}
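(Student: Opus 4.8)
The route I would take is to reduce everything to the Gaussian-mixture form of $\mathcal{H}_{1,\beta,\gamma}$ supplied by Lemma~\ref{lem:hclwe-mog} (equivalently \eqref{eq:hclwe-1-mog}) and then argue one mixture component at a time. Abbreviate $g := \frac{\gamma}{\beta^2+\gamma^2}$ for the spacing between consecutive component means, $\tau := \frac{\beta}{\sqrt{2\pi(\beta^2+\gamma^2)}}$ for the common standard deviation of the components, and $p_z := a_{\beta,\gamma,z}/\sum_{z'\in\Z}a_{\beta,\gamma,z'}$ for the normalized weights, so that $\sum_{z}p_z=1$ and $Y$ is distributed as $\sum_{z\in\Z}p_z\,\mathcal{N}(zg,\tau^2)$. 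Since $T\in\Z$, the shift only relabels components, giving $Y'\sim\sum_{z\in\Z}p_{z-T}\,\mathcal{N}((z+t)g,\tau^2)$ with weights that again sum to $1$; in particular $T$ plays no role beyond this relabelling. The first step is then the bookkeeping identity $\frac{(|t|/2)\,g}{\tau}=\frac{|t|\gamma}{\beta}\sqrt{\frac{\pi}{2(\beta^2+\gamma^2)}}$, which is exactly the argument of $\Phi$ in the statement, together with the observation that $I_z(t)$ is the open interval of radius $\frac{|t|}{2}g$ centered at the $z$-th mean $zg$. Write $h:=\frac{|t|}{2}g$.

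For the lower bound, since $I_z(t)\subseteq A(t)$, conditioning on the mixture index gives $\Pr[Y\in A(t)]\ge\sum_{z}p_z\,\Pr[\mathcal{N}(zg,\tau^2)\in I_z(t)]$; by symmetry of a Gaussian about its mean, each term equals $1-2\Phi(-h/\tau)$, and summing with $\sum_z p_z=1$ closes this half.

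For the upper bound, the $z$-th component of $Y'$ is a Gaussian centered at $(z+t)g$, and the heart of the matter is a picket-fence geometry: this center lies in a gap of $A(t)=\bigcup_{z'}I_{z'}(t)$ and is at distance at least $h=\frac{|t|}{2}g$ from \emph{every} interval $I_{z'}(t)$. Indeed, the two nearest intervals are $I_z(t)$ and $I_{z+\operatorname{sgn}(t)}(t)$, lying at distances $\frac{|t|}{2}g$ and $(1-\frac{3}{2}|t|)\,g$ from $(z+t)g$, respectively, and $1-\frac{3}{2}|t|\ge\frac{1}{2}|t|$ precisely because $|t|\le\frac12$; every other interval is farther still. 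Hence $A(t)$ is disjoint from the open interval of radius $h$ around $(z+t)g$, so $\Pr[\mathcal{N}((z+t)g,\tau^2)\in A(t)]\le\Pr[|\mathcal{N}(0,\tau^2)|\ge h]=2\Phi(-h/\tau)$ for every $z$; averaging over $z$ with the weights $p_{z-T}$ (which sum to $1$) gives the claimed bound. The degenerate case $t=0$ is trivial, as $A(0)=\emptyset$ and the two inequalities read $0\ge 0$ and $0\le 1$.

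I expect the only delicate step to be this picket-fence estimate in the upper bound: one has to check carefully that for every $t\in[-0.5,0.5)$ the shifted component mean is simultaneously at least $h$ away from $I_z(t)$ and from $I_{z\pm1}(t)$ (hence from all intervals), which is the single place where the hypothesis $t\in[-0.5,0.5)$ is used. Everything else is the constant computation above plus the routine Gaussian-tail identity $\Pr[|\mathcal{N}(0,\tau^2)|\ge h]=2\Phi(-h/\tau)$.
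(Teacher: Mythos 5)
Your proof is correct. The lower-bound half coincides with the paper's own argument: expand through the Gaussian-mixture form of $\mathcal{H}_{1,\beta,\gamma}$, discard the cross-component mass on each $I_z(t)$, and evaluate one Gaussian tail per component. The upper-bound half is where you genuinely diverge. Writing $g:=\gamma/(\beta^2+\gamma^2)$, the paper argues at the level of the set: since $A(t)-(T+t)g$ is disjoint from $A(t)$, one has $\Pr[Y'\in A(t)]=\Pr[Y\in A(t)-(T+t)g]\le 1-\Pr[Y\in A(t)]$, and the stated bound is read off as the complement of the lower-bound inequality just proved. You instead estimate component by component: the $z$-th Gaussian component of $Y'$ is centred at $(z+t)g$, which you show sits at distance at least $h=\tfrac{|t|}{2}g$ from \emph{every} $I_{z'}(t)$ (the two nearest at distances $h$ and $\bigl(1-\tfrac{3}{2}|t|\bigr)g$, both $\ge h$ because $|t|\le\tfrac12$), apply the Gaussian-tail bound to that component, and average over the relabelled weights $p_{z-T}$. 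Both routes use the hypothesis $t\in[-0.5,0.5)$ through the same picket-fence geometry and land on the identical constant; the paper's version is shorter and recycles the lower bound directly, while yours is more self-contained, does not depend on the lower-bound inequality, and makes the interval geometry fully explicit (which the paper invokes only as an unproved disjointness claim). A small bonus on your side: you carry the normalized weights $p_z=a_{\beta,\gamma,z}/\sum_{z'}a_{\beta,\gamma,z'}$ throughout, whereas the paper's displayed chain implicitly treats the unnormalized $a_{\beta,\gamma,z}$ as summing to one, a harmless bookkeeping slip that your formulation avoids.
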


Lemma~\ref{lem:one-dimensional-hclwe-diff} can be applied to the marginal distributions of GPM's outputs on $D$ and $D'$ when projected onto $\mathbf{w}$, which gives rise to the lower bound of the actual statistical privacy leakage of GPM as stated in Theorem~\ref{thm:gpm-dp-lb}. The full proof of Theorem~\ref{thm:gpm-dp-lb} is provided in \appref{app:privacy-lb}.

\begin{Thm}\label{thm:gpm-dp-lb}
    For any $\sigma > 0$, $\mathbf{w}\in \mathbb{S}^{d-1}$, $0 < \beta < 1$, $\gamma > 1$, and any pair $T\in \Z$ and $t\in \left[-0.5, 0.5\right)$, if there exists a pair of neighbouring databases $D\simeq D'$ such that $\sqrt{2\pi} \sigma \cdot (T + t) \frac{\gamma}{\beta^2 + \gamma^2} = \mathbf{w}^\top \left(q\left(D'\right) - q(D)\right)$, then $\mathcal{M}_{\sigma, \mathbf{w}, \beta, \gamma}$ is \textbf{not} $\left(\epsilon, \delta\right)$-DP for any
    \begin{equation}\label{eq:gpm-dp-lb}
        0 < \epsilon < \underline{\epsilon} := \log\left( \frac{1 - \delta}{2\Phi\left(-\frac{\gamma \abs{t}}{\beta} \sqrt{\frac{\pi}{2(\beta^2 + \gamma^2)}}\right)} - 1 \right)
    \end{equation}
    and $0 < \delta \leq 0.5$.
\end{Thm}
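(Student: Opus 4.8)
The plan is to reduce the $d$-dimensional claim to the one-dimensional estimates already packaged in Lemma~\ref{lem:one-dimensional-hclwe-diff}, exploiting that to refute $(\epsilon,\delta)$-DP it suffices to exhibit one neighbouring pair and one measurable event violating the inequality of Definition~\ref{def:dp}, and that the extremal event lives in the secret direction $\mathbf{w}$.

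First I would fix the neighbouring pair $D \simeq D'$ from the hypothesis and pass to marginals along $\mathbf{w}$. By Definition~\ref{def:gpm}, if $\mathbf{y} \sim \mathcal{M}_{\sigma,\mathbf{w},\beta,\gamma}(D)$ then $\mathbf{w}^\top\mathbf{y} = \mathbf{w}^\top q(D) + \sqrt{2\pi}\sigma\,(\mathbf{w}^\top Z)$ with $Z \sim \mathcal{H}_{\mathbf{w},\beta,\gamma}$; reading off the $\mathbf{w}$-component of the covariance $\bs{\Sigma}_{\mathbf{w},\beta,\gamma}$ in Lemma~\ref{lem:hclwe-mog} shows that the marginal $\mathbf{w}^\top Z$ has law $\mathcal{H}_{1,\beta,\gamma}$, consistent with \eqref{eq:hclwe-1-mog}. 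Hence, after recentring at $\mathbf{w}^\top q(D)$ and dividing by $\sqrt{2\pi}\sigma$, the projected law of $\mathcal{M}_{\sigma,\mathbf{w},\beta,\gamma}(D)$ is that of $Y \sim \mathcal{H}_{1,\beta,\gamma}$, while the alignment hypothesis $\sqrt{2\pi}\sigma(T+t)\tfrac{\gamma}{\beta^2+\gamma^2} = \mathbf{w}^\top(q(D')-q(D))$ makes the correspondingly normalised projected law of $\mathcal{M}_{\sigma,\mathbf{w},\beta,\gamma}(D')$ that of $Y' \sim \mathcal{H}_{1,\beta,\gamma} + (T+t)\tfrac{\gamma}{\beta^2+\gamma^2}$ --- exactly the two variables of Lemma~\ref{lem:one-dimensional-hclwe-diff}.

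Next I would instantiate the distinguishing event. Writing $A(t) = \bigcup_{z\in\Z} I_z(t)$ for the union of intervals supplied by Lemma~\ref{lem:one-dimensional-hclwe-diff} and setting $p := 2\Phi\!\left(-\tfrac{\gamma|t|}{\beta}\sqrt{\tfrac{\pi}{2(\beta^2+\gamma^2)}}\right)$, I take the measurable set $S := \{\mathbf{y}\in\R^d : (\mathbf{w}^\top\mathbf{y} - \mathbf{w}^\top q(D))/(\sqrt{2\pi}\sigma) \in A(t)\}$, the cylinder over $A(t)$ in the $\mathbf{w}$-direction. By the previous paragraph and Lemma~\ref{lem:one-dimensional-hclwe-diff}, $\Pr[\mathcal{M}_{\sigma,\mathbf{w},\beta,\gamma}(D)\in S] \geq 1-p$ and $\Pr[\mathcal{M}_{\sigma,\mathbf{w},\beta,\gamma}(D')\in S] \leq p$. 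Were $\mathcal{M}_{\sigma,\mathbf{w},\beta,\gamma}$ $(\epsilon,\delta)$-DP, Definition~\ref{def:dp} applied to $(D,D')$ and $S$ would force $1-p \leq e^\epsilon p + \delta$, i.e. $e^\epsilon \geq \tfrac{1-\delta}{p}-1$ and so $\epsilon \geq \underline{\epsilon}$. Contrapositively, no $\epsilon < \underline{\epsilon}$ (with $0<\delta<0.5$) can yield $(\epsilon,\delta)$-DP, which is the theorem.

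I do not anticipate a genuine obstacle: the substantive work is confined to Lemma~\ref{lem:one-dimensional-hclwe-diff}, and what remains is bookkeeping. The details to watch are (i) matching the direction of the shift in $Y'$ to the lemma --- here the symmetry $I_{-z}(t) = -I_z(t)$ of $A(t)$ makes the argument insensitive to the sign of $T+t$; (ii) verifying the pushforward and marginalisation identities verbatim, which follow from Lemma~\ref{lem:hclwe-mog}; and (iii) observing that $\underline{\epsilon}$ is a well-defined non-negative real exactly when $p < (1-\delta)/2$, which holds comfortably in the regime $0<\beta<1<\gamma$ of interest and under which the bound is non-vacuous (for $t=0$ one has $p=1$ and there is nothing to prove). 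I would also state explicitly that refuting DP for a single pair and event suffices, so no optimisation over neighbours is needed --- the hypothesis already pins down the alignment that maximises the separation between the projected densities.
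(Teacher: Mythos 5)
Your proposal is correct and follows essentially the same route as the paper's proof in Appendix~\ref{app:privacy-lb}: project onto $\mathbf{w}$, observe the marginals are $\mathcal{H}_{1,\beta,\gamma}$ and its shift by $(T+t)\frac{\gamma}{\beta^2+\gamma^2}$, invoke Lemma~\ref{lem:one-dimensional-hclwe-diff} on the cylinder set $S$ over $A(t)$, and read off the contrapositive of the DP inequality. Your additional remarks on sign symmetry of $A(t)$ and the degeneracy at $t=0$ are sound and match what the paper leaves implicit.
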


\begin{proof}[Proof sketch of Theorem~\ref{thm:gpm-dp-lb}]
    For neighbouring databases $D \simeq D'$, when $q(D)$ and $q\left(D'\right)$ differ by $T + t$ in the direction of $\mathbf{w}$, the projected distributions of $\mathcal{M}_{\sigma, \mathbf{w}, \beta, \gamma}\left(D\right)$ and $\mathcal{M}_{\sigma, \mathbf{w}, \beta, \gamma}\left(D'\right)$ are $\mathcal{H}_{1, \beta, \gamma}$ and $\mathcal{H}_{1, \beta, \gamma} + \left(T + t\right)\frac{\gamma}{\beta^2 + \gamma^2}$, respectively. Therefore, by Lemma~\ref{lem:hclwe-mog}, to achieve $\left(\epsilon, \delta\right)$-DP, it must hold that $\prob{Y \in A(t)} \leq e^{\epsilon}\prob{Y' \in A(t)} + \delta$, where Inequality \eqref{eq:gpm-dp-lb} does not hold. 
\end{proof}

Furthermore, note that the indistinguishability between GM and GPM hinges on the uniform randomness of the parameter $\mathbf{w} \sample \mathbb{S}^{d-1}$. When the separation $\sqrt{2\pi}\sigma\cdot \frac{\gamma}{\beta^2 + \gamma^2}$ between the Gaussian components is sufficiently small, the range of $T + t$ is significantly larger than $1$. Therefore, the distribution of the decimal part $t$ is close to the uniform distribution, such that in the average case, $\abs{t}$ becomes a constant in Inequality \eqref{eq:gpm-dp-lb}. We formalize this in Proposition~\ref{prop:gpm-dp-lb} and provide the detailed analysis in \appref{app:privacy-lb}. 

\begin{Prop}\label{prop:gpm-dp-lb}
    For any $\sigma > 0$, $0 < \beta < 1$, and $\gamma > 1$ such that $\gamma \gg \sqrt{d} \cdot \frac{\sigma}{\Delta}$, with constant probability over sampling $\mathbf{w} \sample \mathbb{S}^{d-1}$, $\mathcal{M}_{\sigma, \mathbf{w}, \beta, \gamma}$ is not $(\epsilon, \delta)$-DP where $\epsilon \in \bigTheta{\frac{1}{\beta^2}}$ for any $0 < \delta \leq 0.5$.
\end{Prop}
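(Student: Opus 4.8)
The plan is to apply Theorem~\ref{thm:gpm-dp-lb} to one fixed, near-maximal pair of neighbours and to argue that, over the random draw of the secret direction, the induced fractional part is an absolute constant with constant probability. Since $\gamma\gg\sqrt d\,\sigma/\Delta$ forces $\Delta>0$, fix $D\simeq D'$ with $v:=\norm{q(D')-q(D)}\geq\Delta/2$, and write $L:=\sqrt{2\pi}\,\sigma\cdot\frac{\gamma}{\beta^2+\gamma^2}$ for the CLWE peak separation and $t(\mathbf{w})\in[-0.5,0.5)$ for the fractional part of $\mathbf{w}^\top(q(D')-q(D))/L$. By Theorem~\ref{thm:gpm-dp-lb}, on the event $\{\,|t(\mathbf{w})|\geq\tfrac14\,\}$ the mechanism $\M_{\sigma,\mathbf{w},\beta,\gamma}$ fails $(\epsilon,\delta)$-DP for every $\epsilon<\underline\epsilon$; since $\underline\epsilon$ is increasing in $|t|$, it suffices to evaluate $\underline\epsilon$ at $|t|=\tfrac14$, and I will also show $\Pr_{\mathbf{w}\sample\mathbb{S}^{d-1}}[\,|t(\mathbf{w})|\geq\tfrac14\,]=\Omega(1)$. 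Taking $\epsilon:=\tfrac12\,\underline\epsilon|_{|t|=1/4}$ (a quantity depending only on $\sigma,\beta,\gamma,\delta$, not on $\mathbf{w}$) then yields the proposition, provided this $\epsilon$ is $\bigTheta{1/\beta^2}$.

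For the magnitude of $\underline\epsilon$: at $|t|=\tfrac14$ the argument $x:=\frac{\gamma|t|}{\beta}\sqrt{\frac{\pi}{2(\beta^2+\gamma^2)}}$ of $\Phi$ satisfies $x^2=\frac{\pi t^2}{2\beta^2}\cdot\frac{\gamma^2}{\beta^2+\gamma^2}$, and since $0<\beta<1<\gamma$ gives $\frac{\gamma^2}{\beta^2+\gamma^2}\in(\tfrac12,1)$, we get $x^2=\bigTheta{1/\beta^2}$. The standard Gaussian tail bounds $\frac{1}{\sqrt{2\pi}}\frac{x}{x^2+1}e^{-x^2/2}\leq\Phi(-x)\leq\frac12 e^{-x^2/2}$ then give $-\ln\Phi(-x)=\tfrac{x^2}{2}+O(\ln(1/\beta))=\bigTheta{1/\beta^2}$ (using $\ln(1/\beta)\leq 1/\beta^2$ on $(0,1)$), and since $\delta\leq 0.5$ makes $1-\delta\geq\tfrac12$, we conclude $\underline\epsilon=\ln\!\big(\frac{1-\delta}{2\Phi(-x)}-1\big)=\bigTheta{1/\beta^2}$, hence $\epsilon\in\bigTheta{1/\beta^2}$. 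The endpoint $\delta=0.5$ lies just outside the hypothesis of Theorem~\ref{thm:gpm-dp-lb}, but its one-line argument extends verbatim: with test set $A(t)$, Lemma~\ref{lem:one-dimensional-hclwe-diff} gives $\Pr[Y\in A(t)]-\delta-e^{\epsilon}\Pr[Y'\in A(t)]\geq 1-2\Phi(-x)-\delta-2e^{\epsilon}\Phi(-x)>0$ whenever $\epsilon<\underline\epsilon$ and $\delta\leq 0.5<1-2\Phi(-x)$, the last holding because $x$ is large.

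It remains to prove the equidistribution claim, i.e. that the fractional part of $R/L$ is close to uniform, where $R:=\mathbf{w}^\top(q(D')-q(D))$ is $v$ times the first coordinate of a uniform point on $\mathbb{S}^{d-1}$ and thus has density on $[-v,v]$ proportional to $(1-(r/v)^2)^{(d-3)/2}$. To cover every $d\geq2$ uniformly (this density is U-shaped at $d=2$ and flat at $d=3$), I would condition on $\mathcal{E}:=\{|R|\leq v/2\}$, which has $\Pr[\mathcal{E}]\geq\tfrac13$ for all $d\geq2$ (attained at $d=2$) and on which the conditional density $h$ of $R$ is supported on an interval of length $v$ with $\norm{h}_\infty=O(\sqrt d/v)=O(\sqrt d/\Delta)$ (Stirling for $d\geq4$; a direct estimate at $d=2,3$). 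Then $R/L\bmod 1$ has density $f_T(t)=\sum_{k\in\Z}L\,h\big(L(k+t)\big)$ on $[-\tfrac12,\tfrac12)$, a unit-width Riemann sum (in $s=R/L$) for $\int L\,h(Ls)\,ds=1$; its deviation from the uniform density is bounded by the total variation of $s\mapsto L\,h(Ls)$ over the line, which equals $L$ times the total variation of $h$ and hence is $O(L\norm{h}_\infty)=O(\sigma\sqrt d/(\gamma\Delta))$. The hypothesis $\gamma\gg\sqrt d\,\sigma/\Delta$ makes this $o(1)$, so $\Pr[\,|t|\geq\tfrac14\mid\mathcal{E}\,]\geq\tfrac12-o(1)\geq\tfrac13$, and multiplying by $\Pr[\mathcal{E}]\geq\tfrac13$ proves the claim.

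I expect this last step to be the main obstacle. The delicate points are (i) giving a clean absolute bound $\norm{h}_\infty=O(\sqrt d/v)$ valid for every $d\geq2$, not merely asymptotically, and (ii) turning the informal statement ``$f_T$ is within $O(L\norm{h}_\infty)$ of uniform'' into a rigorous total-variation bound on the Riemann-sum error for the (endpoint-discontinuous) rescaled density $s\mapsto L\,h(Ls)$. Everything else is routine: the value $\norm{f_R}_\infty=\bigTheta{\sqrt d/\Delta}$ of the spherical-projection density, the bound $\Pr[\mathcal{E}]\geq\tfrac13$, the tail-bound asymptotics for $\Phi$, and the monotonicity of $\underline\epsilon$ in $|t|$ and in $\delta$. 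The shape of the hypothesis is explained by the equivalence $\gamma\gg\sqrt d\,\sigma/\Delta\iff L\ll\norm{f_R}_\infty^{-1}$: equidistributing modulo $1$ requires the period $L$ to be much smaller than the width $\approx\norm{f_R}_\infty^{-1}$ of the projected distribution, which is exactly this condition.
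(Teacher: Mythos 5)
Your overall structure matches the paper's proof: fix a near-maximal pair of neighbours, project $q(D')-q(D)$ onto the random $\mathbf{w}$ via Lemma~\ref{lem-app:sqrt-beta}, argue that the fractional part $|t|\geq 1/4$ with constant probability, and then read off the $\Theta(1/\beta^2)$ rate from Theorem~\ref{thm:gpm-dp-lb} (you also correctly observe that the endpoint $\delta=0.5$ sits just outside Theorem~\ref{thm:gpm-dp-lb}'s stated hypothesis but that the inequality extends; the paper glosses over this). Where you genuinely diverge is the equidistribution step. The paper disposes of it with a one-line appeal to ``Weyl's Equidistribution Theorem'' plus the observation $\Delta/\sqrt d\gg L$, concluding $\Pr[|t|\geq 1/4]=1/2+o(1)$. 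Weyl's theorem concerns the sequence $\{n\alpha\}$ and does not directly give uniformity of $R/L\bmod 1$ for a random variable $R$ with a smooth density, so the paper's step is heuristic. Your replacement — condition on $\{|R|\leq v/2\}$ to get a clean $\norm{h}_\infty=O(\sqrt d/v)$ uniformly over $d\geq 2$ (handling the arcsine-type U-shape at $d=2$, where $\Pr[|R|\leq v/2]=1/3$ exactly), then bound the deviation of the wrapped density $\sum_k Lh(L(k+t))$ from $1$ by the total variation $O(L\norm{h}_\infty)=O(\sigma\sqrt d/(\gamma\Delta))$ — is a rigorous and self-contained argument for the same conclusion. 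It pays for the extra rigor with a weaker explicit constant ($\Pr[|t|\geq 1/4]\geq 1/9$ rather than $\approx 1/2$), but ``constant probability'' is all the proposition asks for. Your closing remark that $\gamma\gg\sqrt d\,\sigma/\Delta$ is precisely the condition $L\ll\norm{f_R}_\infty^{-1}$ is a useful reading of the hypothesis that the paper does not articulate.

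Two minor points you should tighten if this were to be written out fully: the total-variation-of-the-wrapped-density bound needs the additional (easy) observations that $V(\tilde g)\leq V(g)$ under wrapping and that $|\tilde g(t)-1|\leq V(\tilde g)$ since $\tilde g$ has mean $1$ on a period; and $V(h)=O(\norm{h}_\infty)$ should be justified separately for the unimodal case ($d\geq 3$, where $V(h)=2\norm{h}_\infty$) and the U-shaped truncation ($d=2$, where $V(h)\leq 4\norm{h}_\infty$ after accounting for the endpoint jumps). You already flag both as the ``delicate points,'' so this is a matter of execution, not a gap in the plan.
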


However, it is also noteworthy to consider the marginal case where $q\left(D\right)$ and $q\left(D'\right)$ are very close in the direction of $\mathbf{w}$, such that their separation is significantly smaller than one peak. This results in \textbf{$T = 0$ and $t \ll 1$}, which decreases the lower bound of privacy cost by Theorem~\ref{thm:gpm-dp-lb}, and is therefore a situation deemed \textbf{``unfavourable'' from the perspective of the BPA}.

\subsection{Upper Bound}\label{sec:privacy-ub}

Despite the broken original privacy guarantee against the BPA with knowledge of $\mathbf{w}$ shown in Section~\ref{sec:privacy-lb}, GPM still achieves (statistical) DP theoretically, albeit at a significantly higher privacy cost. The intuition behind developing this upper bound is that, due to its nature as a mixture of Gaussian components with the same variance, GPM can still be viewed as first adding Gaussian noise and then post-processing by choosing the mean (i.e., deciding which component it falls into). However, the Gaussian noise is significantly narrower in the direction of $\mathbf{w}$ than in the authentic GM, thereby incurring a significantly larger privacy cost.

\begin{Thm} \label{thm:gpm-dp-ub}
    $\mathcal{M}_{\sigma, \mathbf{w}, \beta, \gamma}(\cdot)$ is $(\epsilon, \delta)$-DP for any $0 < \delta \leq 0.5$ such that 
    $\epsilon \in \bigTheta{\frac{\gamma^2\Delta^2}{\beta^2\sigma^2} + \frac{\gamma\Delta}{\beta\sigma}\sqrt{\log\frac{1}{\delta}}}$.
\end{Thm}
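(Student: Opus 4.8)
The plan is to reduce GPM's privacy cost to that of a \emph{single} (anisotropic) Gaussian mechanism, by combining the mixture-of-Gaussians form of the hCLWE noise (Lemma~\ref{lem:hclwe-mog}) with the joint convexity of the hockey-stick divergence $H_{e^\epsilon}(P\|Q):=\sup_{S}\bigl(\Pr[P\in S]-e^\epsilon\Pr[Q\in S]\bigr)$. Recall that this divergence characterizes $(\epsilon,\delta)$-DP: a mechanism $\M$ is $(\epsilon,\delta)$-DP iff $H_{e^\epsilon}(\M(D)\|\M(D'))\le\delta$ for every (ordered) neighbouring pair $D\simeq D'$. So it suffices to upper-bound this divergence for arbitrary $D\simeq D'$.

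First, I would rewrite the mechanism. Scaling the representation of Lemma~\ref{lem:hclwe-mog} by $\sqrt{2\pi}\sigma$ and shifting by $q(D)$ gives $\M_{\sigma,\mathbf{w},\beta,\gamma}(D)=\sum_{z\in\Z}\tilde a_{\beta,\gamma,z}\,\mathcal{N}\!\bigl(q(D)+zp\,\mathbf{w},\,\bs{\Sigma}\bigr)$, where $p:=\sqrt{2\pi}\sigma\tfrac{\gamma}{\beta^2+\gamma^2}$ is the component spacing, $\bs{\Sigma}:=\sigma^2\bigl(I_d-\tfrac{\gamma^2}{\beta^2+\gamma^2}\mathbf{w}\mathbf{w}^\top\bigr)$, and $\tilde a_{\beta,\gamma,z}:=a_{\beta,\gamma,z}/\sum_{z'}a_{\beta,\gamma,z'}$ are the normalized weights. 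The crucial observation is that for a neighbour $D'$, \emph{each} component mean is displaced by the same vector $\mathbf{v}:=q(D)-q(D')$ (with $\norm{\mathbf{v}}\le\Delta$) and all components keep the common covariance $\bs{\Sigma}$; hence $\M_{\sigma,\mathbf{w},\beta,\gamma}(D')$ is exactly $\M_{\sigma,\mathbf{w},\beta,\gamma}(D)$ translated component-by-component by $-\mathbf{v}$.

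Next, I would collapse the mixture. Since $(P,Q)\mapsto H_{e^\epsilon}(P\|Q)$ is a supremum of affine functionals, it is jointly convex, and it is translation-invariant; coupling the mixing index $z$ therefore gives
\[ H_{e^\epsilon}\!\bigl(\M_{\sigma,\mathbf{w},\beta,\gamma}(D)\,\|\,\M_{\sigma,\mathbf{w},\beta,\gamma}(D')\bigr)\le\sum_{z\in\Z}\tilde a_{\beta,\gamma,z}\,H_{e^\epsilon}\!\bigl(\mathcal{N}(\mathbf{0},\bs{\Sigma})\,\|\,\mathcal{N}(\mathbf{v},\bs{\Sigma})\bigr)=H_{e^\epsilon}\!\bigl(\mathcal{N}(\mathbf{0},\bs{\Sigma})\,\|\,\mathcal{N}(\mathbf{v},\bs{\Sigma})\bigr). \]
(Equivalently, one may note that $\M_{\sigma,\mathbf{w},\beta,\gamma}$ factors, in the $\mathbf{w}$/$\mathbf{w}^\perp$ frame, into a $1$-dimensional hCLWE mechanism and an isotropic $(d-1)$-dimensional Gaussian, and combine their privacy costs; the convexity route avoids the composition loss.) A whitening change of variables $\mathbf{y}\mapsto\bs{\Sigma}^{-1/2}\mathbf{y}$ — a bijection, hence DP-preserving — turns the right-hand side into $H_{e^\epsilon}$ between two unit-covariance Gaussians at distance $\nu:=\sqrt{\mathbf{v}^\top\bs{\Sigma}^{-1}\mathbf{v}}$, which by the one-dimensional, unit-noise instance of Theorem~\ref{thm:gm-dp} is $\le\delta$ as soon as $\epsilon\ge\tfrac{\nu^2}{2}-\nu\,\Phi^{-1}(\delta)$. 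By the Sherman--Morrison identity $\bs{\Sigma}^{-1}=\sigma^{-2}\bigl(I_d+\tfrac{\gamma^2}{\beta^2}\mathbf{w}\mathbf{w}^\top\bigr)$, so $\nu^2=\sigma^{-2}\bigl(\norm{\mathbf{v}}^2+\tfrac{\gamma^2}{\beta^2}(\mathbf{w}^\top\mathbf{v})^2\bigr)\le\tfrac{(\beta^2+\gamma^2)\Delta^2}{\beta^2\sigma^2}$; taking the supremum over $D\simeq D'$ then shows $\M_{\sigma,\mathbf{w},\beta,\gamma}$ is $(\epsilon,\delta)$-DP for $\epsilon=\tfrac{(\beta^2+\gamma^2)\Delta^2}{2\beta^2\sigma^2}-\tfrac{\sqrt{\beta^2+\gamma^2}\,\Delta}{\beta\sigma}\Phi^{-1}(\delta)$, which for $0<\beta<1<\gamma$ (so $\beta^2+\gamma^2=\bigTheta{\gamma^2}$) and $\delta$ bounded away from $1$ is of the order stated in the theorem. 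Order-tightness — so that $\bigTheta{\cdot}$ rather than merely an upper bound is warranted — is exactly what Theorem~\ref{thm:gpm-dp-lb} and Proposition~\ref{prop:gpm-dp-lb} supply.

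The one substantive step is the mixture collapse: recognizing that a neighbour's output is a coordinatewise translate of the original mixture, so the mixing index can be coupled and $H_{e^\epsilon}$ pushed inside; the rest is bookkeeping (the whitening, Sherman--Morrison, and the $\gamma\gg\beta$ asymptotics). Two minor points need care. First, $\nu\mapsto\tfrac{\nu^2}{2}-\nu\,\Phi^{-1}(\delta)$ is increasing in $\nu\ge0$ only when $\Phi^{-1}(\delta)\le0$, so for $\delta>\tfrac12$ one should either truncate the bound at $0$ or simply restrict attention to the regime $\delta\le\tfrac12$ that drives the paper's conclusions. Second, DP also requires the reversed divergence $H_{e^\epsilon}(\M(D')\|\M(D))$, but this yields the identical bound by the symmetry $\mathbf{v}\mapsto-\mathbf{v}$ of a centered Gaussian.
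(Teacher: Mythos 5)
Your proposal is correct and follows essentially the same strategy as the paper: reduce GPM to an anisotropic Gaussian mechanism with covariance $\bs{\Sigma}_{\mathbf{w},\beta,\gamma}$, then rescale/whiten and invoke Theorem~\ref{thm:gm-dp} with effective sensitivity $\frac{\sqrt{\beta^2+\gamma^2}}{\beta}\Delta$, obtaining exactly $\overline{\epsilon}=\frac{(\beta^2+\gamma^2)\Delta^2}{2\beta^2\sigma^2}-\frac{\sqrt{\beta^2+\gamma^2}\,\Delta}{\beta\sigma}\Phi^{-1}(\delta)$. The only cosmetic difference is how the mixture over $z$ is eliminated: you push the hockey-stick divergence inside via joint convexity, noting that the neighbour's mixture is a componentwise translate, whereas the paper's proof observes that $\mathcal{H}_{\mathbf{w},\beta,\gamma}$ factors as (anisotropic Gaussian) $+$ (independent discrete Gaussian along $\mathbf{w}$) and discards the discrete term by post-processing. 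These two steps are interchangeable formulations of the same idea and give the same bound; likewise your explicit whitening $\mathbf{y}\mapsto\bs{\Sigma}^{-1/2}\mathbf{y}$ plus Sherman--Morrison is the same computation the paper packages as the map $q\mapsto q'$ and the reverse contraction $I-(1-\beta/\sqrt{\beta^2+\gamma^2})\mathbf{w}\mathbf{w}^\top$. Your two caveats (the sign of $\Phi^{-1}(\delta)$ for $\delta>\tfrac12$, and the reversed divergence) are fine but immaterial to the $\bigTheta{\cdot}$ claim; the paper does not address them either, as the asymptotic statement is insensitive to them.
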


\begin{proof}[Proof Sketch of Theorem~\ref{thm:gpm-dp-ub}]
    Note that $\M_{\sigma, \mathbf{w}, \beta, \gamma}\left(D\right)$ is equivalent to
    \begin{equation}
        q(D) + \sqrt{2\pi}\sigma\left(\mathcal{N}\left(\mathbf{0}, \mathbf{\Sigma}_{\mathbf{w}, \beta, \gamma}\right) +  \frac{\gamma \mathbf{w}\cdot\mathcal{N}_\Z\left(0, \frac{\beta^2+\gamma^2}{2\pi} \right)}{\beta^2+\gamma^2} \right),
    \end{equation}
    where the discrete Gaussian noise can be viewed as post-processing. On the other hand, $\sqrt{2\pi}\sigma\cdot\mathcal{N}\left(\mathbf{0}, \mathbf{\Sigma}_{\mathbf{w}, \beta, \gamma}\right)$ has variance at least $\frac{\beta^2}{\beta^2 + \gamma^2}\sigma^2$ in any direction (the lower bound is achieved in the direction of $\mathbf{w}$). Therefore, by Theorem~\ref{thm:gm-dp}, $\M_{\sigma, \mathbf{w}, \beta, \gamma}\left(D\right)$ is $\left(\epsilon, \delta\right)$-DP for any 
    \begin{equation}
        \epsilon \geq \overline{\epsilon} := \frac{\left(\beta^2 + \gamma^2\right)\Delta^2}{2\beta^2 \sigma^2} - \frac{\sqrt{\beta^2 + \gamma^2}\Delta}{\beta \sigma}\Phi^{-1}\left(\delta\right)
    \end{equation}
    and $0 < \delta \leq 0.5$.
\end{proof}

We present the full proof of Theorem~\ref{thm:gpm-dp-ub} in \appref{app:privacy-ub}. As a sanity check, note that since Proposition~\ref{prop:gpm-dp-lb} requires $\gamma \gg \sqrt{d}\cdot \frac{\sigma}{\Delta} \geq \frac{\sigma}{\Delta}$, it follows that $\frac{\gamma^2\Delta^2}{\beta^2\sigma^2}\gg \frac{1}{\beta^2}$. That is, the upper bound is higher than the lower bound. Moreover, we plot in Figure~\ref{fig:privacy-bounds} the privacy costs under several configurations of hyperparameters at sensitivity $\Delta = 1$. In particular, it can be observed that the value of $\beta$ is the determinant factor of the lower bound of GPM's privacy cost, while $\gamma$ has little impact on it. On the other hand, the upper bound depends on both $\gamma$ and $\beta$. These observations align with Proposition~\ref{prop:gpm-dp-lb} and Theorem~\ref{thm:gpm-dp-ub}. For all choices of hyperparameters listed, the actual value of $\epsilon$ is at least $10^4$, which indicates that no meaningful privacy protection can be achieved with GPM against an adversary that has knowledge of $\mathbf{w}$.
\section{Distinguishing Attacks}\label{sec:da}

Section~\ref{sec:privacy} shows that the actual privacy cost of GPM is asymptotically larger than that of the genuine GM, despite their computational indistinguishability. In this section, we further demonstrate that this privacy leakage translates to additional power for the BPA with knowledge of the backdoor key $\mathbf{w}$. In particular, we construct a concrete attack to distinguish between neighbouring databases $D_0$ and $D_1$, from which the ``privatized'' query results are computed. The intuition behind this attack is that, as depicted in Figure~\ref{fig:gpm-density}, since the GPM densities are concentrated around several peaks in the direction of $\mathbf{w}$, the correct input database can be determined by checking whether the output is close to one of the peaks with very high probability.

Following the same setting for distinguishing attacks introduced in Section~\ref{sec:prelim-gm-numda}, we consider the probability that the BPA $\adv_{\sigma, \mathbf{w}, \beta, \gamma}$ (equipped with knowledge of $\mathbf{w}$) successfully identifies the index of the input database $i\in \bin$, i.e.,
\begin{equation}
    \prob{i^*=i:\\i\sample\bin\\ \mathbf{y}\sample \M_{\sigma, \mathbf{w}, \beta, \gamma}(D_i)\\i^*\sample\adv_{\sigma, \mathbf{w}, \beta, \gamma}\left(D_0, D_1, \mathbf{y}\right)},\label{eq:prob-da-success}
\end{equation}
where $i\in \bin$ is the real index of the database, and $i^*$ is the guess of $i$ by the attacker after observing the output of the mechanism.

By Lemma~\ref{lem:hclwe-mog}, the probability mass of $\M_{\sigma, \mathbf{w}, \beta, \gamma}\left(D_i\right)$ is concentrated around the hyperplanes
\begin{equation}\label{eq:hyperplanes}
    H_i:=\bigcup_{z\in \Z}\left\{\mathbf{y}\in \R^d: \frac{\left(\mathbf{y} - q\left(D_i\right)\right)^\top \mathbf{w}}{\sqrt{2\pi} \sigma} = \frac{\gamma z}{\beta^2 + \gamma^2}\right\},
\end{equation}
for $i \in \bin$, such that $\adv_{\sigma, \mathbf{w}, \beta, \gamma}\left(D_0, D_1, \mathbf{y}\right)$ (Figure~\ref{fig:da-attacker}) can be constructed by determining whether the closest hyperplane to $\mathbf{y}$ belongs to $H_0$ or $H_1$.

\begin{figure}
    \centering
    \procedureblock[space=auto, linenumbering]{$\adv_{\sigma, \mathbf{w}, \beta, \gamma}\left(D_0, D_1, \mathbf{y}\right)$}{
        \pcfor i\gets\bin\\
            z_i \gets \frac{\left(\beta^2 + \gamma^2\right)\cdot\left(\mathbf{y} - q\left(D_i\right)\right)^\top \mathbf{w}}{\sqrt{2\pi}\sigma \cdot \gamma}\\
        \pcendfor\\
        \pcreturn i^*\gets \argmin_{i\in \bin}\abs{z_i - \round{z_i}}
    }
    \caption{Definition of $\adv_{\sigma, \mathbf{w}, \beta, \gamma}$.}
    \label{fig:da-attacker}
\end{figure}

\begin{Thm}\label{thm:da}
    For any pair of neighbouring databases $D_0\simeq D_1$, and the unique pair of $T\in \Z$ and $t\in \left[-0.5, 0.5\right)$ such that $\sqrt{2\pi} \sigma \cdot (T + t) \frac{\gamma}{\beta^2 + \gamma^2} = \mathbf{w}^\top \left(q(D_1) - q(D_0)\right)$, the probability that $\adv_{\sigma, \mathbf{w}, \beta, \gamma}$ correctly identifies the input database (i.e., Equation \eqref{eq:prob-da-success}) is at least $1 - 2\Phi\left(- \frac{\gamma\abs{t}}{\beta}\sqrt{\frac{\pi}{2\left(\beta^2 + \gamma^2\right)}}\right)$.
\end{Thm}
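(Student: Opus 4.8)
The plan is to reduce the attacker's success probability to the event ``$Y \in A(t)$'' from Lemma~\ref{lem:one-dimensional-hclwe-diff}, which already occurs with probability at least the bound in the statement. First I would fix the true index $i \in \bin$ and write $\mathbf{y} = q(D_i) + \sqrt{2\pi}\sigma\,\mathbf{h}$ with $\mathbf{h} \sim \mathcal{H}_{\mathbf{w}, \beta, \gamma}$. Every quantity $\adv_{\sigma, \mathbf{w}, \beta, \gamma}$ computes depends on $\mathbf{y}$ only through $\mathbf{w}^\top \mathbf{y}$, so the relevant randomness is the projection $Y := \mathbf{w}^\top \mathbf{h}$, which is distributed as the one-dimensional marginal $\mathcal{H}_{1, \beta, \gamma}$ appearing in Equation~\eqref{eq:hclwe-1-mog}. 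A one-line substitution gives $z_i = \frac{\beta^2 + \gamma^2}{\gamma} Y$, and the hypothesis $\sqrt{2\pi}\sigma(T+t)\frac{\gamma}{\beta^2+\gamma^2} = \mathbf{w}^\top(q(D_1) - q(D_0))$ yields $z_0 - z_1 = T + t$, hence $z_{1-i} = z_i \pm (T+t)$, with the sign determined by $i$. Thus the two statistics the attacker rounds are a common scaled hCLWE sample and that same sample shifted by an integer plus $\pm t$.

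Next I would pass to $g(x) := \abs{x - \round{x}}$, the distance to the nearest integer. Rescaling the set $A(t) = \bigcup_{z\in\Z} I_z(t)$ of Lemma~\ref{lem:one-dimensional-hclwe-diff} by $\frac{\beta^2+\gamma^2}{\gamma}$ turns it into $\{x : g(x) < \abs{t}/2\}$ (the nearest integer being unambiguous because $\abs{t}/2 \le 1/4$), so $Y \in A(t)$ is equivalent to $g(z_i) < \abs{t}/2$. The \emph{key step} is that this inequality forces $\adv_{\sigma,\mathbf{w},\beta,\gamma}$ to output $i^* = i$: since $g$ is $\Z$-periodic and subadditive, $g(z_{1-i}) = g(z_i \pm t) \ge g(t) - g(z_i) = \abs{t} - g(z_i)$, using $g(t) = \abs{t}$ for $t \in [-0.5, 0.5)$; so on the event $g(z_i) < \abs{t}/2$ we get $g(z_{1-i}) > \abs{t}/2 > g(z_i)$, and the $\argmin$ in the last line of the algorithm strictly, hence unambiguously, prefers index $i$.

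Assembling the pieces, $\prob{i^* = i \mid i} \ge \prob{Y \in A(t)}$, which Lemma~\ref{lem:one-dimensional-hclwe-diff} lower-bounds by exactly the quantity in the theorem statement. This bound is the same for $i = 0$ and $i = 1$: the projected separation, and therefore $\abs{t}$, is identical, and only the sign of the shift changes, which the distance argument already absorbed. Averaging over $i \sample \bin$ then gives the claimed lower bound on Equation~\eqref{eq:prob-da-success}.

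The hard part will be the middle step — certifying that $Y \in A(t)$ alone implies success. The subtlety is that shifting a point near an integer by $\pm t$ may ``wrap around'' a half-integer, so one cannot express $g(z_{1-i})$ as $\abs{t} - g(z_i)$ exactly; the subadditivity inequality $g(z_{1-i}) \ge \abs{t} - g(z_i)$ is the right substitute and is still enough thanks to $\abs{t}/2 \le 1/4$. Everything else — the projection identity, the algebra producing $z_{1-i} = z_i \pm (T+t)$, and the probabilistic estimate — is either immediate or already supplied by Lemma~\ref{lem:one-dimensional-hclwe-diff}.
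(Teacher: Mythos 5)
Your proposal is correct and follows essentially the same route as the paper's proof: condition on the true index, project onto $\mathbf{w}$, reduce to the event $Y \in A(t)$ from Lemma~\ref{lem:one-dimensional-hclwe-diff}, and invoke its probability bound. The paper's proof is terser and simply asserts that ``$\mathbf{y}$ closer to $H_0$ than $H_1$'' occurs with the stated probability ``following the same argument as in Theorem~\ref{thm:gpm-dp-lb}''; your version makes explicit the step that the paper glosses over, namely that $g(z_i)<\abs{t}/2$ forces $g(z_{1-i})>\abs{t}/2>g(z_i)$ via $\Z$-periodicity and subadditivity of the distance-to-$\Z$ function, which is exactly what certifies the $\argmin$ picks the right index.
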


\begin{proof}[Proof of Theorem~\ref{thm:da}]
    Without loss of generality, condition on $i = 0$. Observe that, when $\mathbf{y}$ is closer to $H_0$ than $H_1$, $\mathcal{A}_{\sigma, \mathbf{w}, \beta, \gamma}$ outputs $0$. Following the same argument as in Theorem~\ref{thm:gpm-dp-lb}, this event occurs with probability at least $1 - 2\Phi\left(-\frac{\gamma\abs{t}}{\beta}\sqrt{\frac{\pi}{2\left(\beta^2 + \gamma^2\right)}}\right)$.
\end{proof}

By Theorem~\ref{thm:da}, the success rate of the distinguishing attack is very close to $1$ given $\gamma \gg 1 \gg \beta$, which we further verify empirically in Section~\ref{sec:exp-da}. Here, the failure probability corresponds to the case where the output shifts away from the peaks of the GPM density, and instead falls closer to the peaks corresponding to the other database, which happens with infinitesimal probability under proper choices of parameters.
\section{Mitigations}\label{sec:mit}

In this section, we discuss possible mitigations against the privacy attack enabled by GPM as one major possible direction for future work. In particular, in Section~\ref{sec:mit-rotation}, we discuss a possible mitigation method under the same threat model as considered in Section~\ref{sec:setup-threat-model}, which is fully compatible with the scenario discussed in Example~\ref{exm:randomness-lib-compromised}, where the randomness library is compromised. Meanwhile, in Section~\ref{sec:mit-vdp}, we discuss viable methods of enforcing the correct implementations and executions of the DP mechanisms. We present in \appref{app:mit-distributed} an additional mitigation method under the setting of distributed DP with multiple servers, where a subset of the servers executes the compromised software with GPM.

\subsection{Random Rotation}\label{sec:mit-rotation}

Given that the privacy attack hinges on the BPA's knowledge of $\mathbf{w}$, and that the output is computationally indistinguishable from GM otherwise, one mitigation against GPM is to apply a random rotation to the sampled noise. Specifically, as described in Algorithm~\ref{alg:mit-rotation}, upon sampling the noise value $\mathbf{r}$ from the possibly backdoored multivariate Gaussian distribution $\mathcal{R}$, the server does not directly add the noise to the unperturbed query result $q\left(D\right)$. Instead, it rotates $\mathbf{r}$ to a direction $\mathbf{u}$ sampled uniformly at random, only preserving its length $\norm{\mathbf{r}}$. Equivalently, the noise added to the query result is $\sigma\cdot \norm{\mathbf{r}} \cdot \mathbf{u}$.

\begin{algorithm}[!htbp]
\caption{Noise-Rotated Gaussian Mechanism (NRGM)}\label{alg:mit-rotation}
\begin{algorithmic}[1]
\Require Database $D$; query $q$; variance $\sigma$; $d$-dimensional Gaussian distribution $\mathcal{R}$ (possibly replaced with hCLWE).
    \State $\mathbf{r}\sample \mathcal{R}$ \Comment{Sample the noise}
    \State $\mathbf{u}\sample \mathbb{S}^{d-1}$ \Comment{Sample a random direction}
    \State \Return $\mathbf{y}\gets q\left(D\right) +\sigma \cdot \norm{\mathbf{r}} \cdot \mathbf{u}$.
\end{algorithmic}
\end{algorithm}

Algorithm~\ref{alg:mit-rotation} does not affect the output distribution if the mechanism has not been backdoored, and can be executed in $\bigO{d}$ time and space complexity. However, if the mechanism is indeed backdoored as GPM, the equivalent distribution of the output $\mathbf{y}$ is $\mathcal{M}_{\sigma, \mathbf{w}', \beta, \gamma}$ for another uniformly random $\mathbf{w}'$ that is not known to the BPA. Therefore, the noise-rotated GPM maintains the same level of privacy guarantee as GM, as formalized in \appref{app:setup-regular-attacker-cdp}. Nevertheless, recall that the backdoor planted in the randomness of GPM can also be transformed into a backdoor in the randomness of sampling the random direction $\mathbf{u}$. Therefore, $\mathbf{r}$ and $\mathbf{u}$ should be sampled using different libraries. To prevent this cat-and-mouse game, we also consider adding further verification to fundamentally mitigate the existence of the backdoor.

\subsection{Verifiable DP}\label{sec:mit-vdp}

The integrity of GM may also be enforced by adding further verification to identify deviations from the authentic GM and unintended privacy leakages.

\subsubsection{Cryptographic Proofs} The execution of GM can be verified by zero-knowledge proofs (ZKPs), cryptographic primitives that certify the correct execution of prescribed computations~\cite{DBLP:journals/ftsec/Thaler22}. With ZKPs applied, an external verifier can be convinced that the authentic GM, rather than GPM, is genuinely executed by the server, such that no backdoor is planted in the server's output~\cite{vfuzz,dprio,DBLP:conf/iclr/ShamsabadiTCBHP24,KCY21,BC23,vddp}. However, the application of ZKPs incurs fundamental overhead, which poses significant challenges to their practical implementation.

\subsubsection{Formal Verifications} Formal verification techniques can identify bugs in the DP mechanisms implemented as computer programs, with the purpose of detecting unintended privacy leakages~\cite{DBLP:conf/vmcai/ReshefKSD24, DBLP:conf/ccs/WangDKZ20, DBLP:journals/corr/abs-1909-02481}. Such techniques can automatically identify the GPM backdoor as a breach of the intended privacy guarantee. However, this requires the open-sourcing of the DP libraries and the faithful execution of the verified programs.
\section{Experiments}\label{sec:exp}

In this section, we conduct experimental evaluations on GPM, with a focus on examining the power of distinguishing attacks gained by the BPA with knowledge of the backdoor key $\mathbf{w}$, as described in Section~\ref{sec:da}. All experiments were conducted on a computing node equipped with an NVIDIA A40 with 48GB VRAM, which shares 2 AMD EPYC 7272 12-Core CPUs and 512 GB RAM with 7 other computing nodes of the same hardware configuration. Unless specified otherwise, all experiments were repeated 100 times. The open-source code and raw experiment logs are available at \url{https://github.com/jvhs0706/GPM}.

\subsection{Experiments on Distinguishing Attacks}\label{sec:exp-da}

Since the covertness guarantees in Section~\ref{sec:setup-covertness} hinge on high dimensionality, it is only meaningful to consider additional privacy leakage under sufficiently large $d$ (e.g., $d \geq 256$) such that the original DP guarantees continue to hold against the RPA. Therefore, in this section, we report the success rates of distinguishing attacks; that is, the percentage of distinguishing attacks in which the index of the selected database is correctly identified by the BPA ($i=i^*$), which approximates the frequency in Equation~\eqref{eq:prob-da-success}. We conducted experiments for distinguishing attacks on two classes of queries that are ubiquitously studied in the DP literature where GM applies and that have a large output dimension: namely, DP histogram (DP-hist) queries~\cite{DBLP:conf/stoc/BassilyS15, DBLP:journals/vldb/XuZXYYW13, DBLP:conf/cikm/LiXJL15, DBLP:journals/pvldb/HayRMS10} and the gradient computation in DP stochastic gradient descent (DP-SGD)~\cite{DBLP:conf/ccs/AbadiCGMMT016, DBLP:conf/nips/HayesBM23, DBLP:conf/iclr/FangLF023}.

\begin{itemize}[leftmargin=*]
    \item For DP-hist queries, we experiment with artificial datasets $D \in \left\{1, 2, \dots, d \right\}^n$. That is, each of the $n$ records is assigned one of the $d$ classes. We set $d \in \left\{256, 4096, 65536\right\}$, such that the non-private query returns a $d$-dimensional vector $\mathbf{h}$ with the $i$-th element being $\mathbf{h}_i := \abs{\left\{D_j: D_j = i\right\}}$, i.e., the count of the $i$-th class in the database. Two databases $D$ and $D'$ are defined as neighbouring if $D$ is equivalent to $D'$ after adding or removing one record. Therefore, the histogram query has a sensitivity of $1$ under the $\ell^2$ norm.
    \item For the gradient computation in DP-SGD, we follow the algorithm of the original paper~\cite{DBLP:conf/ccs/AbadiCGMMT016} and the same setting as the distinguishing attacks in previous studies on DP numerical issues~\cite{DBLP:conf/sp/JinMRO22}. In particular, we experiment with the MNIST dataset using the LeNet-5 structure, and the CIFAR-10 dataset using VGG19, ResNet50, and MobileNet V2 structures, which have $2\times 10^6\leq d \leq 4\times 10^7$ parameters, respectively. Given a batch of $n=128$ data points $D=\left\{\left(\mathbf{x}_i, y_i\right)\right\}_{i=1}^n$, we compute the partial gradient of the loss function with respect to each data point $\mathbf{g}_i$, clip each $\mathbf{g}_i$ to an $\ell^2$ norm of $C\in \left\{4,8\right\}$ (denoting the clipped gradient as $\mathbf{g}_i'$), and define the non-private query result as $\mathbf{g}:=\frac{1}{n}\sum_{i=1}^n \mathbf{g}_i'$. The neighbouring relation is defined by replacing one data point in the batch. For fairness, we fix the random seeds for the training process to keep the non-replaced $\mathbf{g}_i'$s consistent for queries on the neighbouring databases. Therefore, the sensitivity of the query is $\frac{2C}{n}$ under the $\ell^2$ norm. We also consider two phases during training: before the model has converged (training from a randomly initialized set of parameters), and after the model has converged (loading a set of pretrained parameters with sufficiently high accuracy).
\end{itemize}

For both classes of queries, we fix $\delta^*=10^{-10}$ and vary $\epsilon^*$, where $\left(\epsilon^*, \delta^*\right)$ are the privacy parameters of the authentic GM as defined in Theorem~\ref{thm:gm-dp}, such that the scale of the noise $\sigma$ can be computed accordingly. Note that, by Theorems~\ref{thm:hclwe-ind-c} and \ref{thm:gpm-covert}, the computational indistinguishability result holds for any $\beta\left(d\right)\in d^{-O\left(1\right)}$ and $2\sqrt{d} \leq \gamma \leq d^{O\left(1\right)}$. Meanwhile, by Theorems~\ref{thm:gpm-dp-lb} and~\ref{thm:da}, $\beta$ is the decisive factor for the actual lower bound on privacy leakage and the success rates of the distinguishing attacks. Therefore, $\frac{1}{\beta}$ can be as large as \textbf{any polynomial}. Consequently, we choose $\beta$ in our experiments to range between $10^{-5}$ and $10^{-1}$ while varying $256\leq d \leq 4\times 10^7$, ensuring that $\frac{1}{\beta}$ is at most on the order of $d^2$, which satisfies the conditions. Similarly, we constrain $\gamma$ to a reasonable range between $2\sqrt{d}$ (the default value unless otherwise specified) and $20d$ in all experiments reported in this section.

It is also important to note that the purpose of the experiments on distinguishing attacks presented in this section is to showcase the seriousness of the GPM backdoor, and the evaluated attacks are independent of GM's numerical issues despite the similar experimental settings. In particular, the design and analysis of the distinguishing attack, as presented in Section~\ref{sec:da}, are fully principled in the real domain and therefore do not exploit vulnerabilities in floating-point arithmetic to increase the success rates.

\subsubsection{DP-hist} \label{sec:exp-da-hist}

\begin{figure*}[!t]
    \centering
    \includegraphics[width=\linewidth]{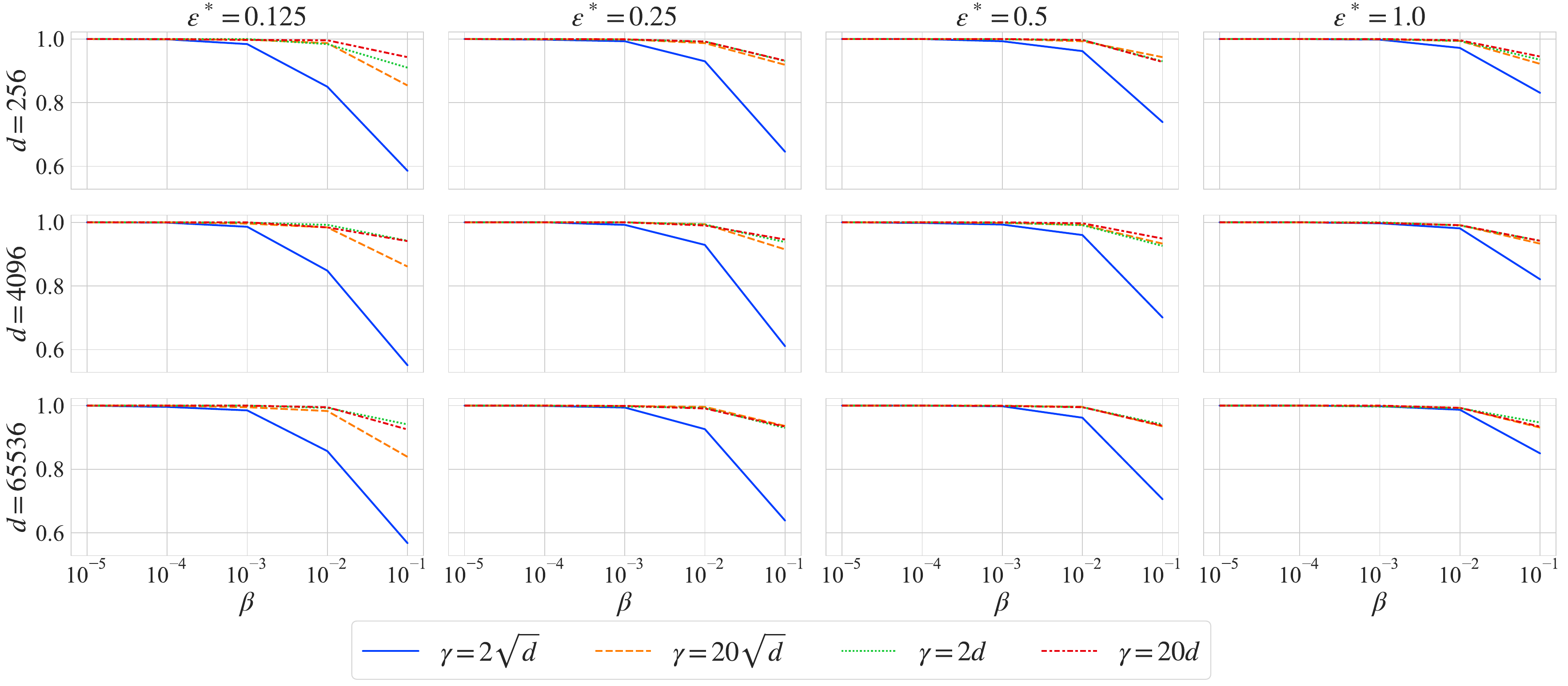}
    \caption{Distinguishing attack success rates on DP-hist. Success rates approach $1$ when $\beta < 10^{-3}$, and degrade significantly for $\beta \in \left\{10^{-1}, 10^{-2}\right\}$, especially for $\gamma = 2\sqrt{d}$. For larger $\beta$, increasing $\epsilon^*$ slightly improves performance.}
    \label{fig:hist-da}
\end{figure*}

As shown in Figure~\ref{fig:hist-da}, for DP-hist queries, we observe that when $\beta \leq 10^{-3}$, distinguishing attacks succeed with a probability close to 1. On the other hand, for $\beta = 0.1$, the success rates are not significantly higher than those of random guessing, especially for the smallest value of $\gamma = 2\sqrt{d}$. This phenomenon is in accordance with Theorem~\ref{thm:da} and is further visually explained in Figure~\ref{fig:gpm-density}, showing that the peaks, which are of width approximately $\sqrt{2\pi}\frac{\gamma}{\beta}$, are wider due to the larger values of $\beta$, resulting in greater overlaps in the probability masses. Therefore, the SCBP is motivated to select smaller values of $\beta$, as long as $\beta(d)$ is polynomially bounded as required by Theorem~\ref{thm:gpm-covert}.

Furthermore, it can also be observed that, for larger values of $\beta$, larger $\epsilon^*$ values generally result in better success rates. This phenomenon is in accordance with the $\left(\epsilon^*, \delta^*\right)$-DP in authentic GM, and can also be explained by the fact that larger $\epsilon^*$ values result in smaller $\sigma$ values. Consequently, for neighbouring databases $D$ and $D'$ and the unique pair of $T\in \Z$ and $t\in \left[-0.5, 0.5\right)$ such that $\sqrt{2\pi} \sigma \cdot (T + t) \frac{\gamma}{\beta^2 + \gamma^2} = \mathbf{w}^\top \left(q\left(D\right) - q\left(D'\right)\right)$, the probability that $T=0$ and $t \ll 1$ becomes lower over the randomness of sampling $\mathbf{w}$. By Theorem~\ref{thm:da}, this results in a higher success rate for distinguishing attacks. Similarly, note that for fixed $\sigma, D, D'$ and $\mathbf{w}$, the product of $\left(T+ t\right)$ and $\frac{\gamma}{\gamma^2+\beta^2}$ is constant. Consequently, the value of $\left(T+t\right)$ is approximately proportional to $\gamma$. Thus, for larger values of $\gamma$, $\left(T + t\right)$ generally takes on larger values; hence, the event where $T = 0$ and $t\ll 1$ becomes less likely. This explains why larger values of $\gamma$ compensate for the drop in success rates for larger values of $\beta$.

\subsubsection{DP-SGD}

\begin{figure*}[!t]
    \centering
    \includegraphics[width=\linewidth]{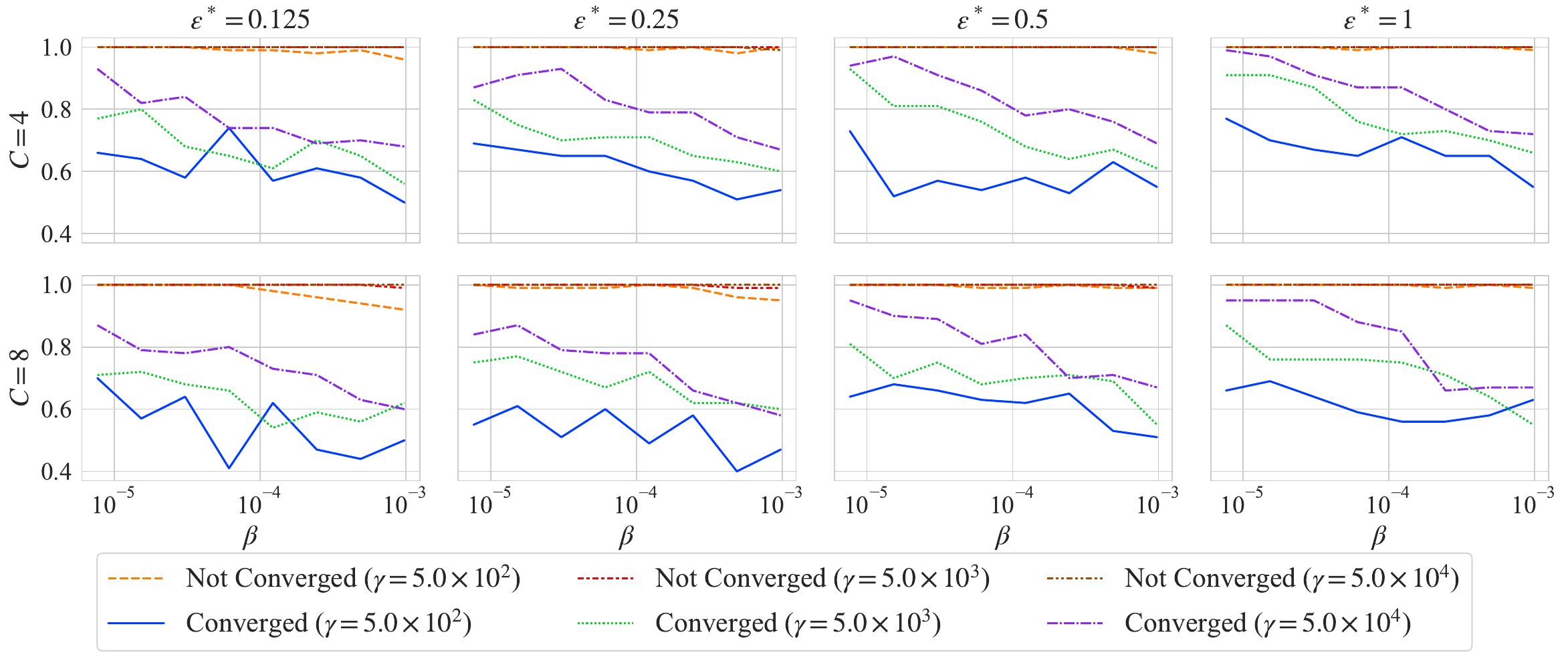}
    \caption{Distinguishing attack success rates on DP-SGD with MNIST. Success rates drop significantly after model convergence.}
    \label{fig:mnist-da}
\end{figure*}

We report the experimental results of distinguishing attacks on DP-SGD with the MNIST dataset and LeNet-5 in Figure~\ref{fig:mnist-da}. A trend similar to that in Figure~\ref{fig:hist-da} can be observed if the model is trained from random initialization. That is, the success rate of the distinguishing attacks approaches $1$ in all experiments when $\beta$ is at the level of $10^{-5}$ to $10^{-4}$, while at $\beta \approx 10^{-3}$, a slight drop in the success rate of 5\% to 10\% can be observed, and larger $\epsilon^*$ values, and therefore smaller $\sigma$ values, tend to compensate for this drop. However, one unexpected phenomenon can be observed on the SGD steps after the model has converged: the success rates are consistently below 1 even if smaller values of $\beta$ are chosen. This can be explained by the fact that, as the model has converged, the computed gradients afterwards are significantly smaller in magnitude~\cite{DBLP:conf/uss/ThudiJMSP24}. That is, $\norm{q\left(D\right) - q\left(D'\right)} \ll \Delta$, such that in Theorem~\ref{thm:da}, $T = 0$ and $t \ll 1$ with sufficiently high probability, which impacts the overall success rate of the distinguishing attacks. Meanwhile, similar to the DP-hist case, increasing the value of $\gamma$ counteracts the increase in this probability, thereby restoring the success rates after convergence to more than 90\% for larger $\epsilon^*$ values and smaller $\beta$ values.

\begin{table}[!t]
    \centering
    \caption{Distinguishing attack success rates on DP-SGD with CIFAR-10. Most instances achieve 100\% success rates. Unlike Figure~\ref{fig:mnist-da}, no drop in success rate is observed after convergence compared with before convergence. Note that, as detailed in \appref{app:additional-exp-gamma}, repeating the experiments with larger values of $\gamma = 20\sqrt{d}$ and $\gamma = 200\sqrt{d}$ (in contrast to the default value of $\gamma = 2\sqrt{d}$ in this table) results in 99--100\% success rates in all experimental instances.}
    \label{tab:cifar10-da}
    \autofit{\begin{tabular}{@{}lcccccccccc@{}}

\toprule
\multirow{2}{*}{Model} &
  \multirow{2}{*}{$d$} &
  \multirow{2}{*}{Converged} &
  \multirow{2}{*}{$\epsilon^*$} &
  \multirow{2}{*}{$\gamma (\times 10^3)$} &
  \multicolumn{3}{c}{$\beta\left(C=4\right)$} &
  \multicolumn{3}{c}{$\beta\left(C=8\right)$} \\
 &
   &
   &
   &
   &
  $10^{-5}$ &
  $10^{-4}$ &
  $10^{-3}$ &
  $10^{-5}$ &
  $10^{-4}$ &
  $10^{-3}$ \\ \midrule
\multirow{8}{*}{VGG19} &
  \multirow{8}{*}{39M} &
  \multirow{4}{*}{\ding{55}} &
  0.125 &
  \multirow{8}{*}{$12$} &
  100 &
  100 &
  99 &
  100 &
  99 &
  96 \\
 &  &                                     & 0.25  &  & 100 & 100 & 97  & 100 & 100 & 97  \\
 &  &                                     & 0.5   &  & 100 & 100 & 100 & 100 & 100 & 100 \\
 &  &                                     & 1     &  & 100 & 100 & 100 & 100 & 100 & 99  \\
 &  & \multirow{4}{*}{\ding{52} (94.0\%)} & 0.125 &  & 100 & 100 & 97  & 100 & 99  & 97  \\
 &  &                                     & 0.25  &  & 100 & 100 & 98  & 100 & 100 & 97  \\
 &  &                                     & 0.5   &  & 100 & 100 & 100 & 100 & 100 & 97  \\
 &  &                                     & 1     &  & 100 & 100 & 99  & 100 & 100 & 98  \\ \hline
\multirow{8}{*}{ResNet50} &
  \multirow{8}{*}{23M} &
  \multirow{4}{*}{\ding{55}} &
  0.125 &
  \multirow{8}{*}{$9.7$} &
  100 &
  100 &
  97 &
  100 &
  100 &
  97 \\
 &  &                                     & 0.25  &  & 100 & 100 & 97  & 100 & 100 & 97  \\
 &  &                                     & 0.5   &  & 100 & 100 & 100 & 100 & 100 & 100 \\
 &  &                                     & 1     &  & 100 & 100 & 100 & 100 & 100 & 100 \\
 &  & \multirow{4}{*}{\ding{52} (93.7\%)} & 0.125 &  & 100 & 100 & 99  & 100 & 100 & 98  \\
 &  &                                     & 0.25  &  & 100 & 100 & 98  & 100 & 100 & 99  \\
 &  &                                     & 0.5   &  & 100 & 100 & 100 & 100 & 100 & 99  \\
 &  &                                     & 1     &  & 100 & 100 & 100 & 100 & 100 & 100 \\ \hline
\multirow{8}{*}{MobileNetV2} &
  \multirow{8}{*}{2.2M} &
  \multirow{4}{*}{\ding{55}} &
  0.125 &
  \multirow{8}{*}{$3.0$} &
  100 &
  98 &
  97 &
  100 &
  100 &
  98 \\
 &  &                                     & 0.25  &  & 100 & 100 & 98  & 100 & 100 & 100 \\
 &  &                                     & 0.5   &  & 100 & 100 & 100 & 100 & 99  & 100 \\
 &  &                                     & 1     &  & 100 & 100 & 97  & 100 & 100 & 100 \\
 &  & \multirow{4}{*}{\ding{52} (93.9\%)} & 0.125 &  & 100 & 100 & 98  & 100 & 99  & 93  \\
 &  &                                     & 0.25  &  & 100 & 100 & 100 & 100 & 100 & 100 \\
 &  &                                     & 0.5   &  & 100 & 100 & 99  & 100 & 100 & 98  \\
 &  &                                     & 1     &  & 100 & 100 & 100 & 100 & 100 & 100 \\ \bottomrule
\end{tabular}}
\end{table}

\begin{table}[!t]
    \centering
    \caption{Median magnitude of gradients and the difference between gradients computed on neighbouring databases, both of which are significantly larger in CIFAR-10 tasks regardless of the convergence status.}
    \label{tab:gradient-magnitude}
    \autofit{\begin{tabular}{cccccc}
\toprule
\multirow{2}{*}{Task} & Model       & \multicolumn{2}{c}{$\norm{q\left(D\right)}$} & \multicolumn{2}{c}{$\norm{q\left(D\right)-q\left(D'\right)}$} \\
                      & Converged   & \ding{55}             & \ding{52}            & \ding{55}                & \ding{52}                          \\ \midrule
MNIST                 & LeNet5      & 0.32                  & 0.025                & 0.32                     & $3.3\times 10^{-6}$                \\
CIFAR-10              & VGG19       & 12                    & 1.4                  & 13                       & 0.81                               \\
CIFAR-10              & ResNet50    & 980                   & 2.9                  & 110                      & 0.31                               \\
CIFAR-10              & MobileNetV2 & 11                    & 6.2                  & 1.35                     & 0.59                               \\ \bottomrule
\end{tabular}}
\end{table}

We further report the experimental results on CIFAR-10 in Table~\ref{tab:cifar10-da}. It can be observed that 100\% success rates are achieved for all instances when $\beta = 10^{-5}$, and for most instances when $\beta = 10^{-4}$. A slight drop in success rate is observed when $\beta = 10^{-3}$, although it still remains at or above 97\% in most cases. Moreover, no drop in success rate is observed after convergence. This phenomenon can be explained by the fact that the gradients and their differences remain of considerable magnitude, as illustrated in Table~\ref{tab:gradient-magnitude}. This is attributed to the relatively lower accuracies on CIFAR-10, which result in loss functions that are not very close to $0$, as well as the larger model sizes, which increase the magnitudes of the gradients. Consequently, the edge case in Theorem~\ref{thm:da}, where $T = 0$ and $t \ll 1$, no longer occurs with significant probability.

\subsection{Additional Experiments}\label{sec:additional-exp}

\begin{table}[!t]
    \centering
    \caption{Comparison between GM's and GPM's actual $\ell^2$ errors and the theoretical expectation computed from GM. No significant differences are observed.}
    \autofit{\begin{tabular}{cccccc}
    \toprule
                               & $\epsilon^*$ & 0.125   & 0.25   & 0.5    & 1      \\ \midrule
    \multirow{3}{*}{$d=256$}   & Expect       & 815.5   & 408.4  & 204.8  & 103.2  \\
                               & GM           & 811.8   & 404.9  & 203.9  & 102.6  \\
                               & GPM          & 819.1   & 407.8  & 206.1  & 102.2  \\ \hline
    \multirow{3}{*}{$d=4096$}  & Expect       & 3262.0  & 1633.5 & 819.3  & 412.1  \\
                               & GM           & 3263.3  & 1636.1 & 819.9  & 411.4  \\
                               & GPM          & 3260.3  & 1632.9 & 819.3  & 412.5  \\ \hline
    \multirow{3}{*}{$d=65536$} & Expect       & 13048.1 & 6534.1 & 3277.0 & 1648.4 \\
                               & GM           & 13043.7 & 6536.5 & 3277.2 & 1648.7 \\
                               & GPM          & 13044.2 & 6526.8 & 3276.3 & 1648.3 \\ \bottomrule
\end{tabular}}
    \label{tab:l2}
\end{table}

By Theorem~\ref{thm:gpm-covert}, the GPM backdoor cannot be discovered by any PPT adversary without knowledge of the backdoor key, except with negligible probability. That is, it is impossible to distinguish the outputs of GM and GPM using any metric independent of $\mathbf{w}$, provided the metric can be computed efficiently. To further verify this property, we measure the actual $\ell^2$ error, averaging over all instances of the experiments considered in Figure~\ref{fig:hist-da}, and present the results in Table~\ref{tab:l2}. It can be observed that the actual $\ell^2$ error and the theoretically computed $\ell^2$ error are not significantly different. This closeness in error magnitude serves as empirical evidence of the backdoor's undiscoverability.

\begin{figure}
    \centering
    \includegraphics[width=0.8\linewidth]{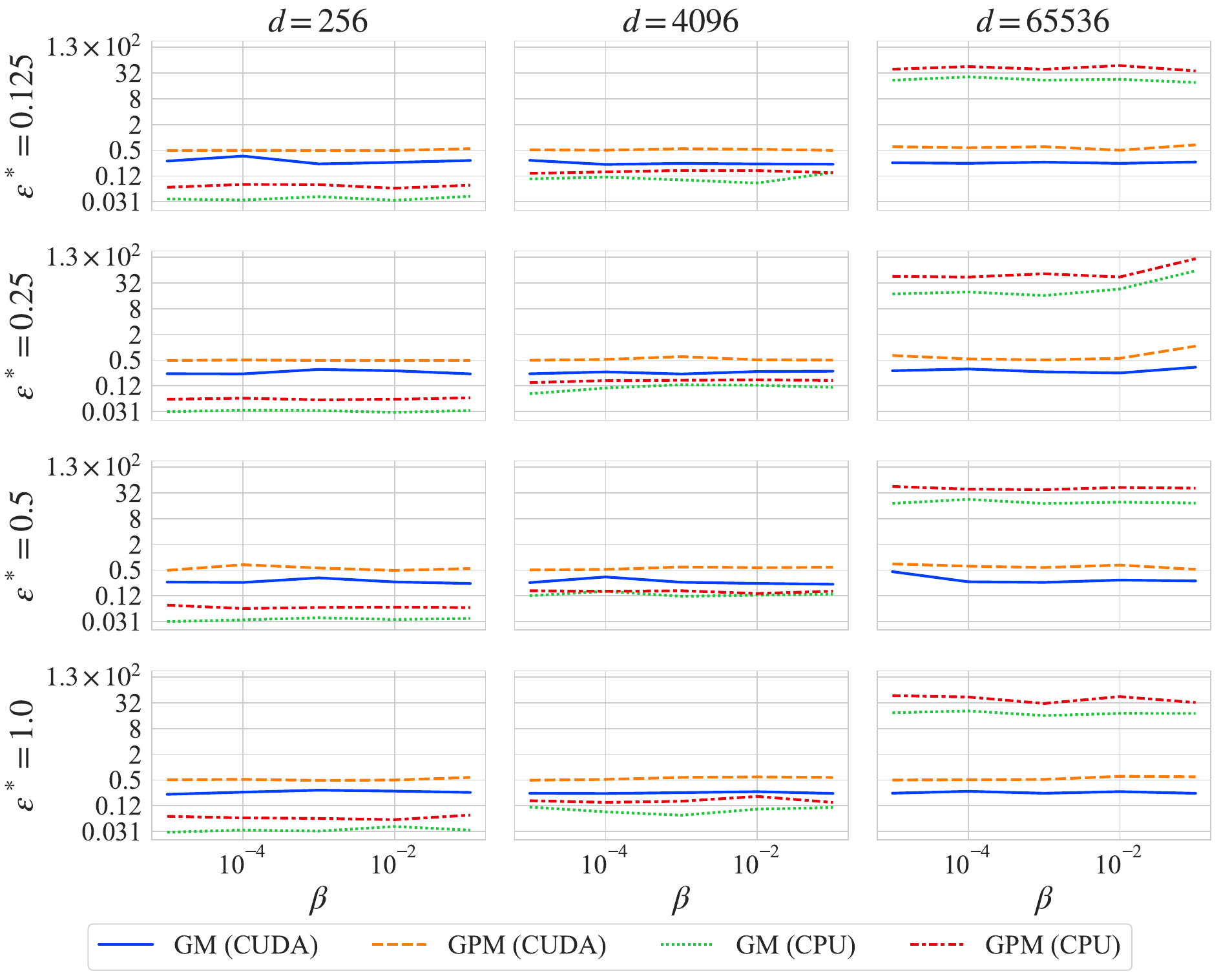}
    \caption{Comparison of noise sampling times between GM and GPM (in milliseconds), measured on CPU and CUDA devices; GPM incurs a 0.5--1.5x slowdown. Note that GPM's implementation is not fully optimized. The running times of the other components (e.g., the unperturbed query processing time) are identical between GM and GPM, yet not included.}
    \label{fig:hist-rt}
\end{figure}

While we acknowledge the challenge of exactly matching the execution times of GM and GPM, we argue that the differences in running times are unlikely to expose the backdoor. In particular, Figure~\ref{fig:hist-rt} compares the noise sampling times of GM and GPM under various DP-hist configurations; the results show that Gaussian pancake noise is sampled only 0.5--1.5x slower than standard Gaussian noise. We further note that this difference is a \textbf{pessimistic estimate} of GPM's covertness regarding this side channel. Specifically:

\begin{itemize}[leftmargin=*]
    \item Under the threat model described in Section~\ref{sec:setup-threat-model}, where the closed-source benign implementation is replaced by a malicious backdoored version in the software supply chain, the server lacks access to the benign software as a reference. Similarly, the server cannot obtain the exact implementation details of either the benign or the compromised software to derive an accurate estimate of execution time. Therefore, the 0.5--1.5x slowdown, which remains within the same order of magnitude, could be plausibly attributed to implementation variances or uncontrollable factors such as hardware load and cooling.
    \item The reported running time corresponds \textbf{only} to noise sampling, which is the sole source of the execution-time difference between GM and GPM. All running times other than noise sampling (e.g., the unperturbed query processing time) are excluded from Figure~\ref{fig:hist-rt} and are identical for GM and GPM. In particular, differences in execution times would be diluted by the unperturbed query execution and any applicable post-processing. Consequently, we anticipate that the differences in end-to-end execution times between the two mechanisms will be less significant. 
\end{itemize}

Also, unlike the implementation of GM, which enjoys low-level hardware accelerations, our implementation of GPM is not fully optimized at an industrial level. Specifically, GM was implemented using industrial-level APIs such as \texttt{torch.randn}, which have undergone hardware-level optimizations (e.g., compression into a single CUDA kernel and instruction set optimizations), whereas GPM relies on higher-level tensor operations (typically involving multiple CUDA kernel calls). Therefore, it is foreseeable that dedicated SCBPs could optimize GPM to achieve an even greater level of covertness with a smaller difference in running times.
\section{Related Work}

Numerical issues have historically been among the most common and exploitable software defects undermining the rigorous theoretical guarantees provided by differential privacy. In particular, though DP mechanisms are mathematically sound when formulated in the real-number domain, their floating-point implementations can subtly yet damagingly deviate from their original design, typically due to imprecisions in the least significant digits. Mironov~\cite{DBLP:conf/ccs/Mironov12} was the first to identify that floating-point imprecision in the Laplace mechanism led to unintended leakage, as the actual distribution of the least significant bits highly depends on the unperturbed query result and therefore leaks significantly more information about it than intended. As a result, privacy attacks were found to be successful against all four publicly available libraries evaluated. As a countermeasure, it was proposed to post-process the least significant bits to restore the intended DP guarantees. Similarly, Ilvento~\cite{DBLP:conf/ccs/Ilvento20} demonstrated related risks in the Exponential mechanism, where rounding errors can result in some outputs being assigned zero probability, effectively setting the privacy cost to $\epsilon = +\infty$. To address this, a base-2 variant was proposed to better align with hardware-level arithmetic in binary computers. More generally, Casacuberta et al.~\cite{DBLP:conf/ccs/CasacubertaSVW22} showed that such vulnerabilities are more widespread than previously believed, particularly in basic operations like summation, where arithmetic discrepancies can cause the true sensitivity to be underestimated and insufficient noise to be added. These results underscore the importance of secure and precise implementations, even for theoretically well-understood DP mechanisms.

Most relevant to our work, Jin et al.~\cite{DBLP:conf/sp/JinMRO22} found similar vulnerabilities in the Gaussian mechanism across multiple standard sampling algorithms used in the implementations of the Gaussian mechanism~\cite{Ziggurat,Marsaglia,DBLP:journals/tc/LeeVLL06}, in addition to previously studied timing attacks~\cite{DBLP:conf/uss/HaeberlenPN11, DBLP:conf/ccs/RatliffV24, DBLP:conf/sp/AndryscoKMJLS15}. In particular, similar to the Laplace mechanism, the least significant bits are also a source of unintended privacy leakage. Leveraging this vulnerability, they proposed distinguishing attacks that successfully exploited this software defect, which caused only a minor deviation from the theoretical design under which the originally sound DP guarantee holds. In response, patches have been introduced in IBM's differential privacy library~\cite{diffprivlib, DBLP:conf/ccs/HolohanB024} to obfuscate the least significant bits, which were shown to be especially vulnerable.

Goldwasser et al.~\cite{DBLP:conf/focs/GoldwasserKVZ22} presented the first application of the CLWE problem~\cite{clwe} for constructing undetectable backdoors in deep learning models. Their work introduced several cryptographically motivated backdoor mechanisms, including those based on digital signatures~\cite{DBLP:conf/stoc/GoldwasserMR85}, sparse PCA~\cite{DBLP:conf/colt/BerthetR13}, and CLWE. In the CLWE-based construction, the authors considered Gaussian features extracted from a randomly initialized fully connected layer that approximates the Gaussian kernel~\cite{DBLP:conf/nips/RahimiR07}. These features are then used to train a binary linear classifier. The backdoor is planted by replacing the Gaussian distribution with a CLWE distribution during initialization, such that modifying the input along a secret vector $\mathbf{w}$ flips the classifier's prediction. This construction highlights the potential for CLWE to be applied beyond learning theory~\cite{DBLP:conf/icml/DiakonikolasKR23, DBLP:conf/colt/DiakonikolasKPZ23, DBLP:conf/nips/DiakonikolasKRS23, DBLP:conf/iclr/ChenC0LSZ23, DBLP:conf/nips/ShahCK23}.

Formal methods for verifying differential privacy have evolved significantly from foundational work on type systems for sensitivity analysis~\cite{DBLP:conf/icfp/ReedP10}. These systems introduced a functional language in which types track function sensitivity, enabling automatic proofs of differential privacy. Building on this foundation, Duet~\cite{DBLP:journals/corr/abs-1909-02481} proposed an expressive higher-order language and linear type system for verifying the privacy of general-purpose higher-order programs. CheckDP~\cite{DBLP:conf/ccs/WangDKZ20} advanced automated verification by introducing a unified framework that can both prove and disprove differential privacy claims. It is capable of finding subtle bugs and generating counterexamples that elude prior techniques. More recently, Reshef et al.~\cite{DBLP:conf/vmcai/ReshefKSD24} introduced local differential classification privacy (LDCP), a new privacy notion designed for black-box classifiers, extending the idea of local robustness into the DP domain. These efforts have inspired ongoing work in the formal auditing of modern DP systems~\cite{DBLP:conf/ccs/LoknaPDV23, DBLP:conf/nips/PillutlaAKMOO23} in practical deployments.

Beyond formal verification of DP programs, a growing body of work has focused on making the execution of DP mechanisms themselves verifiable. This includes proof-carrying implementations of DP mechanisms in the Fuzz programming language~\cite{vfuzz}, as well as verifiable instantiations of randomized response~\cite{KCY21} and the floating-point Gaussian mechanism used in DP-SGD~\cite{DBLP:conf/iclr/ShamsabadiTCBHP24}. With the increasing reliance on distributed data analysis and secure multiparty computation, verifiability has also been explored in collaborative settings. Notable examples include verifiable DP protocols for Prio~\cite{dprio}, the Binomial mechanism~\cite{BC23}, and the discrete Laplace mechanism~\cite{vddp}.
\section{Conclusion} \label{sec:concl}

This study introduces the Gaussian pancake mechanism (GPM), a novel and practical construction of a backdoor attack against the Gaussian mechanism (GM), one of the most prevalent DP mechanisms in industrial applications. We have shown that, for a backdoor attacker with knowledge of the backdoor key, the privacy guarantee degrades arbitrarily from both theoretical and empirical perspectives, while the attack remains undetectable by the server and third parties. In light of these findings, we advocate for the use of open-source DP libraries, formal verification techniques, and cryptographic safeguards to mitigate such risks. Future research may further explore stronger yet more elusive adversarial manipulations of DP mechanisms that \textbf{1)} could result in exacerbated privacy loss (e.g., full recovery of the unperturbed query result $q\left(D\right)$, the input database $D$, or an end-to-end privacy attack against DP-SGD), and \textbf{2)} are strictly computationally undetectable, not only via checking the mechanism's outputs but also through side channels such as execution times. The purpose of developing such attacks should not be to weaken privacy, but to advance the robustness, verifiability, and trustworthiness of privacy-preserving technologies. To this end, we also encourage future work on more robust countermeasures that extend the noise rotation technique to reduce reliance on trusted external randomness, as well as the design of more efficient verification protocols to certify the absence of backdoors in practical, large-scale deployments.

\bibliographystyle{IEEEtran}
\bibliography{reference}

\ifdefined\includeappendix
    \appendix 
    \section{Further Details on Problem Setup}\label{app:setup}

\subsection{Computational DP Guarantee Against the RPA} \label{app:setup-regular-attacker-cdp}

To the RPA without knowledge of the secret $\mathbf{w}$, the privacy guarantee of GPM can be captured using \textbf{computational differential privacy}, as formalized in Definition~\ref{def-app:ind-cdp}. Compared with the (statistical) DP guarantee, which holds against computationally unbounded adversaries, Definition~\ref{def-app:ind-cdp} only holds against PPT adversaries. Here, we use $\kappa$ as the generic security parameter. 

\begin{Def}[IND-CDP~\cite{DBLP:conf/crypto/MironovPRV09}]\label{def-app:ind-cdp}
    Given $\epsilon: \N \to \R_+$, a family of mechanisms $\left\{\M_\kappa: \D_\kappa \to \Y_\kappa \right\}_{\kappa\in \N}$ is \emph{$\epsilon$}-IND-CDP if and only if for any family of PPT algorithms $\left\{\adv_\kappa\right\}_{\kappa\in \N}$, there exists $\mu(\kappa)\in \negl[\kappa]$ such that for any pair of neighbouring databases $D_\kappa\simeq D_\kappa'$, \begin{equation}
        \Pr\left[\adv_\kappa\left(\M_\kappa\left(D_\kappa\right)\right)=1\right]\leq e^{\epsilon(\kappa)}\Pr\left[\adv_\kappa\left(\M_\kappa\left(D_\kappa'\right)\right)=1\right] + \mu(\kappa).
    \end{equation}
\end{Def}

To facilitate the security analysis, we present in Definition~\ref{def-app:dp-negl} an extended version of differential privacy, involving a family of mechanisms and negligible $\delta$. 

\begin{Def}[differential privacy (extended)~\cite{DBLP:journals/fttcs/DworkR14, DBLP:conf/tcc/DworkMNS06, DBLP:conf/crypto/MironovPRV09, DBLP:conf/tcc/GroceKY11}]\label{def-app:dp-negl}
    Given $\epsilon: \N \to \R_+$ and $\delta: \N\to\left[0, 1\right]$, a family of mechanisms $\left\{\M_\kappa: \D_\kappa \to \Y_\kappa \right\}_{\kappa\in \N}$ is \emph{$\left(\epsilon, \delta\right)$-differentially private} if, for any pair of neighbouring databases $D_\kappa\simeq D_\kappa'$ that differ in a single row, and any measurable set $S_\kappa\subseteq \Y_\kappa$, 
    \begin{equation}
        \prob{\mathcal{M}_\kappa\left(D_\kappa\right) \in S_\kappa}\leq e^{\epsilon(\kappa)}\prob{\M_\kappa\left(D_\kappa'\right) \in S_\kappa} + \delta(\kappa).
    \end{equation} 
    If $\delta(\kappa)\in \negl[\kappa]$, we call $\left\{\M_\kappa\right\}_{\kappa\in \N}$ $\left(\epsilon, \sf{negl}\right)$-DP.
\end{Def}

GM also satisfies $\left(\epsilon, \sf{negl}\right)$-DP under Definition~\ref{def-app:dp-negl}, as formalized in Proposition~\ref{prop-app:gm-dp-negl}.

\begin{Prop}\label{prop-app:gm-dp-negl}
    Consider a family of Gaussian mechanisms $\left\{\M_\kappa: \D_\kappa \to \R^{d_\kappa}\right\}_{\kappa\in \N}$ with sensitivities $\left\{\Delta_\kappa\right\}_{\kappa\in \N}$ and variances $\left\{\sigma_\kappa^2I_{d_\kappa}\right\}_{\kappa\in \N}$. With $\delta(\kappa)=2^{-\kappa}$, $\left\{\M_\kappa\right\}_{\kappa\in \N}$ is $\left(\epsilon, \sf{negl}\right)$-DP, where $\epsilon(\kappa)\in \frac{\Delta_\kappa^2}{2\sigma_\kappa^2} + \frac{\Delta_\kappa}{\sigma_\kappa}\bigTheta{\sqrt{\kappa}}$.
\end{Prop}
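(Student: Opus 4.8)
The plan is to obtain this statement as a direct corollary of Theorem~\ref{thm:gm-dp}, instantiated at $\delta(\kappa) = 2^{-\kappa}$ for every $\kappa$. Fixing $\kappa$, Theorem~\ref{thm:gm-dp} says that $\M_\kappa$, with sensitivity $\Delta_\kappa$ and noise scale $\sigma_\kappa$, is $(\epsilon,\delta)$-DP whenever
\begin{equation}
    \epsilon = \frac{\Delta_\kappa^2}{2\sigma_\kappa^2} - \frac{\Delta_\kappa}{\sigma_\kappa}\Phi^{-1}(\delta),
\end{equation}
and $\delta = 2^{-\kappa}\in(0,1)$ is an admissible choice. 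Writing $x_\kappa := -\Phi^{-1}(2^{-\kappa}) = \Phi^{-1}(1 - 2^{-\kappa}) > 0$, this makes $\M_\kappa$ be $(\epsilon,\delta)$-DP with $\epsilon = \frac{\Delta_\kappa^2}{2\sigma_\kappa^2} + \frac{\Delta_\kappa}{\sigma_\kappa}x_\kappa$ and $\delta = 2^{-\kappa}$. Since $2^{-\kappa}\in\negl[\kappa]$, Definition~\ref{def-app:dp-negl} then certifies that $\{\M_\kappa\}_{\kappa\in\N}$ is $(\epsilon,\sf{negl})$-DP, so the whole proposition reduces to showing $x_\kappa\in\bigTheta{\sqrt{\kappa}}$.

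To pin down $x_\kappa$ --- the standard-normal quantile at which the upper tail $\Phi(-x)$ equals $2^{-\kappa}$ --- I would sandwich it between two explicit constant multiples of $\sqrt{\kappa}$ using classical non-asymptotic Gaussian tail inequalities. For the upper bound I would use $\Phi(-x)\le \frac{1}{2} e^{-x^2/2}$ for $x\ge 0$: at $x = \sqrt{2\kappa\ln 2}$ the right side equals $\frac{1}{2}\cdot 2^{-\kappa} < 2^{-\kappa}$, and since $\Phi(-\cdot)$ is strictly decreasing this forces $x_\kappa \le \sqrt{2\kappa\ln 2} = O(\sqrt{\kappa})$. For the matching lower bound I would use the Mills-ratio inequality $\Phi(-x)\ge \frac{1}{\sqrt{2\pi}}\cdot\frac{x}{x^2+1}\cdot e^{-x^2/2}$ for $x>0$; taking logarithms of $2^{-\kappa} = \Phi(-x_\kappa)$ and bounding $\frac{x_\kappa}{x_\kappa^2+1}\ge \frac{1}{2x_\kappa}$ (valid once $x_\kappa\ge 1$, which holds for large $\kappa$) gives $\frac{x_\kappa^2}{2}\ge \kappa\ln 2 - \ln x_\kappa - O(1)$. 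Substituting the already-established $x_\kappa\le\sqrt{2\kappa\ln 2}$, so that $\ln x_\kappa = O(\ln\kappa)$, yields $x_\kappa \ge (1 - o(1))\sqrt{2\ln 2}\,\sqrt{\kappa} = \Omega(\sqrt{\kappa})$. Combining the two, $x_\kappa\in\bigTheta{\sqrt{\kappa}}$ (indeed $x_\kappa\sim\sqrt{2\ln 2}\,\sqrt{\kappa}$), which is exactly what was needed.

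Everything else is bookkeeping: plugging $x_\kappa = -\Phi^{-1}(2^{-\kappa})$ back into the displayed formula for $\epsilon$, noting $\epsilon > 0$ since $\Phi^{-1}(2^{-\kappa}) < 0$, and appealing to Definition~\ref{def-app:dp-negl} because $2^{-\kappa}$ is negligible. The one place requiring care is the \emph{lower} tail estimate: the familiar sub-Gaussian bound $\Phi(-x)\le\frac{1}{2} e^{-x^2/2}$ only controls $x_\kappa$ from above, and to conclude $\bigTheta{\sqrt{\kappa}}$ rather than merely $O(\sqrt{\kappa})$ one must invoke a Mills-ratio-type lower bound on the Gaussian tail to rule out $x_\kappa = o(\sqrt{\kappa})$. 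I expect this to be the main (and only mildly subtle) obstacle; the rest follows immediately from the results already established.
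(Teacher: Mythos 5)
Your proposal is correct and takes the natural (and evidently intended) route: the paper states Proposition~\ref{prop-app:gm-dp-negl} without proof, leaving it as an immediate corollary of Theorem~\ref{thm:gm-dp} instantiated at $\delta(\kappa)=2^{-\kappa}$, which is exactly what you do. Your two-sided tail estimate pinning down $-\Phi^{-1}(2^{-\kappa})\in\bigTheta{\sqrt{\kappa}}$ (sub-Gaussian upper bound plus Mills-ratio lower bound) is sound and is the only non-trivial step, so nothing is missing.
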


To the RPA who does not know the value of the sampled $\mathbf{w}$, GPM satisfies $\epsilon$-IND-CDP with the same $\epsilon$ as Proposition~\ref{prop-app:gm-dp-negl}. That is, GPM and GM achieve the same level of privacy guarantee against the RPA, as the RPA does not have any non-negligible advantage in computing extra information from GPM's output compared to GM's output. 

\begin{Lem}\label{lem-app:gpm-ind-cdp}
    Given a family of queries $\left\{q_d: \D_d \to \R^d\right\}_{d\in \N}$ with sensitivities $\left\{\Delta_d\right\}_{d\in \N}$, consider the corresponding family of GPMs $\left\{\mathcal{M}_{\sigma_d, \mathbf{w}_d, \beta(d), \gamma(d)}: \D_d \to \R^d\right\}_{d\in \N}$. If $2\sqrt{d}\leq \gamma(d) \leq d^{O(1)}$ and $\beta(d)= d^{-O(1)}$, then $\left\{\mathcal{M}_{\sigma_d, \mathbf{w}_d, \beta(d), \gamma(d)}\right\}_{d\in \N}$ is $\epsilon$-IND-CDP with $\epsilon(d)\in \frac{\Delta_d^2}{2\sigma_d^2} + \frac{\Delta_d}{\sigma_d}\bigTheta{\sqrt{d}}$, under the assumption that $\mathbf{SIVP}\notin \sf{BQP}$ or $\mathbf{GapSVP}\notin \sf{BQP}$ \cite{clwe,DBLP:conf/focs/GoldwasserKVZ22}, for uniformly randomly sampled $\mathbf{w}_d\sample \mathbb{S}^{d-1}$.
\end{Lem}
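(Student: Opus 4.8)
The plan is to transfer the $(\epsilon,\sf{negl})$-DP guarantee that the genuine Gaussian mechanism enjoys (Proposition~\ref{prop-app:gm-dp-negl}) to an $\epsilon$-IND-CDP guarantee for GPM, using the computational indistinguishability of a single GPM output from a single GM output on the same database, which in turn reduces to Theorem~\ref{thm:hclwe-ind-c} with sample size $n(d)=1$. As a preliminary step I would record the elementary fact that every $(\epsilon,\delta)$-DP family $\{\M_\kappa\}$ satisfies, for any (possibly randomized) PPT $\adv$, $\Pr[\adv(\M_\kappa(D_\kappa))=1]\le e^{\epsilon(\kappa)}\Pr[\adv(\M_\kappa(D_\kappa'))=1]+\delta(\kappa)$ (fix the coins of $\adv$, apply Definition~\ref{def-app:dp-negl} to its acceptance event, and then average over those coins); applying this with $\delta(\kappa)=2^{-\kappa}$ to Proposition~\ref{prop-app:gm-dp-negl}, the Gaussian family $\{\M^*_{\sigma_d}\}$ already obeys this inequality with the target $\epsilon(d)\in\frac{\Delta_d^2}{2\sigma_d^2}+\frac{\Delta_d}{\sigma_d}\bigTheta{\sqrt{d}}$ and negligible additive slack $2^{-d}$.

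Next I would establish the indistinguishability ingredient. Fixing any neighbouring pair $D\simeq D'$ as advice, I would build from any PPT $\adv_d$ a distinguisher $\mathcal{B}_d$ for the hCLWE-versus-Gaussian problem of Theorem~\ref{thm:hclwe-ind-c}: given one challenge sample $\mathbf{r}$, drawn either from $\mathcal{N}(\mathbf{0},I_d)$ or, for a freshly sampled $\mathbf{w}_d\sample\mathbb{S}^{d-1}$, from $\sqrt{2\pi}\,\mathcal{H}_{\mathbf{w}_d,\beta(d),\gamma(d)}$, the distinguisher returns $\adv_d(q_d(D)+\sigma_d\mathbf{r})$. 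By Definitions~\ref{def:gm} and~\ref{def:gpm}, $q_d(D)+\sigma_d\mathbf{r}$ has exactly the law of $\M^*_{\sigma_d}(D)$ in the Gaussian case and exactly the law of $\M_{\sigma_d,\mathbf{w}_d,\beta(d),\gamma(d)}(D)$ (with its key drawn as in the lemma) in the hCLWE case, so the advantage of $\mathcal{B}_d$ is $\bigl|\Pr[\adv_d(\M^*_{\sigma_d}(D))=1]-\Pr[\adv_d(\M_{\sigma_d,\mathbf{w}_d,\beta(d),\gamma(d)}(D))=1]\bigr|$. Since the lemma's hypotheses $2\sqrt{d}\le\gamma(d)\le d^{O(1)}$ and $\beta(d)=d^{-O(1)}$ are precisely those of Theorem~\ref{thm:hclwe-ind-c} (take $n(d)=1$), this advantage is some $\mu_D(d)\in\negl[d]$, and the same construction with $D'$ yields a negligible $\mu_{D'}(d)$.

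Finally I would chain the three bounds. Suppose toward a contradiction that $\{\M_{\sigma_d,\mathbf{w}_d,\beta(d),\gamma(d)}\}$ is not $\epsilon$-IND-CDP; then there exist a PPT family $\{\adv_d\}$, a polynomial $p$, and infinitely many $d$ with a neighbouring pair $D_d\simeq D_d'$ at which the IND-CDP inequality is violated by more than $1/p(d)$. Hard-coding this sequence of pairs and combining the two steps above gives $\Pr[\adv_d(\M_{\sigma_d,\mathbf{w}_d,\beta(d),\gamma(d)}(D_d))=1]\le e^{\epsilon(d)}\Pr[\adv_d(\M_{\sigma_d,\mathbf{w}_d,\beta(d),\gamma(d)}(D_d'))=1]+e^{\epsilon(d)}\mu_{D_d'}(d)+2^{-d}+\mu_{D_d}(d)$, so the violation is at most $e^{\epsilon(d)}\mu_{D_d'}(d)+2^{-d}+\mu_{D_d}(d)$, contradicting $1/p(d)$ once this quantity is shown to be negligible. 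The step I expect to need the most care is exactly this last claim, because of the factor $e^{\epsilon(d)}$ multiplying a negligible term: it is immediate whenever $\epsilon(d)=O(\log d)$ (equivalently, once $\sigma_d$ is at least the natural noise scale, so that $e^{\epsilon(d)}$ is polynomially bounded), and in general it follows from the standard sub-exponential strengthening of the $\mathbf{SIVP}/\mathbf{GapSVP}$ hardness assumption, under which the hCLWE-versus-Gaussian advantage decays faster than $e^{-\epsilon(d)}$ times every inverse polynomial. A secondary, routine bookkeeping point is the passage to a non-uniform reduction, so that one negligible function bounds the distinguishing advantage uniformly over all neighbouring pairs at each $d$.
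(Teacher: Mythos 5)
Your proposal follows essentially the same route as the paper's own proof, but unfolded. The paper argues: by Theorem~\ref{thm:gpm-covert}, GPM output is computationally indistinguishable from GM output on the same databases; GM is $(\epsilon,\sf{negl})$-DP by Proposition~\ref{prop-app:gm-dp-negl}; therefore GPM is $\epsilon$-SIM$_{\forall\exists}$-CDP with GM as the simulator, and hence $\epsilon$-IND-CDP, citing Mironov et al.\ and Groce et al.\ as a black box for that last implication. Your hybrid argument — three inequalities chained through the GM hybrid, with the two indistinguishability legs instantiated directly from Theorem~\ref{thm:hclwe-ind-c} at $n(d)=1$ and the middle leg from Proposition~\ref{prop-app:gm-dp-negl} — is precisely what that cited implication does under the hood, so this is the same decomposition, not a different one. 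The genuinely valuable part of your write-up is that you surfaced what the paper's one-line citation glosses over: the middle hybrid contributes an additive slack of $e^{\epsilon(d)}\mu_{D'}(d) + 2^{-d} + \mu_D(d)$, and the term $e^{\epsilon(d)}\mu_{D'}(d)$ is only automatically negligible when $e^{\epsilon(d)}$ is polynomially bounded, i.e.\ $\epsilon(d)=O(\log d)$. Since the lemma as stated permits $\Delta_d/\sigma_d = \Theta(1)$ and therefore $\epsilon(d)=\Theta(\sqrt{d})$, this is a real subtlety and not a cosmetic one; the works the paper cites establish SIM-CDP $\Rightarrow$ IND-CDP in regimes where $e^{\epsilon}$ is polynomially bounded, so the paper's proof inherits exactly the gap you identified. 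Your proposed remedies (restrict to $\epsilon(d)=O(\log d)$, or assume a sub-exponentially strong form of the $\mathbf{SIVP}/\mathbf{GapSVP}$ hardness so that $\mu(d)$ beats $e^{-\epsilon(d)}$) are both reasonable ways to close it. Your "secondary bookkeeping" worry about a single negligible bound uniform over all neighbouring pairs is not actually an issue: Theorem~\ref{thm:gpm-covert} already quantifies the negligible function after the adversary family and before the choice of databases, and in your direct reduction it follows because the non-uniform advice $(D_d,D'_d)$ is of polynomial size and Theorem~\ref{thm:hclwe-ind-c} holds against non-uniform PPT families.
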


\begin{proof}[Proof of Lemma~\ref{lem-app:gpm-ind-cdp}]
    By Theorem~\ref{thm:gpm-covert}, given any sequence of input databases $\left\{D_\kappa\right\}_{\kappa\geq 0}$, $\left\{\mathcal{M}_\kappa\left(D_\kappa\right)\right\}$ and $\left\{\mathcal{M}_\kappa\left(D_\kappa'\right)\right\}$ are computationally indistinguishable. Therefore, by Mironov et al.~\cite{DBLP:conf/crypto/MironovPRV09} and Groce et al.~\cite{DBLP:conf/tcc/GroceKY11}, GPM satisfies $\epsilon$-SIM\textsubscript{$\forall\exists$}-CDP, and therefore $\epsilon$-IND-CDP.
\end{proof}

\subsection{Deferred Proofs}

\begin{proof}[Proof of Theorem~\ref{thm:gpm-covert}]
    Assume the opposite, i.e., there exists a sequence $\left(D_{d, i}\right)_{d\in \N, 1\leq i \leq n(d)}$ such that the difference in Inequality~\eqref{eq:gpm-covert} is lower bounded by $\mu^*(d)\not\in\negl[\lambda]$. Therefore, the adversary $\adv_d'\left(\mathbf{r}_1, \mathbf{r}_2, \dots, \mathbf{r}_{n(d)}\right)$ (Figure~\ref{fig:gpm-covert-proof-adv}) can efficiently distinguish between samples from $\mathcal{N}\left(\mathbf{0}, I_d\right)$ and $\sqrt{2\pi}\mathcal{H}_{\mathbf{w}_d, \beta(d), \gamma(d)}$, i.e., \begin{multline}
        \left|\prob{\adv_d'\left(\mathbf{r}_1, \mathbf{r}_2, \dots, \mathbf{r}_{n(d)}\right):\\\mathbf{r}_1, \mathbf{r_2}, \dots, \mathbf{r}_n\sample \mathcal{N}\left(\mathbf{0}, I_d\right)}-\prob{\adv_d'\left(\mathbf{r}_1, \mathbf{r}_2, \dots, \mathbf{r}_{n(d)}\right):\\\mathbf{w}_d\sample \mathbb{S}^{d-1}\\\mathbf{r}_1, \mathbf{r_2}, \dots, \mathbf{r}_n\sample \sqrt{2\pi}\mathcal{H}_{\mathbf{w}_d, \beta(d), \gamma(d)}}\right|\geq \mu^*\left(d\right),
    \end{multline} which violates the hardness assumption of the hCLWE problem.

    \begin{figure}
        \centering
        \procedureblock[space=auto,linenumbering]{$\adv_d'\left(\mathbf{r}_1, \mathbf{r}_2, \dots, \mathbf{r}_{n(d)}\right)$}{\pcfor 1\leq i \leq n(d)\\
           \mathbf{y}_i \gets q\left(D_{d,i}\right) + \sigma(d)\mathbf{r}_i\\
        \pcendfor\\
        \pcreturn \adv_d\left(\mathbf{y}_1, \mathbf{y}_2, \dots, \mathbf{y}_{n(d)}\right)}
        \caption{Definition of $\adv_d'\left(\mathbf{r}_1, \mathbf{r}_2, \dots, \mathbf{r}_{n(d)}\right)$.}
        \label{fig:gpm-covert-proof-adv}
    \end{figure}
\end{proof}
    \section{Further Privacy Analysis on GPM}\label{app:privacy}

\subsection{Proof of Key Lemma} \label{app:privacy-one-dimensional-hclwe-diff}

\begin{proof}[Proof of Lemma~\ref{lem:one-dimensional-hclwe-diff}]
    Observe that for any pair $z\neq z'$, $I_z(t)\cap I_{z'}(t) = \emptyset$. Therefore, by abbreviating $a_{\beta, \gamma, z}$ as $a_z$,
    {\allowdisplaybreaks\begin{align}
        ~ & \prob{Y \in A(t)}\\
        = & \sum_{z\in \Z}\int_{I_z(t)}\psi_{1, \beta, \gamma}(y)\,dy\\
        \geq & \sum_{z\in \Z}\int_{I_z(t)}a_z\phi\left(\mathbf{y}; \frac{\gamma z}{\beta^2 + \gamma^2}, \frac{1}{2\pi}\cdot\frac{\beta^2}{\beta^2 + \gamma^2}\right)\,dy\\
        = & \sum_{z\in \Z} a_z \int_{\frac{\gamma \left(z-\frac{\abs{t}}{2}\right)}{\beta^2 + \gamma^2}}^{\frac{\gamma \left(z+\frac{\abs{t}}{2}\right)}{\beta^2 + \gamma^2}}\phi\left(\mathbf{y}; \frac{\gamma z}{\beta^2 + \gamma^2}, \frac{1}{2\pi}\cdot\frac{\beta^2}{\beta^2 + \gamma^2}\right)\,dy\\
        = & \sum_{z\in \Z}a_z \left(1 - 2\Phi\left(-\frac{\gamma\abs{t}}{\beta}\sqrt{\frac{\pi}{2\left(\beta^2 + \gamma^2\right)}}\right)\right)\\
        = & 1 - 2\Phi\left(-\frac{\gamma\abs{t}}{\beta}\sqrt{\frac{\pi}{2\left(\beta^2 + \gamma^2\right)}}\right). \label{eq:hclwe-peak}
    \end{align}}

    Additionally, observe that for any $T \in \Z$ and $t \in [-0.5, 0) \cup (0, 0.5)$, $A(t) - T - t \cap A(t) = \emptyset$. Therefore,
    \begin{align}
        \prob{Y' \in A(t)} =& \prob{Y \in A(t) - T - t}\\
        \leq & 1 - \prob{Y \in A(t)}\\
        \leq & 2\Phi\left(-\frac{\gamma\abs{t}}{\beta}\sqrt{\frac{\pi}{2\left(\beta^2 + \gamma^2\right)}}\right),
    \end{align}
    completing the proof.
\end{proof}

\subsection{Further Details on the Lower Bound} \label{app:privacy-lb}

\begin{proof}[Proof of Theorem~\ref{thm:gpm-dp-lb}]
    Let $\mathbf{Y} \sim \mathcal{M}_{\sigma, \mathbf{w}, \beta, \gamma}\left(D\right)$ and $\mathbf{Y}' \sim \mathcal{M}_{\sigma, \mathbf{w}, \beta, \gamma}\left(D'\right)$ be two multivariate random variables. By projecting onto $\mathbf{w}$,
    \begin{align}
        \frac{\left(\mathbf{Y} - q(D)\right)^\top \mathbf{w}}{\sqrt{2\pi}\sigma} &\sim \mathcal{H}_{1, \beta, \gamma},\\
        \frac{\left(\mathbf{Y}' - q(D)\right)^\top \mathbf{w}}{\sqrt{2\pi}\sigma} &\sim \mathcal{H}_{1, \beta, \gamma} + \left(T + t\right)\frac{\gamma}{\beta^2 + \gamma^2}.
    \end{align}

    Using the same notation as in Lemma~\ref{lem:one-dimensional-hclwe-diff}, define
    \begin{equation}
        S(D, t) := \bigcup_{z\in \Z} \left\{ \mathbf{y} :\frac{\left(\mathbf{y} - q(D)\right)^\top \mathbf{w}}{\sqrt{2\pi} \sigma}\in I_z(t)\right\},\label{eq:set-prob-mass-concentrated}
    \end{equation}
    then by Lemma~\ref{lem:one-dimensional-hclwe-diff},
    \begin{align}
        \prob{\mathbf{Y}\in S\left(D, t\right)} \geq& 1 - 2\Phi\left(-\frac{\gamma\abs{t}}{\beta}\sqrt{\frac{\pi}{2\left(\beta^2 + \gamma^2\right)}}\right),\\
        \prob{\mathbf{Y}'\in S\left(D, t\right)} \leq& 2\Phi\left(-\frac{\gamma\abs{t}}{\beta}\sqrt{\frac{\pi}{2\left(\beta^2 + \gamma^2\right)}}\right).
    \end{align}
    Therefore, if $\prob{\mathbf{Y}\in S\left(D, t\right)} \leq e^{\epsilon} \prob{\mathbf{Y}'\in S\left(D, t\right)}+\delta$, Inequality \eqref{eq:gpm-dp-lb} cannot hold.
\end{proof}

We rely on Lemma~\ref{lem-app:sqrt-beta}~\cite{DBLP:journals/jmlr/CaiFJ13} to analyze the lower bound of GPM's privacy cost under the randomness of $\mathbf{w}$.

\begin{Lem}\label{lem-app:sqrt-beta}
    For fixed $\mathbf{u}\in \mathbb{S}^{d-1}$ and random variable $\mathbf{V}\sim \mathbb{S}^{d-1}$, the distribution of $T := \mathbf{u}^\top\mathbf{V}$ (denoted $\mathcal{P}_t$) has p.d.f. \begin{equation}
        f_T(t) = \frac{\Gamma\left(\frac{d}{2}\right)}{\sqrt{\pi} \Gamma\left(\frac{d-1}{2}\right)} \left(1 - t^2\right)^{\frac{d-3}{2}},
    \end{equation}
    such that $\mathbb{E}(T) = 0$ and $\text{Var}(T) = \frac{1}{d}$.~\cite{DBLP:journals/jmlr/CaiFJ13}
\end{Lem} 

\begin{proof}[Proof of Proposition~\ref{prop:gpm-dp-lb}]
    By definition, there exists a pair of neighbouring databases $D\simeq D'$ such that $\norm{q\left(D\right) - q\left(D'\right)} = \Delta$. By Lemma~\ref{lem-app:sqrt-beta}, for randomly sampled $\mathbf{w} \sample \mathbb{S}^{d-1}$,
    \begin{equation}
        \mathbf{w}^\top (q(D') - q(D)) \sim \Delta \cdot \mathcal{P}_T,
    \end{equation}
    which has range $[-\Delta, \Delta]$, mean $0$, and variance $\frac{\Delta^2}{d}$.

    Given $\gamma \gg \sqrt{d} \cdot \frac{\sigma}{\Delta}$ and $\beta \ll 1$, we have $\frac{\Delta}{\sqrt{d}} \gg \sqrt{2\pi} \sigma \cdot \frac{\gamma}{\beta^2 + \gamma^2}$. By Weyl's Equidistribution Theorem,
    \begin{equation}
        \Pr_{\mathbf{w} \sample \mathbb{S}^{d-1}} \left[ \abs{t} \geq 0.25 \right] \in 0.5 + o(1),
    \end{equation}
    in which case $\mathcal{M}_{\sigma, \mathbf{w}, \beta, \gamma}$ is not $\left(\epsilon, \delta\right)$-DP for
    \begin{equation}
        \epsilon < \log\left( \frac{1-\delta}{2 \Phi\left( -\frac{\gamma}{4\beta} \sqrt{\frac{\pi}{2(\beta^2 + \gamma^2)}} \right)} - 1 \right) \in \Theta\left( \frac{1}{\beta^2} \right)
    \end{equation} and any $0 < \delta \leq 0.5$.
\end{proof}

\subsection{Further Details on the Upper Bound} \label{app:privacy-ub}

\begin{proof}[Proof of Theorem~\ref{thm:gpm-dp-ub}]
    Consider 
    \begin{equation}
        \mathcal{G}_{\sigma, \mathbf{w}, \beta, \gamma}(D) := q(D) + \sqrt{2\pi} \sigma \cdot \mathcal{N}\left(\mathbf{0}, \Sigma^*_{\mathbf{w}, \beta, \gamma}\right),
    \end{equation} 
    for any pair of neighbouring databases $D\simeq D'$. The distributions of $\mathcal{G}_{\sigma, \mathbf{w}, \beta, \gamma}(D)$ and $\mathcal{G}_{\sigma, \mathbf{w}, \beta, \gamma}(D')$ are
    \begin{align}
        \mathcal{N}\left(q(D), \sigma^2\left(I - \frac{\gamma^2}{\beta^2 + \gamma^2} \mathbf{w} \mathbf{w}^\top\right)\right),\\
        \mathcal{N}\left(q(D'), \sigma^2\left(I - \frac{\gamma^2}{\beta^2 + \gamma^2} \mathbf{w} \mathbf{w}^\top\right)\right),
    \end{align} 
    respectively. Define 
    \begin{equation}
        q'(D) := \left(I + \left(\frac{\sqrt{\beta^2 + \gamma^2}}{\beta} - 1\right) \mathbf{w} \mathbf{w}^\top\right) q(D)
    \end{equation} 
    and let $\mathcal{G}_\sigma'(D) := \mathcal{N}(q'(D), \sigma^2 I)$, such that 
    \begin{align}
        ~& \norm{q'(D) - q'(D')}^2 \\
        =& \norm{q'(D) - q'(D')}^2 + \frac{\gamma^2}{\beta^2} \left( \mathbf{w}^\top (q'(D) - q'(D')) \right)^2 \\
        \leq& \norm{q'(D) - q'(D')}^2 + \frac{\gamma^2}{\beta^2} \norm{q'(D) - q'(D')}^2.
    \end{align}

    Therefore, the sensitivity of $q'$ is at most $\frac{\sqrt{\beta^2 + \gamma^2}}{\beta} \Delta$, and by Theorem~\ref{thm:gm-dp}, $\mathcal{G}_\sigma'(\cdot)$ is $\left(\epsilon, \delta\right)$-DP for any $0 < \delta \leq 0.5$ and \begin{equation}
        \epsilon = \frac{\left(\beta^2+\gamma^2\right)\Delta^2}{2\beta^2\sigma^2} - \frac{\sqrt{\beta^2 + \gamma^2}\Delta}{\beta\sigma}\Phi^{-1}\left(\delta\right).
    \end{equation} Moreover, note that for any $D \in \mathcal{D}$,
    \begin{equation}
        \mathcal{G}_{\sigma, \mathbf{w}, \beta, \gamma}(D) \equiv \left(I - \left(1 - \frac{\beta}{\sqrt{\beta^2+\gamma^2}}\right) \mathbf{w} \mathbf{w}^\top\right) \mathcal{G}_\sigma'(D).
    \end{equation}
    Therefore, by robustness against post-processing, $\mathcal{G}_{\sigma, \mathbf{w}, \beta, \gamma}$, and hence $\M_{\sigma,\mathbf{w}, \beta,\gamma}$, satisfy the same DP guarantee.
\end{proof}
    \section{Additional Mitigation via Distributed Gaussian Mechanism}\label{app:mit-distributed}

Assume there are $N$ semi-honest servers, each possessing a local database $D_i \left(1\leq i \leq N\right)$, such that a query $q: \D \to \R^d$ can be computed by having each server $i$ compute a local query $q^{(i)}: \D_i \to \R^d$ on $D_i$ and summing the results:
\begin{equation}
    q\left(D_1\Vert D_2\Vert\dots \Vert D_N\right)=\sum_{i=1}^N q^{(i)}(D_i).
\end{equation}
Note that histogram queries and the gradient computation in DP-SGD both satisfy this assumption. To achieve DP, it is common practice to execute the \emph{distributed Gaussian mechanism}, i.e., for each server to independently perturb its local query result with Gaussian noise~\cite{DBLP:conf/eurocrypt/DworkKMMN06}:
\begin{equation}
    \mathcal{M}^{(i)}_\sigma\left(D\right):= \mathcal{N}\left(q^{(i)}\left(D_i\right), \sigma^2 I_d\right),
\end{equation}
and aggregate the results using secure aggregation (SecAgg~\cite{secagg}) based on $t$-out-of-$N$ secret sharing~\cite{DBLP:journals/cacm/Shamir79}.

We consider a scenario where \textbf{a subset of $N_b$ servers (indexed by $i_b$) has been backdoored} by the SCBP (e.g., they have downloaded the compromised software) such that they each execute $\mathbf{y}_{i_b}\sample \M^{\left(i_b\right)}_{\sigma, \mathbf{w}, \beta, \gamma}\left(D_{i_b}\right)$, while another subset of $N_c$ servers (disjoint from the backdoored servers, indexed by $i_c$) colludes with the SCBP and the BPA and shares knowledge of the backdoor key $\mathbf{w}$. WLOG, we assume that $1\leq i_b \leq N_b < N - N_c +1 \leq i_c \leq N$ to simplify the discussion. We analyze the privacy loss of each backdoored server $i_b$. For simplicity, we represent the level of privacy protection by the total scale of all added Gaussian noise, which translates to the privacy guarantees defined in Theorems~\ref{thm:gm-dp} and \ref{thm:gpm-covert}.

If $N_c < t$, the $N_c$ colluding servers gain no information about the other servers' outputs except for their sum. Specifically, their view is equivalent to the procedure outlined in Figure~\ref{fig:colluding-server-view}, which is further equivalent to
\begin{equation}
    \mathcal{N}\left(\sum_{i=1}^{N-N_c}q^{(i)}\left(D_i\right), \left(N-N_b-N_c\right)\sigma^2 I_d\right)
\end{equation}
post-processed with the hCLWE noise in Line~\ref{pcln:distributed-adv-view-hclwe}. In other words, the effective noise scale for privacy protection is $\sqrt{N-N_b-N_c}\sigma$, compared with $\sqrt{N-N_b}\sigma$ when no backdoor is installed.

\begin{figure}
    \centering
    \procedureblock[space=auto, linenumbering]{$\sf{View}\left(D_1\Vert D_2\Vert \dots \Vert D_{N-N_c}; \sf{pp}\right)$}{
        \mathbf{y}\gets \sum_{i=1}^{N-N_c}q^{(i)}\left(D_i\right) \\
        \pcfor N_b + 1\leq i \leq N-N_c \pcdo\\
            \mathbf{y}\sample \mathbf{y} + \mathcal{N}\left(\mathbf{0}, \sigma^2 I_d\right)\\
        \pcendfor\\
        \pcfor 1\leq i_b \leq N_b \pcdo\\
            \label{pcln:distributed-adv-view-hclwe} \mathbf{y}\sample \mathbf{y} + \sqrt{2\pi}\sigma\mathcal{H}_{\mathbf{w}, \beta, \gamma}\\
        \pcendfor\\
        \pcreturn \mathbf{y}
    }
    \caption{The view of $N_c$ colluding servers when $N_c < t$.}
    \label{fig:colluding-server-view}
\end{figure}

However, when $N_c\geq t$, the $N_c$ colluding servers can fully recover each $\mathbf{y}_{i_b}$. With additional knowledge of $\mathbf{w}$, the compromised privacy guarantees analyzed in Section~\ref{sec:privacy} still apply, and the distinguishing attacks discussed in Section~\ref{sec:da} remain possible. Therefore, a sufficient number of non-backdoored and non-colluding servers is necessary to ensure effective privacy guarantees.

\subsection{Reduction to Central DP} The mitigation against the backdoor can also be adapted to the central DP case. Specifically, alongside the backdoored server $i_b=1$ executing the GPM, we introduce another server $i=2$ that is \textbf{1)} not backdoored by the SCBP, and \textbf{2)} not colluding with either the SCBP or the BPA. This server is only responsible for adding a copy of (non-compromised) Gaussian noise. Thus, a potentially backdoored server executing a central DP mechanism may rely on an external, semi-honest server to ensure the privacy of its query results.

The two servers execute the mechanism as follows, which is equivalent to the distributed scenario with $N=2$, $N_b=1$, and $N_c=0$.

\begin{itemize}[leftmargin=*]
    \item Server 1 computes $\mathbf{y}_1\sample \M_{\sigma, \mathbf{w}, \beta, \gamma}\left(D_1\right)$ and transmits $\mathbf{y}_1$ to Server 2.
    \item Server 2 computes and releases $\mathbf{y}\sample \mathbf{y}_1 + \mathcal{N}\left(\mathbf{0}, \sigma^2I_d\right)$.
\end{itemize}

Since Server 2 does not collude with the SCBP or the BPA, its view of $D_1$ maintains the same level of privacy specified in Theorems~\ref{thm:gm-dp} and \ref{thm:gpm-covert}. On the other hand, the BPA's view, despite knowing $\mathbf{w}$, remains statistically DP due to the additional noise added by Server 2. However, the average $\ell^2$ error increases by a factor of $\sqrt{2}$ due to the added noise.
    \section{Additional Experimental Results on Varying \texorpdfstring{$\gamma$}{gamma} in Table~\ref{tab:cifar10-da}}

\label{app:additional-exp-gamma}

\begin{table}[!t]
    \centering
    \caption{Distinguishing attack success rates on DP-SGD with CIFAR-10 using larger values of $\gamma \in \left\{20\sqrt{d}, 200\sqrt{d}\right\}$. All instances achieve success rates of $\geq 99\%$.}
    \label{tab:cifar10-da-addl}
    % Replace the line below with your actual table content or input file
    \autofit{\begin{tabular}{lccccccccccccccccc}
\hline
\multirow{2}{*}{Model}       & \multirow{2}{*}{$d$}  & \multirow{2}{*}{Converged}          & \multirow{2}{*}{$\epsilon^*$} & \multirow{2}{*}{$\gamma$}         & \multicolumn{3}{c}{$\beta\left(C=4\right)$} & \multicolumn{3}{c}{$\beta\left(C=8\right)$} & \multirow{2}{*}{$\gamma$}         & \multicolumn{3}{c}{$\beta\left(C=4\right)$} & \multicolumn{3}{c}{$\beta\left(C=8\right)$} \\
                             &                       &                                     &                               &                                   & $10^{-5}$     & $10^{-4}$    & $10^{-3}$    & $10^{-5}$     & $10^{-4}$    & $10^{-3}$    &                                   & $10^{-5}$     & $10^{-4}$    & $10^{-3}$    & $10^{-5}$     & $10^{-4}$    & $10^{-3}$    \\ \hline
\multirow{8}{*}{VGG19}       & \multirow{8}{*}{39M}  & \multirow{4}{*}{\ding{55}}          & 0.125                         & \multirow{8}{*}{$1.2\times 10^5$} & 100           & 100          & 99           & 100           & 100          & 99           & \multirow{8}{*}{$1.2\times 10^6$} & 100           & 100          & 99           & 100           & 100          & 99           \\
                             &                       &                                     & 0.25                          &                                   & 100           & 100          & 100          & 100           & 100          & 99           &                                   & 100           & 100          & 100          & 100           & 100          & 100          \\
                             &                       &                                     & 0.5                           &                                   & 100           & 100          & 99           & 100           & 100          & 100          &                                   & 100           & 100          & 100          & 100           & 100          & 100          \\
                             &                       &                                     & 1                             &                                   & 100           & 100          & 100          & 100           & 100          & 100          &                                   & 100           & 100          & 100          & 100           & 100          & 100          \\
                             &                       & \multirow{4}{*}{\ding{52} (94.0\%)} & 0.125                         &                                   & 100           & 100          & 100          & 100           & 100          & 99           &                                   & 100           & 100          & 100          & 100           & 100          & 100          \\
                             &                       &                                     & 0.25                          &                                   & 100           & 100          & 99           & 100           & 100          & 99           &                                   & 100           & 100          & 100          & 100           & 100          & 100          \\
                             &                       &                                     & 0.5                           &                                   & 100           & 100          & 100          & 100           & 100          & 100          &                                   & 100           & 100          & 100          & 100           & 100          & 100          \\
                             &                       &                                     & 1                             &                                   & 100           & 100          & 100          & 100           & 100          & 100          &                                   & 100           & 100          & 100          & 100           & 100          & 100          \\ \hline
\multirow{8}{*}{ResNet50}    & \multirow{8}{*}{23M}  & \multirow{4}{*}{\ding{55}}          & 0.125                         & \multirow{8}{*}{$9.7\times 10^4$} & 100           & 100          & 99           & 100           & 100          & 100          & \multirow{8}{*}{$9.7\times 10^5$} & 100           & 100          & 100          & 100           & 100          & 100          \\
                             &                       &                                     & 0.25                          &                                   & 100           & 100          & 100          & 100           & 100          & 99           &                                   & 100           & 100          & 100          & 100           & 100          & 100          \\
                             &                       &                                     & 0.5                           &                                   & 100           & 100          & 100          & 100           & 100          & 100          &                                   & 100           & 100          & 99           & 100           & 100          & 100          \\
                             &                       &                                     & 1                             &                                   & 100           & 100          & 100          & 100           & 100          & 100          &                                   & 100           & 100          & 100          & 100           & 100          & 100          \\
                             &                       & \multirow{4}{*}{\ding{52} (93.7\%)} & 0.125                         &                                   & 100           & 100          & 100          & 100           & 100          & 100          &                                   & 100           & 100          & 100          & 100           & 100          & 100          \\
                             &                       &                                     & 0.25                          &                                   & 100           & 100          & 99           & 100           & 100          & 99           &                                   & 100           & 100          & 100          & 100           & 100          & 100          \\
                             &                       &                                     & 0.5                           &                                   & 100           & 100          & 100          & 100           & 100          & 100          &                                   & 100           & 100          & 100          & 100           & 100          & 100          \\
                             &                       &                                     & 1                             &                                   & 100           & 100          & 100          & 100           & 100          & 100          &                                   & 100           & 100          & 100          & 100           & 100          & 100          \\ \hline
\multirow{8}{*}{MobileNetV2} & \multirow{8}{*}{2.2M} & \multirow{4}{*}{\ding{55}}          & 0.125                         & \multirow{8}{*}{$3.0\times 10^4$} & 100           & 100          & 99           & 100           & 100          & 100          & \multirow{8}{*}{$3.0\times 10^5$} & 100           & 100          & 100          & 100           & 100          & 100          \\
                             &                       &                                     & 0.25                          &                                   & 100           & 100          & 100          & 100           & 100          & 100          &                                   & 100           & 100          & 100          & 100           & 100          & 100          \\
                             &                       &                                     & 0.5                           &                                   & 100           & 100          & 100          & 100           & 100          & 99           &                                   & 100           & 100          & 100          & 100           & 100          & 100          \\
                             &                       &                                     & 1                             &                                   & 100           & 100          & 100          & 100           & 100          & 99           &                                   & 100           & 100          & 100          & 100           & 100          & 100          \\
                             &                       & \multirow{4}{*}{\ding{52} (93.9\%)} & 0.125                         &                                   & 100           & 100          & 100          & 100           & 100          & 100          &                                   & 100           & 100          & 100          & 100           & 100          & 100          \\
                             &                       &                                     & 0.25                          &                                   & 100           & 100          & 100          & 100           & 100          & 100          &                                   & 100           & 100          & 100          & 100           & 100          & 100          \\
                             &                       &                                     & 0.5                           &                                   & 100           & 100          & 100          & 100           & 100          & 100          &                                   & 100           & 100          & 100          & 100           & 100          & 100          \\
                             &                       &                                     & 1                             &                                   & 100           & 100          & 100          & 100           & 100          & 100          &                                   & 100           & 100          & 100          & 100           & 100          & 100          \\ \hline
\end{tabular}}
\end{table}

Table~\ref{tab:cifar10-da-addl} details the success rates of the distinguishing attack on DP-SGD with CIFAR-10 using larger values of $\gamma$, specifically $\gamma = 20\sqrt{d}$ and $\gamma = 200\sqrt{d}$. In contrast to the results using the default $\gamma = 2\sqrt{d}$ (Table~\ref{tab:cifar10-da}), these larger values further increase the success rates to at least 99\% across all tested instances.
\fi

\end{document}